% \documentclass[12pt,onecolumn,draftclsnofoot]{IEEEtran}

%\documentclass[conference]{IEEEtran}
%\IEEEoverridecommandlockouts

%\documentclass[12pt,onecolumn,draftclsnofoot]{IEEEtran}

\documentclass[journal,comsoc]{IEEEtran}

\usepackage{color}

\usepackage{amsmath}
\usepackage{multirow}
%% The amssymb package provides various useful mathematical symbols
\usepackage{amssymb}
%% The amsthm package provides extended theorem environments

%\usepackage[subtle]{savetrees}

\usepackage{algorithm,algpseudocode}

 \usepackage{amsthm}
 
\newtheorem{corollary}{Corollary}
\newtheorem{proposition}{Proposition}
\newtheorem{lemma}{Lemma}

\newcommand{\norm}[1]{\lVert#1\rVert_2}

\usepackage[font=small,labelfont=bf]{caption}

\usepackage{subcaption}
 \usepackage{graphicx}

\usepackage{cite}
% *** GRAPHICS RELATED PACKAGES ***
%

\usepackage[table]{xcolor}

\begin{document}
%
% paper title
% Titles are generally capitalized except for words such as a, an, and, as,
% at, but, by, for, in, nor, of, on, or, the, to and up, which are usually
% not capitalized unless they are the first or last word of the title.
% Linebreaks \\ can be used within to get better formatting as desired.
% Do not put math or special symbols in the title.
\title{Hemispherical Antenna
Array Architecture for High-Altitude Platform Stations (HAPS) for
Uniform Capacity Provision}
\author{Omid~Abbasi, \IEEEmembership{Senior Member,~IEEE}, Halim~Yanikomeroglu, \IEEEmembership{Fellow,~IEEE}, Georges~Kaddoum, \IEEEmembership{Senior Member,~IEEE}%\thanks{This work was supported by Huawei Canada Co.,
%Ltd.}
\thanks{O. Abbasi and H. Yanikomeroglu are with Non-Terrestrial Networks (NTN) Lab, Department of Systems and 
Computer Engineering, Carleton University, Ottawa, ON  K1S 5B6, Canada. E-mail: omidabbasi@sce.carleton.ca; halim@sce.carleton.ca. (\textit{Corresponding author: Omid Abbasi.})}
\thanks{G. Kaddoum is with Department of Electrical Engineering, \'Ecole de Technologie Sup\'erieure (ETS), Universit\'e du Qu\'ebec, Montr\'eal, QC  H3C 1K3, Canada. E-mail: Georges.Kaddoum@etsmtl.ca.}
\thanks{A preliminary version of this work was presented at the 2024 IEEE Wireless Communications and Networking Conference (WCNC) \cite{Omid_hemi_wcnc}.}
}

\maketitle

% As a general rule, do not put math, special symbols or citations
% in the abstract or keywords.
\begin{abstract}
In this paper, we present a novel hemispherical antenna array (HAA) designed for high-altitude platform stations (HAPS). A significant limitation of traditional rectangular antenna arrays for HAPS is that their antenna elements are oriented downward, resulting in low gains for distant users. Cylindrical antenna arrays were introduced to mitigate this drawback; however, their antenna elements face the horizon leading to suboptimal gains for users located beneath the HAPS.
 To address these challenges, in this study, we introduce our HAA. 
 An HAA's antenna elements are strategically distributed across the surface of a hemisphere to ensure that each user is directly aligned with specific antenna elements.
To maximize users’ minimum signal-to-interference-plus-noise ratio (SINR), we formulate an optimization problem. After performing analog beamforming, we introduce an antenna selection algorithm 
and show that this method achieves optimality when a substantial number of antenna elements are selected for each user.
Additionally, we employ the bisection method to determine the optimal power allocation for each user. Our simulation results convincingly demonstrate that the proposed HAA outperforms the conventional arrays, and provides uniform rates across the entire coverage area. 
With a $20~\mathrm{MHz}$ communication bandwidth, and a $50~\mathrm{dBm}$ total power, the proposed approach reaches sum rates of $14~\mathrm{Gbps}$.

\end{abstract}

\begin{IEEEkeywords}
Hemispherical, antenna array, uniform capacity, HAPS.
\end{IEEEkeywords}

\section{Introduction}

\subsection{Background}
The high-altitude platform station (HAPS) has emerged as a prominent contender for integration in the forthcoming 6th generation (6G) and beyond, as has been evidenced by recent research \cite{abbasi2023haps_mag, survey_haps,Sustainable_HAPS,dicandia2022space}. To date, HAPS has showcased its versatility across a broad spectrum of applications, including communication~\cite{3GPP_haps}, computing~\cite{Qiqi_Caching, Qiqi_Handoff}, localization~\cite{Hongzhao}, and sensing~\cite{Gunes_sensing}. HAPS \footnote{In this paper, we use HAPS for both singular and plural forms.} are conventionally positioned in the stratosphere at an altitude of approximately $20~\mathrm{km}$, and maintain a quasi-stationary position relative to Earth~\cite{grace2011broadband}. This strategic deployment enables them to offer line-of-sight (LoS) communication with an expansive coverage radius spanning $50-500~\mathrm{km}$. Furthermore, to ensure uninterrupted operation, HAPS installations come equipped with robust computing resources and integrated battery systems ~\cite{survey_haps}. Recent research (see, e.g., \cite{Qiqi_Caching, Qiqi_Handoff}) has explored integrating HAPS-based computing as a promising extension of edge computing. Meanwhile, the authors of \cite{Gunes_sensing} envisioned HAPS as an instrumental technology for communication, computing, caching, and sensing in next-generation aerial delivery networks.

HAPS used as international mobile telecommunication (IMT) base stations (BSs) are referred to as HIBS \cite{IMTBS}. In recent years, the use of HAPS for communication has garnered significant attention in both academic \cite{abbasi2022uxnb, Sahabul_haps, Alouini,Dahrouj} and industrial \cite{3GPP_haps, 3GPP_haps_RF, hoshino2021service, noerpel2016multibeam} sectors.
In \cite{Sahabul_haps}, for instance, the authors envisaged HAPS as super macro BSs to provide connectivity across a wide range of applications. Unlike conventional HAPS systems that are primarily used to provide extensive coverage for remote areas and disaster recovery, the authors of \cite{Sahabul_haps} envisioned deploying their proposed HAPS in densely populated metropolitan areas. 
Additionally, in \cite{abbasi2022uxnb}, an HAPS was proposed to support the backhauling of aerial BSs. The HAPS's capabilities as an HIBS are explored in more details in \cite{3GPP_haps} and \cite{3GPP_haps_RF}, where specifications and performance characteristics are thoroughly investigated.

\subsection{State of the Art}

In the literature on HAPS, two primary types of antennas to establish ground cells have been proposed. The first and more conventional category is aperture-type antennas, which are positioned beneath a HAPS to generate beams, as discussed in \cite{Thornton_2003}. Each antenna is responsible for creating one beam and, consequently, one cell. To establish cells in different locations, each of multiple antennas needs to be mechanically steered toward their respective designated spot.
The second category is array-type antennas, which are arranged in a geometric grid pattern to form multiple beams (see \cite{El_Jabu2001}). In this latter category, beamforming allows beams to be steered towards various locations. Conducting beamforming on the HAPS level enables generating multiple beams in different directions. When a substantial number of antenna elements is employed, these beams can achieve a high array gain. To effectively manage intercell
interference, both types of antennas can operate with various frequency reuse factors.

Elements in antenna arrays can be arranged in various architectures. In \cite{El_Jabu2001}, the authors proposed planar antenna arrays for conventional HAPS systems. These planar arrays can be square or rectangular. For instance, in \cite{El_Jabu2001}, the authors explored cellular communication using a HAPS equipped with a rectangular antenna array (RAA). The results of the aforementioned study demonstrated feasibility of constructing cells with a radius as small as $100~\mathrm{meters}$ using square arrays measuring less than $12~\mathrm{meters} \times 12~\mathrm{meters}$ and a frequency band of $2~\mathrm{GHz}$. In \cite{Capacity_HAPS_2016}, the authors presented an analysis of the capacity of both sparse users and hotspot users in a massive multiple-input multiple-output (MIMO) system implemented with an RAA and a HAPS. Furthermore, \cite{Softbank_giga} showcased that a HAPS equipped with a planar array can achieve data rates of up to $1~\mathrm{Gbps}$.
In another relevant study \cite{Opport_BF2016}, a user-clustered opportunistic beamforming approach was proposed for a HAPS equipped with a uniform linear array. Likewise, the authors of \cite{Cylindrical_VTC} and \cite{hoshino2021service} incorporated a HAPS with cylindrical antenna array (CAA) in a massive MIMO system to achieve ultra-wide coverage and high capacity. The aforementioned work demonstrated that the proposed system can achieve more than twice the capacity of conventional massive MIMO systems with planar arrays.
In \cite{Hsieh+Gosh}, the authors proposed a hexagonal antenna array featuring a downward-facing panel serving the center cell and six outward-facing panels serving six outer cells. The study mentioned above revealed that the proposed system can attain a downlink sector throughput of $26~\mathrm{Mbps}$ with a radius of $100~\mathrm{km}$ and a bandwidth of $20~\mathrm{MHz}$.

%The first-layer precoding aims
% to decompose the original massive MIMO system into several low-dimension MIMO systems, with each
% operating on the orthogonal subspace.
% Through proper UE clustering, we show that the first-layer precoding
% matrix can be approximated using a constant-envelope matrix, which results in significant reduction on
% hardware complexity of AP. The second-layer precoding is designed to eliminate the multi-UE interference
% within each low-dimension MIMO system. 

Considering that the number of users to be served frequently by far exceeds the number of beams created, user scheduling must be employed in various domains, including time, frequency, code, and power. Additionally, high channel correlation among users may hinder users' spatial separation and necessitate user scheduling. In \cite{Opport_BF2016}, the authors proposed using a beamformer partitioned into two parts: a dominant part directed towards the user cluster, on the one hand, and a random part to enhance user fairness, on the other hand. Slowly varying expected channel correlation was chosen as the metric for user clustering. It is interesting to note that in this work, only one user from each cluster was assumed to be served by a given beam in a resource block. The other users in that beam are served in other resource blocks, thereby leaving to only inter-beam interference. The authors of \cite{Two-Stage_Precoding_Rician} proposed a two-stage precoding design for the Ricean channel in the HAPS massive MIMO systems. 
% The outer precoding design is based on the lower bound of the signal-to-leakage-plus-noise-ratio (SLNR) and then the SLNR-based design problem is reduced to a trace quotient problem (TQP) in Rician channel. An iterative algorithm is proposed to achieve the optimal results for the TQP. Numerical results illustrate that the proposed method has better sum rate performance for the HAP Massive MIMO systems and lower computational complexity compared to the existing outer precoding design for the Rician channel.

In contrast, in \cite{Location-Assisted-Precoding2017} and \cite{User_groupinh_HAPS_letter}, all users in a given cluster were served in the same resource blocks which resulted in both inter-beam and intra-beam interference. In \cite{Location-Assisted-Precoding2017}, a 3-D massive MIMO system with a HAPS equipped with a 2-D RAA  was investigated for air-to-ground transmission. Using leveraged slow-varying parameters, such as channel correlation and the angles of departure of users (UEs), the authors proposed a two-layer location-assisted precoding scheme for downlink transmission, followed by UE clustering. The results demonstrated that location-assisted precoding significantly outperforms matched filter precoding.
The authors of \cite{User_groupinh_HAPS_letter} explored user grouping and beamforming for HAPS-equipped massive MIMO systems. They introduced user grouping based on the average chordal distance and an outer beamformer that took into account the user's statistical eigenmode. The authors found that their proposed scheme outperformed conventional channel correlation matrix-based schemes.

Another relevant investigation \cite{Beamspace_NOMA_Pingping_COMML_2021} introduced a beamspace HAPS-NOMA scheme with a RAA was proposed that enabled the non-orthogonal multiple access (NOMA) scheme to simultaneously serve multiple users in each beam. The authors obtained a zero-forcing digital precoder based on an equivalent channel.
Furthermore, in \cite{Cylindrical_VTC}, a cylindrical massive MIMO system was introduced to enhance the capacity and coverage of a HAPS. Each of the beams was allocated to only one user.
The authors of \cite{Softbank_Movements} and \cite{Softbank_Prototype} developed a beamforming method that considered the movements of the solar panel in HAPS systems. They created cells using a CAA with the users in each beam served in different resource blocks. Finally, in \cite{Softbank_Cell_config}, the cell configuration in HAPS systems was optimized considering a frequency reuse factor of one.

% coexistance

\subsection{Motivation and Contributions}

RAAs are ideal for users located directly beneath a HAPS, as all their antenna elements are oriented downward, so maximum gain is achieved in this direction. However, for users situated at a considerable distance from the HAPS, each antenna element's gain becomes negligible, thereby rendering RAAs suboptimal. 
To address this challenge, the authors of \cite{Cylindrical_VTC} proposed a CAA. In their innovative design, some antenna elements were placed on the facade of a cylindrical structure, while others were located on a circular surface beneath the cylinder. This arrangement enabled maximizing the utility of antenna elements for users located at greater distances. Nevertheless, it's worth noting that their proposed architecture was not optimal, as many users still did not have a direct view of antenna elements to obtain maximum gains.

In this paper, aiming to mitigate the limitations of RAA and CAA systems, we introduce a hemispherical antenna array (HAA). In our proposed scheme, each user is directly aligned with specific antenna elements to ensure they receive the maximum gain possible from these elements. Simulation results demonstrate that our proposed HAA system outperforms RAA and CAA systems. Furthermore, in contrast to the baseline schemes, the proposed scheme achieves uniform data rates across the entire coverage area.

The main contributions of this paper can be summarized as follows:
\begin{itemize}
     \item We propose an HAA scheme where antenna elements are strategically placed over the surface of a hemisphere. A subset of these elements is selected to create focused beams on the ground. The number of elements selected determines the size of the resulting beam.

    % \item Our HAA system introduces a uniform gain for users ...

    \item In the proposed HAA system, we derive the achievable rate for each user when the number of elements selected is relatively high.

    \item To optimize the system's performance, we formulate an optimization problem that aims to maximize users' minimum SINR while adhering to constraints regarding the total power available at the HAPS and the number of elements selected for each user.

    \item Our system employs analog beamforming techniques that capitalize on steering vectors of antenna elements selected for each user.

    \item We introduce an antenna selection algorithm that takes into account the gains of the selected antenna elements and demonstrate that this method is optimal when a substantial number of antenna elements is selected.

    \item We utilize the bisection method to determine the optimal power allocation for each user and thus further enhance the system's efficiency and performance.

% \ item our scheme controls the interference from other beams!
    \end{itemize}

% Note that in this paper, we use the terms \textit{cell} and \textit{beam} interchangeably. More precisely, a cell represents the footprint of a beam created by the proposed antenna array. It is important to emphasize that multiple users can be connected to each beam by scheduling them in orthogonal resource blocks. In the special case where only one user is served by each beam, the multi-cell functionality of the array simplifies to multi-user massive MIMO.

\subsection{Outline and Notations}
The remainder of this paper is organized as follows. Section II presents the system model and the channel model. Section III presents the proposed transmitter scheme at the HAPS. Section IV formulates the optimization problem, provides its solution, and derives the achievable rate. Section V provides simulation
results to validate the proposed scheme's performance. Finally, Section VI concludes the paper.

\textit{Notations:} 
In this paper, scalar variables are denoted by regular lowercase letters, e.g., $x$.
Matrices and vectors are represented by boldface uppercase and lowercase letters, respectively (e.g., $\mathbf{X}$ and $\mathbf{x}$).
The element at the $(i,j)$-th position of matrix $\mathbf{X}$ is denoted as $[\mathbf{X}]_{i,j}$.
Transpose, conjugate, and Hermitian operators on a matrix $\mathbf{X}$ are denoted as $\mathbf{X}^T$, $\mathbf{X}^*$, and $\mathbf{X}^H$, respectively.
The Hadamard product is represented by $\odot$.
The diagonal matrix of a vector $\mathbf{x}$ is denoted as $\mathrm{diag}(\mathbf{x})$.
The trace of a matrix $\mathbf{X}$ is given by $\mathrm{Tr}(\mathbf{X})$.
The absolute value of a scalar $x$ is denoted as $|x|$.
The Frobenius norm of a vector $\mathbf{x}$ is represented by $\norm{\mathbf{x}}$.

\section{System Model and Channel Model}
\subsection{System Model}
\begin{figure}[!t]
  \centering
\includegraphics[width=\linewidth]{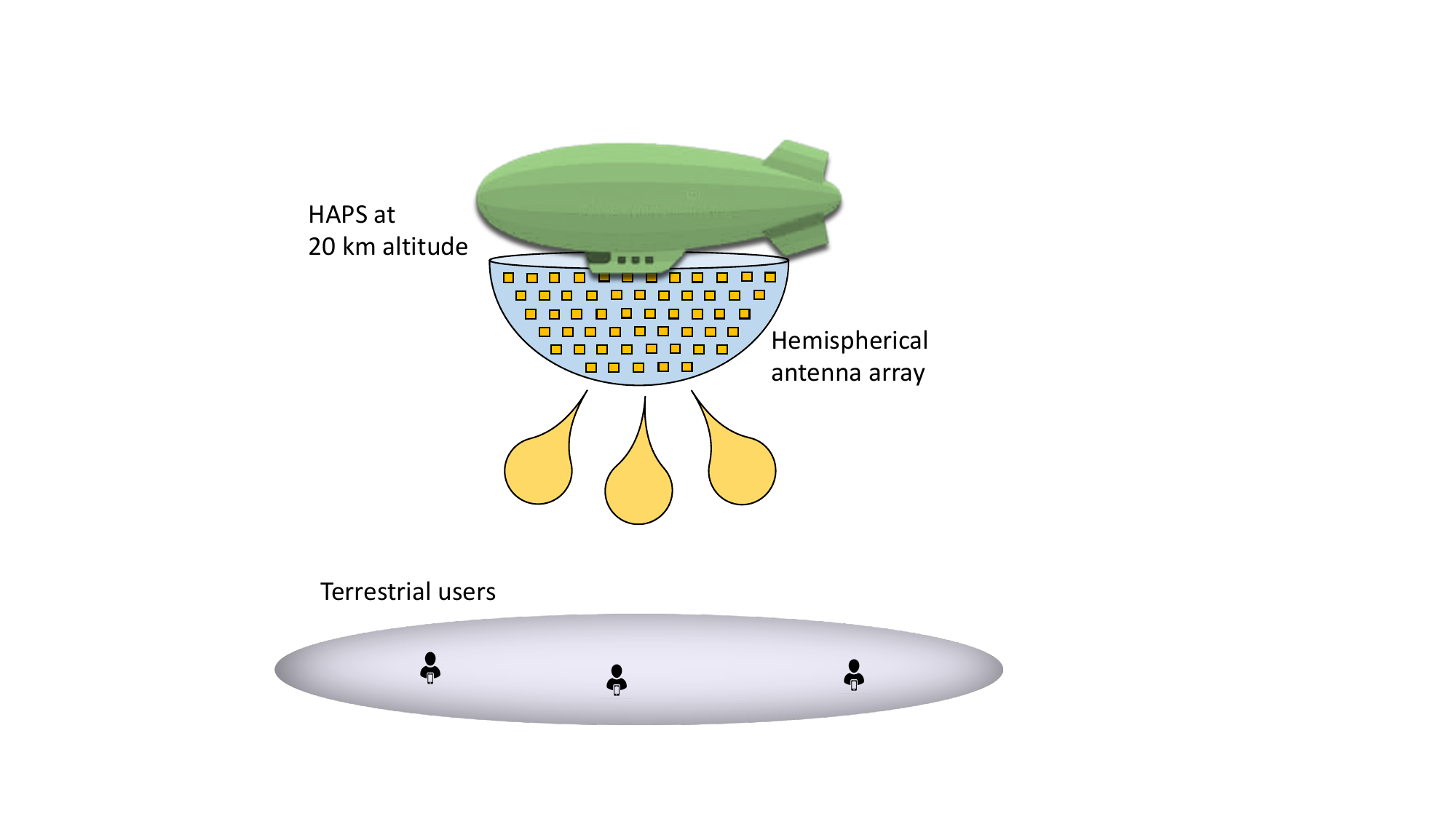}
  \caption{System model for the proposed HAA.}\label{system-model}
  \end{figure}

 The proposed HAA is illustrated in Fig. \ref{system-model}. We consider a scenario in which a HAPS serves $K$ terrestrial users in the downlink mode, in the sub-6 GHz frequency band. The HAPS is equipped with $M$ antenna elements placed on a hemispherical surface. Subsets of these elements are selected to create focused beams on the ground. The number of elements selected determines the size of the resulting beam. Note that, as depicted in Fig. \ref{system-model}, the HAA is positioned beneath the HAPS. This array, being affixed to the underside of the HAPS, constitutes a distinct equipment from the HAPS platform. In this setup, we assume that the terrestrial users are each equipped with a single antenna. In our proposed scheme, each user is directly aligned with specific antenna elements to ensure that they receive the maximum possible gains from these elements. To this end, we assume a large number of elements is positioned on the hemispherical surface so that each user can be aligned with a high number of antenna elements facing towards them.  Moreover, analog beamforming is used to ensure that the signals transmitted from the HAPS are aligned with intended users. This beamforming process relies on steering vectors associated with the HAA at the HAPS. 

In this paper, the terms \textit{cell} and \textit{beam} are used interchangeably. More specifically, a cell represents the footprint of a beam created by the proposed antenna array. Importantly, multiple users can be connected to each beam by being scheduled in orthogonal resource blocks. In the special case where only one user is served by each beam, the array's multi-cell functionality can be simplified to multi-user massive MIMO.

\subsection{Channel Model}
In our scheme, the channel between user $k$ and antenna element $m$ of the HAPS is denoted as $h_{km}$, which includes both large-scale fading %(i.e., path loss and shadowing) 
and small-scale multipath fading effects. We assume the existence of a LoS link between the users and the HAPS; hence Ricean distribution is considered for the channel between user $k$ and antenna element $m$ of HAPS as follows:
\begin{equation}
    h_{km}=10^{-\dfrac{\mathsf{PL_{k}}}{20}}(\sqrt{P_{k}^{\mathsf{LoS}}}b_{km}+\sqrt{P_{k}^{\mathsf{NLoS}}}\rm{CN}(0,1)),
\end{equation}
   % g_{km}=\sqrt{\frac{K(\theta_{km})}{K(\theta_{km})+1}}e^{j\phi}+\sqrt{\frac{1}{K(\theta_{km})+1}}CN(0,1),
where $\rm{CN}(0,1)$ shows a complex normal random variable with a mean of $0$ and variance (power) equal to $1$. Also, 
\begin{equation}
\begin{split}
      b_{km}=\exp(j2\pi(\frac{d_{km}}{\lambda_\mathsf{sub6}}))
\end{split}
\end{equation} 
indicates the phase shift of the LoS link's signal due to distance, with $d_{km}$ indicating the distance between user $k$ and antenna element $m$ of the HAPS (see Section IV for its derivation), $\lambda_\mathsf{sub6}=\frac{c}{f_\mathsf{sub6}}$ representing the wavelength, and $c=3\times10^8 ~\mathrm{m/s}$ being the speed of light. 
Each user $k$'s coordinates are denoted by $(x_{k}, y_{k},0)$. 
%Hence the distance between user $k$ and HAPS equals $d_{km}=\sqrt{(x_{u,k}-x_{d,m})^2+(y_{u,k}-y_{d,m})^2+(h_{d,m})^2}$.
$\theta_{k}$ and $\phi_{k}$ represent the elevation and azimuth angles of departure, respectively, of the transmitted signal directed towards user $k$.
It should be noted that $\bold{b}_{k}=[b_{km}]_{1\times M}$ is the steering vector of the HAPS's transmit antenna array for user $k$.  
In addition, we have $\mathsf{PL_{k}}=P_{k}^{\mathsf{LoS}}\mathsf{PL}_{k}^{\mathsf{LoS}}+P_{k}^{\mathsf{NLoS}}\mathsf{PL}_{k}^{\mathsf{NLoS}}$, where the LoS and non-LoS (NLoS) path losses of the link between the HAPS and user $k$ are equal to $\mathsf{PL}_{k}^{\mathsf{LoS}}=\mathsf{FSPL}_{k}+\eta_{\mathsf{LoS}}^{\mathsf{dB}}$, and $\mathsf{PL}_{k}^{\mathsf{NLoS}}=\mathsf{FSPL}_{k}+\eta_{\mathsf{NLoS}}^{\mathsf{dB}}$, respectively \cite{Hourani}. In these equations, $\mathsf{FSPL}_{k}=10\log(\frac{4\pi f_\mathsf{sub6}d_{k}}{C})^2$ represents the free-space path loss (FSPL), while $\eta_{\mathsf{LoS}}^{\mathsf{dB}}$ and $\eta_{\mathsf{NLoS}}^{\mathsf{dB}}$ indicate excessive path losses (in $\mathrm{dB}$)  affecting the air-to-ground links in the LoS and NLoS cases, respectively \cite{Irem2016}. Next, $P_{k}^{\mathsf{LoS}}=\frac{1}{1+A\exp({-B(90-\theta_{k})-A})}$ denotes the probability of establishing a LoS link between user $k$ and the HAPS, with $\theta_{k}$ (in $\mathrm{degrees}$) referring to the elevation angle between user $k$ and HAPS, and $A$ and $B$ being environment-dependent parameters \cite{Hourani}. Clearly, $P_{k}^{\mathsf{NLoS}}=1- P_{k}^{\mathsf{LoS}}$ indicates the probability of establishing a NLoS link between user $k$ and the HAPS.

%$K(\theta_{km})$ indicates the Ricean factor, $\phi$ is a uniform random variableas $U(0,2\pi)$, and $CN(0,1)$ shows a complex normal random variable with the mean value of 0 and the variance (power) equal to 1.
The large-scale channel power gain of the link between user $k$ and the HAPS is equal to
\begin{equation}
\begin{split}
    \beta_{k}^2=E\{|h_{km}|^2\}&=E\{h_{km}h_{km}^*\}=10^{-\frac{\mathsf{PL_{k}}}{10}}\\&=10^{-\frac{P_{k}^{\mathsf{LoS}}\mathsf{PL}_{k}^{\mathsf{LoS}}+P_{k}^{\mathsf{NLoS}}\mathsf{PL}_{k}^{\mathsf{NLoS}}}{10}}.
\end{split}
\end{equation}
By considering $\beta_{0}=(\frac{4\pi f_\mathsf{sub6}}{c})^{-2}$ to be the channel gain at the reference distance $d_{0}=1~\mathrm{m}$, the large-scale channel power gain can be rewritten as
  $\beta_{k}^2=\eta_{k}\beta_{0}(d_{k})^{-2}$,
where $\eta_{k}=10^{-\frac{P_{k}^{\mathsf{LoS}}\eta_{\mathsf{LoS}}^{\mathsf{dB}}+P_{k}^{\mathsf{NLoS}}\eta_{\mathsf{NLoS}}^{\mathsf{dB}}}{10}}$ represents the excessive path loss.
We consider independent additive white Gaussian noise (AWGN) with the distribution $\rm{CN}(0,\sigma^{2})$ at each user's receive antenna element. 
%It should be noted that the variance of the thermal noise equals $\sigma_{m}^{2}=KT_{m}\mathsf{BW}$ in which $K=1.38\times10^{-23} ~\mathrm{J.K^{-1}}$ indicates the Boltzmann constant, $T_{m}$ (in Kelvin) shows the temperature of the receiver (UxNB) $m$, and $\mathsf{BW}$ shows the bandwidth.

 %in which the variance of the thermal noise equals $\sigma_{H}^{2}=KT_{H}\mathsf{BW}$ and $T_{H}$ (in Kelvin) shows the temperature of the HAPS receiver.

\section{Transmitter Scheme Proposed for the  HAPS}
\begin{figure}[!t]
% \hspace{-1cm}
  \centering  \includegraphics[width=\linewidth]{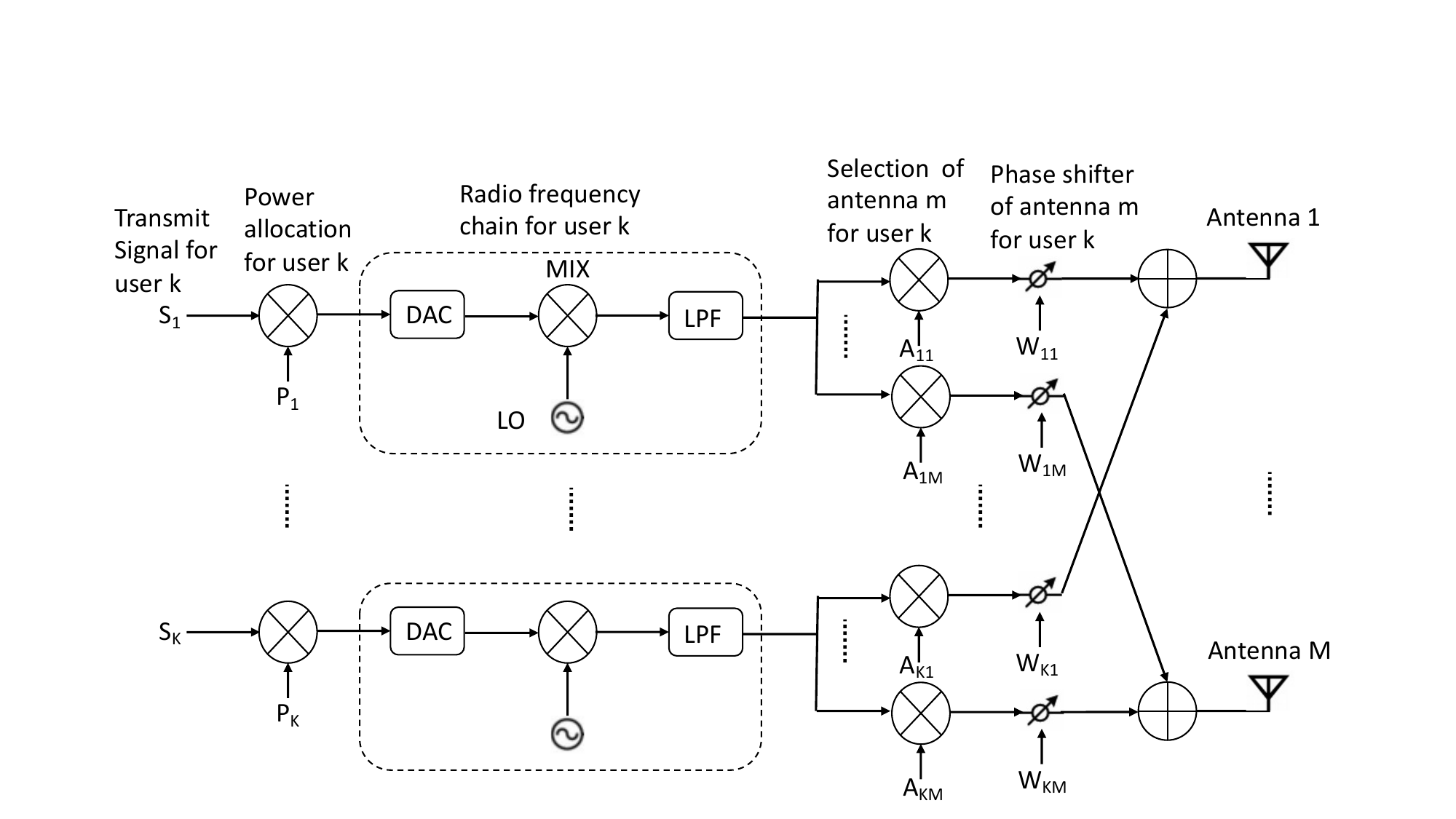}
  \caption{The proposed transmitter structure at the HAPS.}\label{transmitter}
\end{figure}
We can see the transmitter scheme proposed for the HAPS in Fig. \ref{transmitter}. We assume that the HAPS transmits all users' signals in the same time-frequency resources.  User $k$'s transmit signal is denoted as $\sqrt{p_k}s_k$, where $p_k$ (for $k\in \mathcal{K}=\{1,2,...,K\}$) represents the transmit power allocated to each user $k$ at the HAPS while $s_k$ ($E{|s_{k}|^{2}}=1$) is the symbol transmitted for user $k$. Next, users' digital signals are converted to analog signals and then up-converted to the carrier frequency band utilizing radio frequency (RF) chains. Importantly, we assume that the number of RF chains is equal to the number of users ($K$). Once all users' signals have passed through the RF chains, we determine which antenna elements have been selected to transmit each user's signal. Then, we employ analog beamforming to direct signals towards users. The signal received at user $k$ is expressed as 
\begin{equation} \label{y_k}
\begin{split}
    y_{k}&=\mathbf{h}_k\mathbf{G}_k(\mathbf{W}\odot \mathbf{A})\mathbf{P}\mathbf{s}+z_{k}\\&=\sum_{k'=1}^{K}\sum_{m=1}^{M}h_{km}g_{km}w_{k'm}a_{k'm}\sqrt{p_{k'}}s_{k'}+z_{k},
\end{split}
\end{equation}
where $\mathbf{s}=[s_1,...,s_K]^T \in \mathbb{C}^K$ and $\mathbf{P}=\mathrm{diag}(\sqrt{p_1},...,\sqrt{p_K}) \in \mathbb{R}^{K\times K}$. $\mathbf{A}=[\mathbf{a}_1,...,\mathbf{a}_K]^T \in \mathbb{B}^{K\times M}$ indicates the antenna selection matrix, where $[\mathbf{A}]_{km}=1$ if antenna $m$ is selected to transmit user $k$'s signal, and $[\mathbf{A}]_{km}=0$  otherwise. In this case, $k\in \mathcal{K}=\{1,2,...,K\}$, $m\in \mathcal{M}=\{1,2,...,M\}$, and $\mathbf{a}_{k} \in \mathbb{C}^{M\times 1}$ is the antenna selection vector for user $k$. Furthermore,  $\mathbf{W}=[\mathbf{w}_1,...,\mathbf{w}_K]^T \in \mathbb{C}^{K \times M}$ represents the analog beamforming matrix at the HAPS, with $[\mathbf{W}]_{km}$ indicating the phase shifter (PS) of the $m$-th antenna element for user $k$ and $\mathbf{w}_{k} \in \mathbb{C}^{M\times 1}$ being the beamforming vector for user $k$. 

It is worth noting that in the proposed HAA scheme, we introduce antenna and power pooling among all users within the coverage region. This implies that each user has access to various options for antenna element selection, which enhances diversity. In addition, the proposed HAA scheme offers great flexibility for power allocation among users. This differs from terrestrial networks with a smaller pool for antenna element selection and power allocation.

The transmit vector from the HAPS is equal to $\mathbf{t}=(\mathbf{W}\odot \mathbf{A})\mathbf{P}\mathbf{s} \in \mathbb{C}^{M\times 1}$, where $\mathbf{s}=[s_1,...,s_K]^T \in \mathbb{C}^K$ is the data vector with $\mathrm{E}\{\mathbf{s}\mathbf{s}^H\}=\mathbf{I}_K$. The transmit vector must satisfy the total power constraint at the HAPS, which is given by $\mathrm{E}\{\mathbf{t}^H\mathbf{t}\}=\mathrm{E}\{\norm{\mathbf{t}}^2\}\leq P_{\mathsf{HAPS}}$, where $P_\mathsf{HAPS}$ is the total power at the HAPS. Moreover, $\mathbf{G}_k=\mathrm{diag}(\sqrt{g_{k1}},...,\sqrt{g_{kM}}) \in \mathbb{R}^{M\times M}$ represents the antenna gain matrix at the HAPS for user $k$, with $g_{km}$ indicating the $m$-th antenna element's gain at user $k$. Furthermore, $\mathbf{h}_k=[h_{k1},...,h_{kM}] \in \mathbb{C}^{1\times M}$ represents the channel vector to user $k$ while $z_k \sim \mathrm{CN}(0,\sigma^2)$ is 
% circularly symmetric 
complex AWGN noise with variance $\sigma^2$ at user $k$'s receiver.

\section{Optimization problem}
On formulating an optimization problem that maximizes the users' minimum SINR, we find the optimum powers allocated to users at the HAPS, i.e., $\mathbf{P}=\mathrm{diag}(\sqrt{p_1},...,\sqrt{p_K}) \in \mathbb{R}^{K\times K}$, the analog beamforming matrix at the HAPS, i.e.,$\mathbf{W}=[\mathbf{w}_1,...,\mathbf{w}_K]^T \in \mathbb{C}^{K \times M}$,
 and the antenna selection matrix at the HAPS, i.e., $\mathbf{A}=[\mathbf{a}_1,...,\mathbf{a}_K]^T \in \mathbb{B}^{K\times M}$. Note that matrix $\mathbf{A}_k$, that is the diagonal matrix version of vector $\mathbf{a}_{k}$, denoted as $\mathbf{A}_k=\mathrm{diag}(\mathbf{a}_{k}) \in \mathbb{B}^{M\times M}$.
We can write the optimization problem as follows:
\begin{alignat}{2}
\mathcal{(P):~~~~    }&\underset{\bold{P},\bold{A},\bold{W}}{\mathrm{max}}~~~        && \underset{k}{\mathrm{min}}  ~~\mathsf{SINR_k}\label{eq:obj_fun}\\
&\text{s.t.} &      &  \mathrm{E}\{\norm{\mathbf{t}}^2\}=
% \mathrm{E}\{(\sum_{k=1}^{K}\mathbf{A}_{k}\mathbf{w}_{k}\sqrt{p_{k}}s_{k})^H(\sum_{k'=1}^{K}\mathbf{A}_{k'}\mathbf{w}_{k'}\sqrt{p_{k'}}s_{k'})\}\nonumber\\&&&~~~~~~~~~~=
\mathrm{E}\{\sum_{k=1}^{K}\mathbf{w}_{k}^H\mathbf{A}_{k}\mathbf{w}_{k}p_{k}\} \leq P_{\mathsf{HAPS}}, \label{eq:constraint-sum}\\
               &&& p_{k}>0, ~~\forall k \in \mathcal{K}, \label{eq:constraint+}\\&&& \mathrm{Tr}(\mathbf{A}_k)\leq M_{k},~~\forall k \in \mathcal{K}, \label{eq:constraint-sel}\\&&&\sum_{k=1}^K[\mathbf{A}]_{km}\leq M_{\mathsf{element}},~~\forall m \in \mathcal{M}, \label{eq:constraint-element}\\&&&
               [\mathbf{A}]_{km}\in \{0,1\}, ~~\forall k \in \mathcal{K}, ~~\forall M \in \mathcal{M}, \label{eq:constraint01}\\&&&
               |[\mathbf{W}]_{km}|=\frac{1}{\sqrt{M_k}}, ~~\forall k \in \mathcal{K}, ~~\forall M \in \mathcal{M}, \label{eq:constraintCM}
\end{alignat}
 where constraint in (\ref{eq:constraint-sum}) shows the maximum total power constraint at the HAPS, constraint in(\ref{eq:constraint-sel}) limits the total number of antennas that can be selected for each user, constraint in (\ref{eq:constraint-element}) limits the total number of users that can be selected for each antenna element, and constraint in (\ref{eq:constraintCM}) refers to the constant modulus condition that applies because PSs are used. We derive an expression for each user's SINR in Proposition \ref{propos_sinr}.
 
 The problem ($\mathcal{P}$) is a mixed integer programming (MIP) problem with a discrete variable $\mathbf{A}$ and continuous variables $\bold{P}$ and $\bold{W}$, which are known to be NP-hard. Also, the objective function of ($\mathcal{P}$) is not concave with respect to the variables. Moreover, constraint in \eqref{eq:constraint-sum} is not convex due to its coupled variables $\mathbf{A}$ and $\bold{W}$, and the constant modulus constraint in \eqref{eq:constraintCM} is also non-convex. We solve this problem by splitting it into three sub-problems. In the first sub-problem, we employ users' steering vectors at the HAPS to derive analog beamforming matrix $\bold{W}$. Then, we utilize $\bold{W}$ and the use-and-then-forget bound \cite{marzetta2016fundamentals} to derive the achievable rates of users for the proposed HAA. In the second sub-problem, we propose a heuristic method to find antenna selection matrix $\bold{A}$. Finally, in the third sub-problem, we use the bisection method to calculate optimal values for power allocation matrix $\bold{P}$.
 
 \subsection{Steering Vector-Based Analog Beamforming}
 In this paper, we assume that the analog beamforming matrix at the HAPS is derived from steering vectors of the antennas selected for each user, i.e., $\mathbf{w}_{k}=\frac{1}{\sqrt{M_k}}\mathbf{b}_k^*$, where $M_k$ indicates the number of antenna elements selected for user $k$. To direct the signals transmitted from the HAPS toward users, this beamforming is performed with the aid of PSs. In proposition 1, we utilize the beamforming matrix to derive an achievable rate for users in the proposed system.

 \begin{proposition}\label{propos_sinr}
The achievable rate of user $k$ in the proposed HAA scheme for HAPS is given by
\begin{equation}
   R_k=\mathsf{BW} \log_2(1+\mathsf{SINR}_k), 
\end{equation}
where
% \small
\begin{equation}    \label{SINR}
     \mathsf{SINR}_k=\frac{\frac{p_k\beta_k^2}{M_k}\mathrm{Tr}(\mathbf{G}_k\mathbf{A}_k)^2}{\beta_k^2\sum_{k'=1}^K \frac{p_{k'}}{M_{k'}}\mathrm{Tr}(\mathbf{G}_k^2\mathbf{A}_{k'})+\sigma^2},
     % =\frac{\beta_k^2P_{k}(\sum_{m=1}^M \sqrt{G_{km}}A_{km})^2}{\beta_k^2\sum_{k'=1}^KP_{k'}\sum_{m=1}^MG_{km}A_{k'm}+M_{sel}\sigma^2},
\end{equation}
% \normalsize
where $\beta_k^2$ represents large-scale fading, $p_k$ indicates the power allocated to user $k$, $\mathbf{G}_k \in \mathbb{R}^{M\times M}$ represents user $k$'s diagonal antenna gain matrix at the HAPS, $\mathbf{A}_k \in \mathbb{R}^{M\times M}$ represents user $k$'s diagonal antenna selection matrix, $M_{k}$ indicates the total number of antennas selected for user $k$, and $\mathsf{BW}$ is the communication bandwidth. 
\end{proposition}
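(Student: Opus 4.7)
My plan is to apply the use-and-then-forget bound from the massive MIMO framework cited in the excerpt to the received signal in Eq.~(4), after substituting the steering-vector beamformer $\mathbf{w}_{k'}=\mathbf{b}_{k'}^{*}/\sqrt{M_{k'}}$. I would decompose $y_k$ as
\begin{equation}
y_k=\sqrt{p_k}\,\mathrm{E}\!\left\{\mathbf{h}_k\mathbf{G}_k(\mathbf{w}_k\odot\mathbf{a}_k)\right\}s_k+\tilde z_k,
\end{equation}
where $\tilde z_k$ aggregates the zero-mean beamforming-uncertainty term of user $k$, the inter-user interference from all $k'\neq k$, and the AWGN term $z_k$. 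These three contributions are mutually uncorrelated, so their variances add, and the worst-case Gaussian-noise argument then lower-bounds the mutual information by $\mathsf{BW}\log_{2}(1+\mathsf{SINR}_k)$, where the SINR is the power of the effective signal term divided by the variance of $\tilde z_k$.

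For the numerator I would compute the deterministic effective channel in closed form. Plugging $h_{km}=10^{-\mathsf{PL}_k/20}(\sqrt{P_k^{\mathsf{LoS}}}b_{km}+\sqrt{P_k^{\mathsf{NLoS}}}n_{km})$ and $w_{km}=b_{km}^{*}/\sqrt{M_k}$ in, the NLoS contribution averages to zero while the LoS contribution gives $b_{km}b_{km}^{*}=1$ for every selected antenna, so the inner sum collapses to $\beta_k\,\mathrm{Tr}(\mathbf{G}_k\mathbf{A}_k)/\sqrt{M_k}$; squaring and multiplying by $p_k$ yields the numerator of Eq.~(9). For the denominator, each interference contribution $\mathbf{h}_k\mathbf{G}_k(\mathbf{w}_{k'}\odot\mathbf{a}_{k'})$ with $k'\neq k$ is a sum of the terms $\sqrt{g_{km}}\,a_{k'm}\,b_{km}b_{k'm}^{*}$ whose second moment reduces to $\beta_k^{2}\,\mathrm{Tr}(\mathbf{G}_k^{2}\mathbf{A}_{k'})/M_{k'}$, because the unit-modulus cross-phases produce zero cross-correlation in expectation while each diagonal contribution is $|b_{km}|^{2}|b_{k'm}|^{2}=1$. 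The residual uncertainty for the user-$k$ beam reduces analogously, so that summing over all $k'$ (including $k'=k$) and adding $\sigma^{2}$ recovers the denominator of Eq.~(9).

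The main technical obstacle lies in the ``substantial number of antennas'' regime invoked in the statement: the coherent combining of the desired signal scales as $\mathrm{Tr}(\mathbf{G}_k\mathbf{A}_k)^{2}=O(M_k^{2})$, whereas the beamforming-uncertainty and interference powers only scale as $\mathrm{Tr}(\mathbf{G}_k^{2}\mathbf{A}_{k'})=O(M_{k'})$. I would need to argue that, when $M_k$ is large, the coherent part dominates so that the Ricean cross-terms and NLoS second-order corrections are absorbed into the clean $\beta_k^{2}\mathrm{Tr}(\mathbf{G}_k\mathbf{A}_k)^{2}/M_k$ scaling, which is precisely the content of the use-and-then-forget reasoning. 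A secondary obstacle is showing that the interference cross-phases $b_{km}b_{k'm}^{*}$ (for $k'\neq k$) indeed average out over the selected subarray; I would justify this by noting that users in generic positions on the ground plane produce geometric path differences $d_{km}-d_{k'm}$ that sweep the full $[0,2\pi)$ range across a large hemisphere-supported selection $\mathbf{A}_{k'}$.
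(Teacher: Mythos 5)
Your proposal follows essentially the same route as the paper's proof: apply the use-and-then-forget bound to the received signal with the matched-filter beamformer $\mathbf{w}_k=\mathbf{b}_k^{*}/\sqrt{M_k}$, compute the mean of the desired-signal term to get $\sqrt{p_k\beta_k^2/M_k}\,\mathrm{Tr}(\mathbf{G}_k\mathbf{A}_k)$, and compute the second moment of the aggregate interference (summed over all $k'$, including $k'=k$) to get $\beta_k^2\sum_{k'}\frac{p_{k'}}{M_{k'}}\mathrm{Tr}(\mathbf{G}_k^2\mathbf{A}_{k'})$. If anything, you are more explicit than the paper about the step it leaves implicit, namely that the LoS cross-phases $b_{km}b_{k'm}^{*}$ must decorrelate over the selected subarray for the interference second moment to collapse to the trace form.
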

\begin{proof}
See Appendix \ref{proof_sinr}.
\end{proof}

Now, we can update ($\mathcal{P}$) as follows:
\begin{alignat}{2}
\text{($\mathcal{P}$):~~~~    }&\underset{\bold{P},\bold{A}}{\mathrm{max}}~~~        && \underset{k}{\mathrm{min}}  ~~\frac{\frac{p_k\beta_k^2}{M_k}\mathrm{Tr}(\mathbf{G}_k\mathbf{A}_k)^2}{\beta_k^2\sum_{k'=1}^K \frac{p_{k'}}{M_{k'}}\mathrm{Tr}(\mathbf{G}_k^2\mathbf{A}_{k'})+\sigma^2}\label{eq:obj_fun_no_W}\\
&\text{s.t.} &      &  \sum_{k=1}^K p_k \leq P_{\mathsf{HAPS}}, \label{eq:constraint-sum_no_W}\\
               &&& \eqref{eq:constraint+}, \eqref{eq:constraint-sel}, \eqref{eq:constraint-element}, \eqref{eq:constraint01}.\nonumber
\end{alignat}

\begin{corollary}\label{Corrolary_sinr}
In the interference-limited regime, assuming allocation of equal power for all users, i.e., $p_k=\frac{P}{K},~\forall k$, and assuming equal number of selected antenna elements for all users, i.e., $M_1=...=M_K$, the achievable rate of user $k$ in the proposed HAA scheme for HAPS is given by
\begin{equation}
   R_k^{\infty}=\mathsf{BW} \log_2(1+\mathsf{SINR}_k), 
\end{equation}
where
% \small
\begin{equation}    \label{SINR_regime}
     \mathsf{SINR}_k^{\infty}=\frac{\mathrm{Tr}(\mathbf{G}_k\mathbf{A}_k)^2}{\sum_{k'=1}^K \mathrm{Tr}(\mathbf{G}_k^2\mathbf{A}_{k'})}.
     % =\frac{\beta_k^2P_{k}(\sum_{m=1}^M \sqrt{G_{km}}A_{km})^2}{\beta_k^2\sum_{k'=1}^KP_{k'}\sum_{m=1}^MG_{km}A_{k'm}+M_{sel}\sigma^2},
\end{equation}
% \normalsize 
\end{corollary}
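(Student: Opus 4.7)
The proof is a direct specialization of Proposition 1, so my plan is to take the general SINR expression
$$\mathsf{SINR}_k=\frac{\frac{p_k\beta_k^2}{M_k}\mathrm{Tr}(\mathbf{G}_k\mathbf{A}_k)^2}{\beta_k^2\sum_{k'=1}^K \frac{p_{k'}}{M_{k'}}\mathrm{Tr}(\mathbf{G}_k^2\mathbf{A}_{k'})+\sigma^2}$$
and substitute the three simplifying assumptions in sequence. First I would invoke the interference-limited regime, which amounts to dropping the additive $\sigma^2$ term from the denominator (equivalently, considering the limit $\sigma^2/P_{\mathsf{HAPS}}\to 0$). Second, I would substitute the equal-power assumption $p_k = P/K$ for all $k$, which pulls a common scalar $P/K$ out of both the numerator and every summand of the interference term. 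Third, I would substitute the equal-antenna assumption $M_1=\cdots=M_K=M_0$, after which the factor $1/M_0$ is likewise common to numerator and denominator.

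At this stage, the factor $\frac{P}{K M_0}$ cancels between numerator and denominator, as does the large-scale fading factor $\beta_k^2$, since with the noise term dropped it multiplies the entire denominator. What remains is precisely
$$\mathsf{SINR}_k^{\infty}=\frac{\mathrm{Tr}(\mathbf{G}_k\mathbf{A}_k)^2}{\sum_{k'=1}^K \mathrm{Tr}(\mathbf{G}_k^2\mathbf{A}_{k'})},$$
from which the achievable rate $R_k^{\infty}=\mathsf{BW}\log_2(1+\mathsf{SINR}_k^{\infty})$ follows immediately from Proposition 1's rate expression.

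Because the argument is purely algebraic substitution, there is no genuine obstacle. The only structural point worth highlighting is that the cancellation of $\beta_k^2$ works because the interference term in Proposition 1 is scaled by $\beta_k^2$, i.e., the large-scale gain of the user being analyzed, rather than by $\beta_{k'}^2$; this asymmetry arises from the use-and-then-forget bound, where interference is evaluated through the intended user's channel statistics. Once this structural feature is acknowledged, the corollary follows in a few lines.
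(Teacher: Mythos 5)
Your proposal is correct: the paper itself omits this proof as ``straightforward,'' and your argument is exactly the intended direct substitution into the SINR expression of Proposition~\ref{propos_sinr}, dropping $\sigma^2$ and cancelling the common factors $p_k/M_k$ and $\beta_k^2$. Your closing observation that the cancellation of $\beta_k^2$ hinges on the interference term being scaled by the intended user's large-scale gain (a consequence of the use-and-then-forget bound) is a valid and worthwhile clarification, but the route is the same as the paper's.
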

\begin{proof}
  The proof is straightforward; hence, we omit it.
\end{proof}

\begin{corollary}\label{Corrolary_sumr}
In the interference-limited regime, assuming omnidirectional antennas with gains equaling $1$, the sum rate of the proposed HAA scheme for HAPS is given by
\begin{equation}
   \sum_{k=1}^K R_k^{\infty}=\mathsf{BW}  M_k \log_2(e), 
\end{equation}
where $M_k$ is the number of selected antenna elements for users.
\end{corollary}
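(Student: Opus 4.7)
The plan is to specialize the SINR expression from Corollary~\ref{Corrolary_sinr} to the omnidirectional case and then invoke a standard small-argument expansion of $\log_2$ to collapse the sum rate to the claimed form.

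First, I would set $g_{km}=1$ for every $k,m$, which reduces the antenna gain matrix to $\mathbf{G}_k=\mathbf{I}_M$ and hence $\mathbf{G}_k^2=\mathbf{I}_M$. Because $\mathbf{A}_k$ is diagonal with $0/1$ entries and exactly $M_k$ ones on the diagonal (by \eqref{eq:constraint-sel} read with equality), the two traces in \eqref{SINR_regime} simplify immediately to
$$\mathrm{Tr}(\mathbf{G}_k\mathbf{A}_k)=\mathrm{Tr}(\mathbf{A}_k)=M_k,\qquad \mathrm{Tr}(\mathbf{G}_k^2\mathbf{A}_{k'})=\mathrm{Tr}(\mathbf{A}_{k'})=M_{k'}.$$
Using the equal-selection hypothesis $M_1=\cdots=M_K=M_k$ inherited from Corollary~\ref{Corrolary_sinr}, the denominator in \eqref{SINR_regime} becomes $\sum_{k'=1}^K M_{k'}=K M_k$, so $\mathsf{SINR}_k^{\infty}=M_k^2/(KM_k)=M_k/K$.

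Second, I would plug this common SINR into $R_k^\infty$ and sum over the $K$ users to obtain
$$\sum_{k=1}^K R_k^\infty \;=\; K\,\mathsf{BW}\,\log_2\!\left(1+\tfrac{M_k}{K}\right).$$
The final step is the large-$K$ (equivalently, low per-user-SINR) limit: applying $\ln(1+x)=x+O(x^2)$, or equivalently $\log_2(1+x)\approx x\log_2 e$, to $x=M_k/K$ gives $K\log_2(1+M_k/K)\to M_k\log_2 e$ as $K\to\infty$, which yields the stated identity $\sum_{k=1}^K R_k^\infty=\mathsf{BW}\,M_k\log_2 e$.

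The only genuine subtlety here, and the reason the author calls the proof ``straightforward,'' is the implicit asymptotic interpretation of the right-hand side: the formula is not literal for finite $K$ but rather the leading-order value in the many-users regime, reflecting the qualitative content of the corollary, namely that the interference-limited sum rate saturates at a finite level $\mathsf{BW}\,M_k\log_2 e$ that is independent of $K$ and grows only linearly with the per-user antenna budget $M_k$. I would make this limit explicit in a one-line remark so that the reader does not mistake the approximation for an equality at finite $K$; no further calculation is required.
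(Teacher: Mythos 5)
Your proposal is correct and follows essentially the same route as the paper's own proof: specialize \eqref{SINR_regime} with unit gains to get $\mathsf{SINR}_k^{\infty}=M_k/K$, sum to obtain $\mathsf{BW}\log_2(1+M_k/K)^K$, and pass to the limit $K\to\infty$ (the paper invokes $\lim_{x\to\infty}(1+\tfrac{1}{x})^x=e$, which is the same asymptotic step as your $\log_2(1+x)\approx x\log_2 e$). Your remark that the identity holds only asymptotically in $K$ is consistent with the paper, which likewise states the result as the large-$K$ limit.
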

\begin{proof}
    Assuming omnidirectional gains of $1$ for elements, \eqref{SINR_regime} can be written as $\mathsf{SINR}_k^{\infty}=\frac{M_k^2}{KM_k}=\frac{M_k}{K}$. Therefore, for the sum rate, we can write
    \begin{equation}
        \sum_{k=1}^K R_k^{\infty}=\sum_{k=1}^K \mathsf{BW} \log_2(1+\frac{M_k}{K})=\mathsf{BW} \log_2(1+\frac{M_k}{K})^K.
    \end{equation}
    When $K$ tends to infinity, knowing $\lim_{x\to\infty} (1+\frac{1}{x})^x=e$, we have
    \begin{equation}
        \sum_{k=1}^K R_k^{\infty}= \mathsf{BW} M_k\log_2 e.
    \end{equation}
    Thus the proof is completed.
\end{proof}

\subsection{Antenna Selection Based on the Gains of the Antenna Elements}
\begin{figure}[!t]
% \hspace{-1cm}
  \centering
  \includegraphics[width=\linewidth]{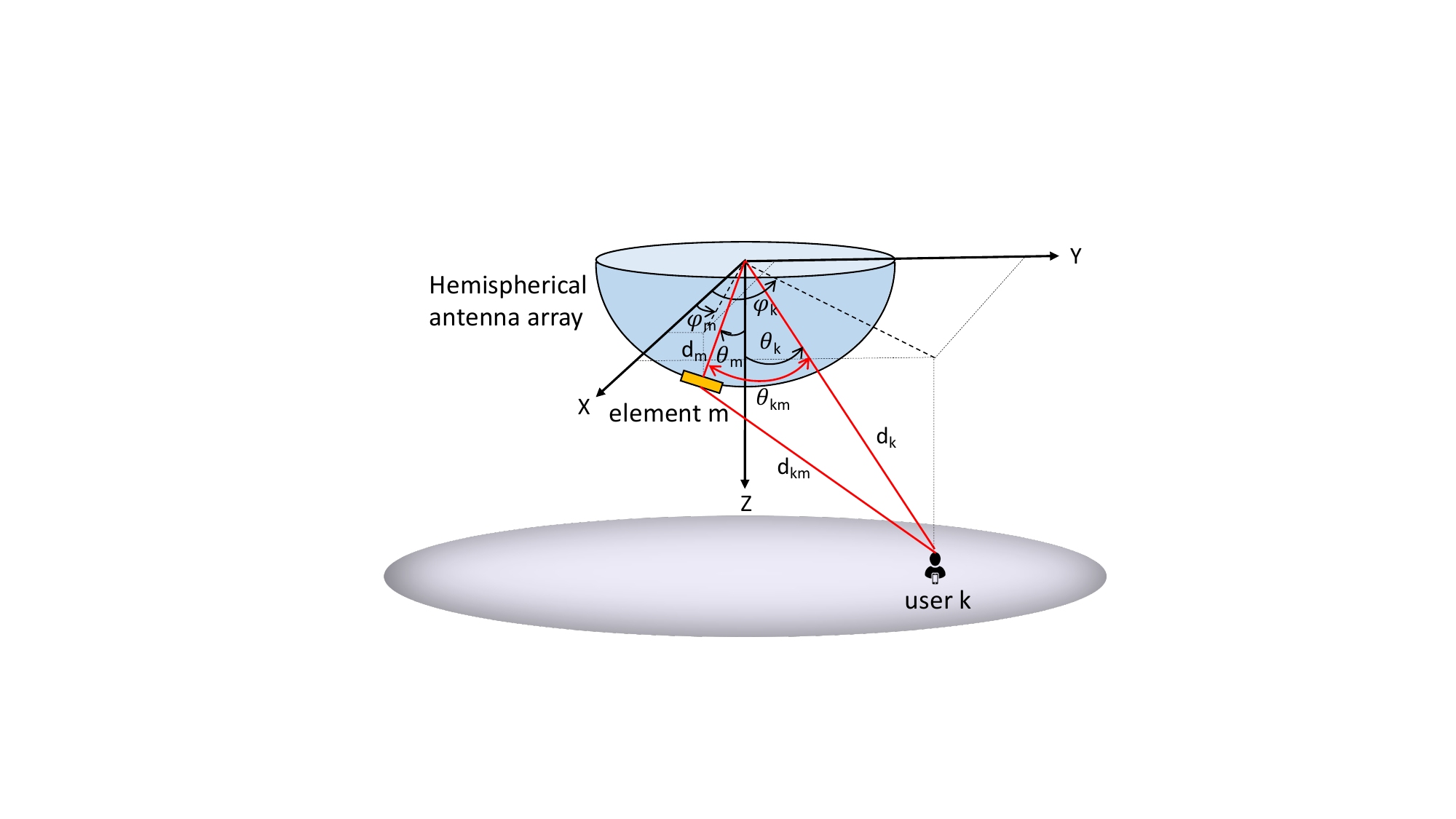}
  \caption{Polar coordinates for antenna element $m$ expressed as $(d_m,\theta_m,\phi_m)$ and for user $k$ denoted as $(d_k,\theta_k,\phi_k)$. In this figure, $d_{km}$ is the distance between antenna element $m$ and user $k$, while $\theta_{km}$ represents the angle between them.}\label{polar}
\end{figure}

In this section, we propose a heuristic algorithm to find the antenna selection matrices $\mathbf{A}_k, \forall k \in \mathcal{K}$. To this end, we first derive the gain of each of the HAPS's antenna elements at each user's location based on antenna pattern formulas provided in \cite{3GPP_haps}. In  Fig. \ref{polar}, one can see the polar coordinates for antenna element $m$ expressed as $(d_m,\theta_m,\phi_m)$ and those for user $k$ denoted as $(d_k,\theta_k,\phi_k)$. As can be seen in Fig. \ref{polar}, the distance between antenna element $m$ and user 
$k$, i.e., $d_{km}$, is calculated in accordance with the triangle law as follows:
\begin{equation}
    d_{km}=\sqrt{d_k^2+d_m^2-2d_kd_m\cos{\theta_{km}}}.
\end{equation}

 \begin{proposition}\label{propos_gain}
The gain of antenna element $m$ (in $\mathrm{dB}$) at the location of user $k$ in the proposed HAA scheme is given by

\begin{equation}\label{G}
 g_{km}=\begin{cases}
      G_{\mathsf{E,max}}+\gamma_{km}, &\text{if $0<\theta_{km}<90$,}\\
      0, &\text{if $90<\theta_{km}<180$,}
  \end{cases}
\end{equation}
where $G_{\mathsf{E,max}}$ is the maximum directional gain of an antenna element as

\begin{equation}    \label{g_max}
    G_{\mathsf{E,max}}=\frac{32400}{\theta_{\mathsf{3dB}}^2},
\end{equation}
and
\begin{equation}    \label{gamma}
     \gamma_{km}=-\min(12(\frac{\theta_{km}}{\theta_{\mathsf{3dB}}})^2,\gamma_{\mathsf{max}}),
\end{equation}
where $\theta_{\mathsf{3dB}}$ is the $3~\mathrm{dB}$ beamwidth of each antenna element, while $\gamma_{\mathsf{max}}$ is the front-to-back ratio for each element. Furthermore,  $\theta_{km}$ is the angle between antenna $m$ and user $k$ (in $\mathrm{degrees}$) that is given by
\begin{equation} \label{theta-km}
\begin{split}  \theta_{km}&=\arccos(\sin{\theta_{k}}\sin{\theta_{m}}\cos{\phi_{k}}\cos{\phi_{m}}\\&+\sin{\theta_{k}}\sin{\theta_{m}}\sin{\phi_{k}}\sin{\phi_{m}}+\cos{\theta_{k}}\cos{\theta_{m}}).
\end{split}
\end{equation}

\end{proposition}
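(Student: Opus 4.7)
The plan is to split the argument into two independent parts: first, justify the piecewise gain expression by invoking the 3GPP element pattern together with a geometric visibility argument specific to the hemispherical layout; second, derive the closed form for the angle $\theta_{km}$ by converting to Cartesian coordinates and using the dot product.

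For the first part, I would cite the 3GPP HAPS antenna model in \cite{3GPP_haps}, where the directional gain of a single element with $3~\mathrm{dB}$ beamwidth $\theta_{\mathsf{3dB}}$ and front-to-back ratio $\gamma_{\mathsf{max}}$ is $G_{\mathsf{E,max}}+\gamma_{km}$, with $G_{\mathsf{E,max}}=32400/\theta_{\mathsf{3dB}}^2$ and $\gamma_{km}=-\min(12(\theta_{km}/\theta_{\mathsf{3dB}})^2,\gamma_{\mathsf{max}})$. What makes this proposition more than a direct quotation is the piecewise nature: because the elements are affixed to a hemisphere with their boresights pointing radially outward, a user that sits in the rear half-space of element $m$ (i.e.\ $\theta_{km}\in(90^\circ,180^\circ)$) receives no contribution and the element is treated as blocked by the supporting surface, giving gain $0$ on the linear scale. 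This is what produces the case split in \eqref{G}.

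For the second part, I would introduce unit vectors pointing from the HAPS center toward user $k$ and toward element $m$, namely $\hat{\mathbf{u}}_k=(\sin\theta_k\cos\phi_k,\sin\theta_k\sin\phi_k,\cos\theta_k)$ and $\hat{\mathbf{u}}_m=(\sin\theta_m\cos\phi_m,\sin\theta_m\sin\phi_m,\cos\theta_m)$, using the same polar convention as in Fig.~\ref{polar}. The angle $\theta_{km}$ between the two rays satisfies $\cos\theta_{km}=\hat{\mathbf{u}}_k\cdot\hat{\mathbf{u}}_m$, and expanding this dot product component-wise yields exactly the three-term expression in \eqref{theta-km}. Taking the arccosine completes the derivation.

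The only delicate point, and the step I expect to require the most care, is justifying the hard zero in the second case of \eqref{G}. A strict reading of the 3GPP pattern would still assign a residual gain of $G_{\mathsf{E,max}}-\gamma_{\mathsf{max}}$ to rear directions, so the zero must be argued as a modeling choice reflecting physical obstruction by the hemispherical mount (the element radiates into the interior of the structure, which is assumed opaque). I would make this assumption explicit and note that, since only elements with $\theta_{km}<90^\circ$ are ever retained by the selection procedure of Section IV-B, the zero convention is harmless for the subsequent analysis. Everything else reduces to bookkeeping and to the standard dot-product identity on the unit sphere.
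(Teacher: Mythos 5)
Your proposal is correct and follows essentially the same route as the paper's proof: the angle $\theta_{km}$ is obtained from the dot product of the two unit vectors in Cartesian coordinates, the pattern $G_{\mathsf{E,max}}+\gamma_{km}$ is taken from the 3GPP element model, and the rear half-space is assigned zero gain. The only cosmetic difference is that the paper derives $G_{\mathsf{E,max}}=32400/\theta_{\mathsf{3dB}}^2$ from the normalization that the gain equals $1$ when $\theta_{\mathsf{3dB}}=180^\circ$ rather than quoting it, and it simply asserts the rear-hemisphere zero where you (reasonably) flag it as an explicit modeling assumption.
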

\begin{proof}
See Appendix \ref{proof_G}.
\end{proof}

\begin{figure}[t]
\begin{subfigure}{0.24\textwidth}
\includegraphics[width=\linewidth]{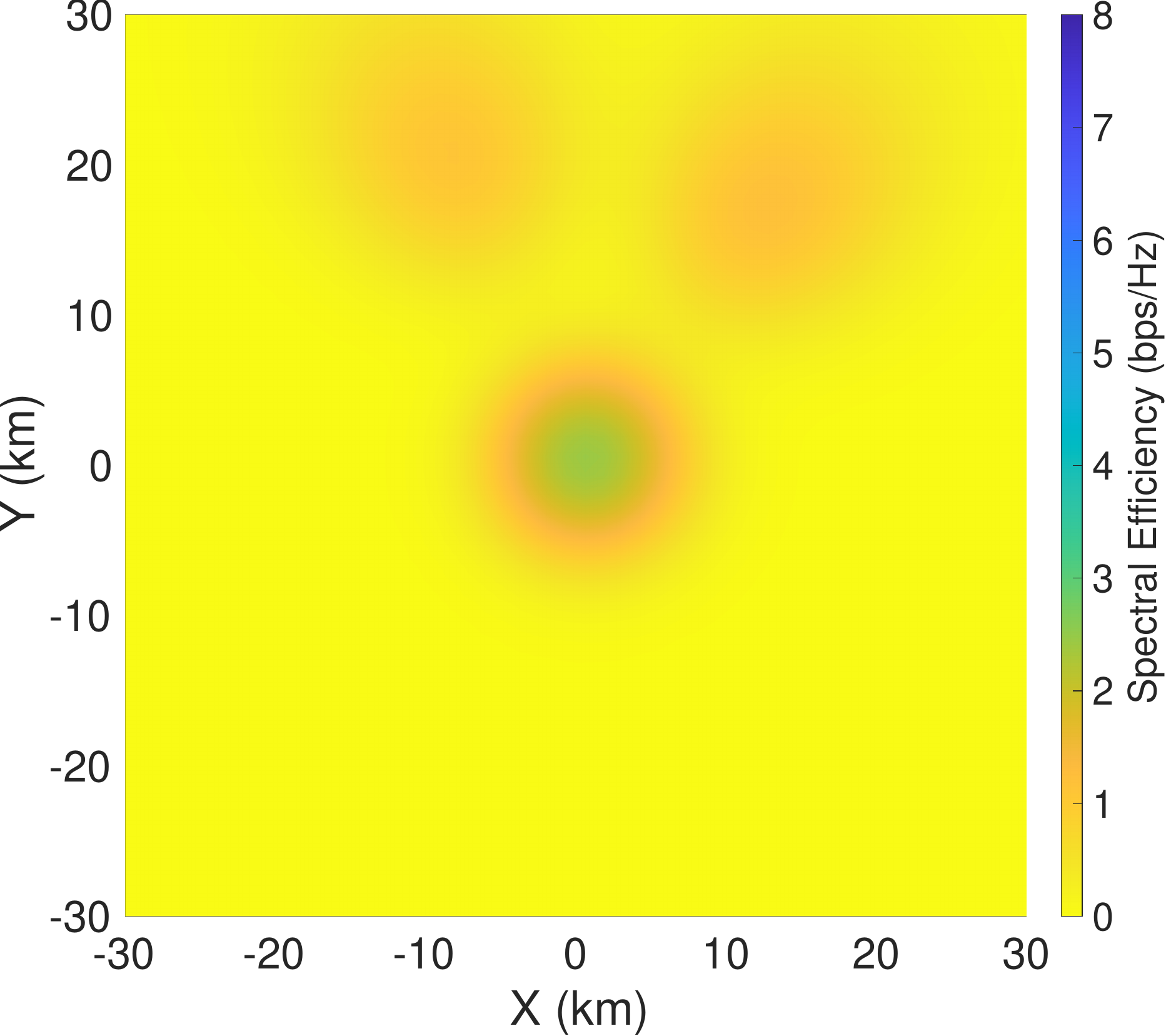} 
\caption{$\theta_{\mathsf{3dB}}=25$}
\label{hear_25}
\end{subfigure}
\begin{subfigure}{0.24\textwidth}
\includegraphics[width=\linewidth]{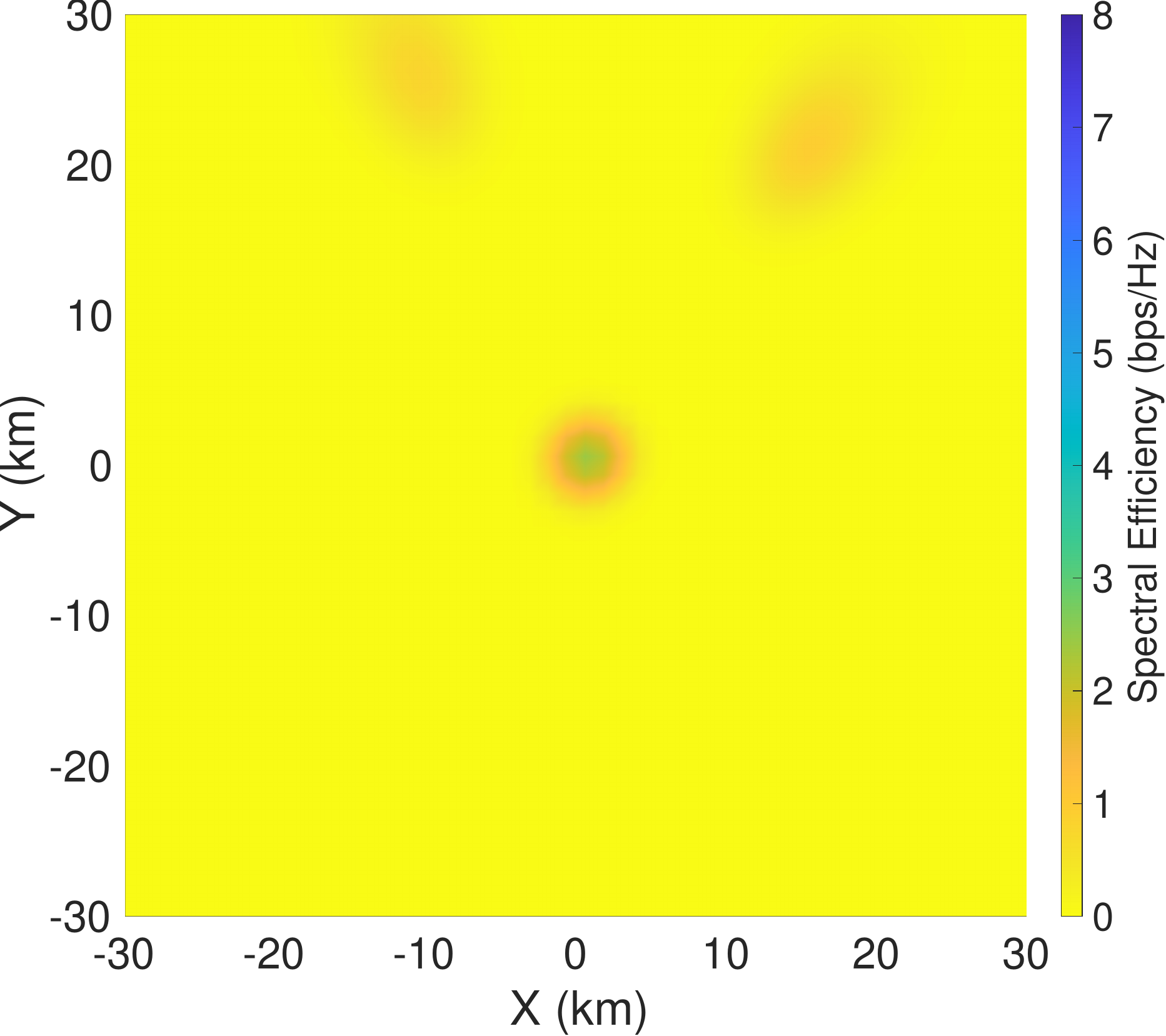} 
\caption{$\theta_{\mathsf{3dB}}=10$}
\label{heat_10}
\end{subfigure}
\caption{Heatmaps depicting the spectral efficiencies of three individual antenna elements in the proposed HAA, showcasing the impact of two distinct $3~\mathrm{dB}$ beamwidth settings. Each element is allocated a fixed power of $1~\mathrm{Watt}$.}
\label{Heat_single_element}
\end{figure}

 Fig. \ref{Heat_single_element} shows heatmaps depicting spectral efficiencies of three selected individual antenna elements determined from the formula derived in Proposition \ref{propos_gain} for the antenna gain of each element in the proposed HAA. Each element is allocated a fixed power of $1~\mathrm{Watt}$. As shown in Fig. \ref{hear_25}, while elements facing the nadir of the HAPS exhibit a circular pattern, those directed towards other spots in the proposed HAA produce ellipsoidal footprints. In Fig. \ref{heat_10}, the $3~\mathrm{dB}$ beamwidth was reduced from $25$ to $10$ for the same selected antenna elements as are shown in Fig. \ref{hear_25}, which demonstrates it is possible to create narrower beams in this configuration.

Now, we construct the antenna selection matrix based on the gains derived by the HAPS's antenna elements at the location of each user, which are denoted by vector $\mathbf{g}_k=[\sqrt{g_{k1}},...,\sqrt{g_{kM}}] \in \mathbb{R}^{1\times M},~\forall k \in \mathcal{K}$. To this end, we first sort, in descending order, the entries in each user's gain vector $\mathbf{g}_k$ to create new vectors $\mathbf{g}_{k,\mathsf{sorted}}$.
We then generate index vectors $\mathbf{i}_k \in \mathbb{Z}^{1\times M}$ with entries containing the corresponding indices of the entries of the sorted vector $\mathbf{g}_{k,\mathsf{sorted}} \in \mathbb{R}^{1\times M}$ in the original vector $\mathbf{g}_k$.
Next, we select the first $M_k$ entries of each user $k$'s index vectors $\mathbf{i}_k$ to transmit signals to user $k$ and place them in vector $\mathbf{i}_{k,\mathsf{selected}} \in \mathbb{Z}^{1\times M_k}$.
Finally, the antenna elements selected for user $k$ are given by

\begin{equation}
 a_{km}=\begin{cases}
      1, &\text{if}~ m \in \mathbf{i}_{k,\mathsf{selected}},\\
      0, &\text{otherwise.}
  \end{cases}
\end{equation}

\begin{algorithm}[t]
\caption{The method proposed for antenna selection based on the gains of antenna elements.}\label{alg1}
\begin{algorithmic}[1]
\State Derive the gain vector of the antenna elements for each user $k$, i.e., $\mathbf{g}_k \in \mathbb{R}^{1\times M}$, utilizing \eqref{G} in Proposition \ref{propos_gain}.
\State Sort  entries of the gain vector for each user, i.e., $\mathbf{g}_k$, in descending order, and save the sorted vector as $\mathbf{g}_{k,\mathsf{sorted}} \in \mathbb{R}^{1\times M}$.
\State Generate index vectors $\mathbf{i}_k \in \mathbb{Z}^{1\times M}$ with entries containing the corresponding indices of the entries of the sorted vector $\mathbf{g}_{k,\mathsf{sorted}}$ in the original vector $\mathbf{g}_k$.
\State For each user $k$, select the first $M_k$ entries of the $\mathbf{i}_k$, and place them in $\mathbf{i}_{k,\mathsf{selected}} \in \mathbb{Z}^{1\times M_k}$.
\For{$m = 1,2,\ldots,M$}
\If{$m\in \mathbf{i}_{k,\mathsf{selected}}$} 
    \State $[\mathbf{A}]_{km}=1$
\Else
        \State $[\mathbf{A}]_{km}=0$
\EndIf
% \State{$[\mathbf{A}]_{k,\mathbf{i}_k}=1$}
\EndFor
\end{algorithmic}
\end{algorithm}

Algorithm 1 summarizes the method proposed to find the antenna selection matrix. Now, we first present a lemma that will serve as the foundation for demonstrating the optimality of the antenna selection method outlined in Algorithm 1. This demonstration specifically applies when a substantial number of antenna elements are selected for each user.

\begin{lemma}
The expression derived in (\ref{SINR}) for each user's SINR increases with the gain of each antenna element.     
\end{lemma}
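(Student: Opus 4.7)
The plan is to establish monotonicity by directly differentiating (\ref{SINR}) in the single gain entry $g_{km}$ while keeping $\mathbf{A}$, $\mathbf{P}$, $\{M_{k'}\}$, and all other $g_{km'}$ fixed, and then arguing that the resulting partial derivative is non-negative.

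First I would isolate the $g_{km}$-dependence. Using $\mathrm{Tr}(\mathbf{G}_k\mathbf{A}_k)=\sum_{m'}\sqrt{g_{km'}}\,a_{km'}$ and $\mathrm{Tr}(\mathbf{G}_k^2\mathbf{A}_{k'})=\sum_{m'}g_{km'}\,a_{k'm'}$, I would substitute $u=\sqrt{g_{km}}$ and peel off the $m$-th term of each trace, so that the numerator and denominator of (\ref{SINR}) take the form
\begin{equation*}
N=\frac{p_k\beta_k^2}{M_k}\bigl(a_{km}\,u+S_{-m}\bigr)^2,\qquad D=\beta_k^2\,\alpha\,u^2+D_{-m},
\end{equation*}
where $S_{-m}=\sum_{m'\ne m}\sqrt{g_{km'}}\,a_{km'}$, $\alpha=\sum_{k'}\tfrac{p_{k'}}{M_{k'}}\,a_{k'm}$, and $D_{-m}$ gathers $\sigma^2$ together with all interference contributions that are independent of $g_{km}$. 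The structure is now explicit: both $N$ and $D$ are quadratic in $u$, and the only $u$-coupling in $D$ comes from users that share element $m$ with user $k$.

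Next I would apply the quotient rule. If $a_{km}=0$ the numerator is insensitive to $u$ while $D$ is weakly increasing in $u$, so the interesting case is $a_{km}=1$, in which an elementary factorization reduces the sign of $\partial\mathsf{SINR}_k/\partial u$ to the sign of $D_{-m}-\beta_k^2\,\alpha\,u\,S_{-m}$. I would then split this over $k'$: every $k'$ with $a_{k'm}=0$ contributes $\beta_k^2\tfrac{p_{k'}}{M_{k'}}T_{kk',-m}\ge 0$ on the positive side with nothing on the negative side, while each $k'$ with $a_{k'm}=1$ requires comparing $T_{kk',-m}=\sum_{m'\ne m}g_{km'}a_{k'm'}$ against $\sqrt{g_{km}}\,S_{-m}$. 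The latter comparison is handled by Cauchy-Schwarz on the selected indices combined with the strictly positive $\sigma^2$ term; crucially, in the ``substantial number of selected elements'' regime emphasised by the authors, Cauchy-Schwarz becomes tight, so the slack between $T_{kk',-m}$ and $\sqrt{g_{km}}\,S_{-m}$ closes and the desired inequality goes through.

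The principal obstacle is exactly this last step for interferers that share element $m$, because a worst-case pointwise comparison on individual gain entries can fail for pathological selection configurations without the large-$M_k$ assumption. As a clean companion argument that avoids the asymptotic crutch, I would also verify monotonicity under a uniform rescaling $g_{km}\mapsto c\,g_{km}$ of every gain entry: then $N=c\,N_0$ and $D=c\,D_0+\sigma^2$, whence $\mathrm{d}\mathsf{SINR}_k/\mathrm{d}c=N_0\sigma^2/(cD_0+\sigma^2)^2>0$. This formalises the intuition that any global amplification of element gains, for instance via $G_{\mathsf{E,max}}$ in Proposition \ref{propos_gain}, strictly improves every user's SINR, which is precisely the form of monotonicity leveraged by the heuristic in Algorithm \ref{alg1}.
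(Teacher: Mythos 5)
The paper offers no proof of this lemma at all---it is declared straightforward and omitted---so your attempt has to be judged on its own terms. Your quotient-rule computation is correct: with $u=\sqrt{g_{km}}$ and $a_{km}=1$, the sign of $\partial\,\mathsf{SINR}_k/\partial u$ is that of $D_{-m}-\beta_k^2\,\alpha\,u\,S_{-m}$, and your uniform-rescaling argument ($g_{km}\mapsto c\,g_{km}$ for all $m$ simultaneously) is a complete proof of a weaker, global form of monotonicity. But the gap you flag is real and cannot be closed, because the pointwise claim in the lemma is false in general. Take $K=1$, two selected elements, and $\sigma^2$ negligible against the self-interference term: then $\mathsf{SINR}_1\propto(\sqrt{g_{11}}+\sqrt{g_{12}})^2/(g_{11}+g_{12})=1+2\sqrt{g_{11}g_{12}}/(g_{11}+g_{12})$, which is maximized at $g_{11}=g_{12}$ and \emph{strictly decreases} as $g_{11}$ grows beyond $g_{12}$. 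Increasing the gain of one selected element can therefore lower the SINR whenever self-interference dominates noise.

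Your Cauchy--Schwarz step also does not rescue the $a_{k'm}=1$ terms. For the self term ($k'=k$) you need $T_{kk,-m}=\sum_{m'\neq m}g_{km'}a_{km'}\ge u\,S_{-m}$, whereas Cauchy--Schwarz only yields $T_{kk,-m}\ge S_{-m}^2/(M_k-1)$, and $S_{-m}\le (M_k-1)u$ whenever $g_{km}$ is the largest selected gain---i.e., the inequality points the wrong way precisely in the regime Algorithm~\ref{alg1} operates in, since it always picks the largest gains. What actually kills the offending term is near-equality of the selected gains (so that $T_{kk,-m}\approx u\,S_{-m}$ and the strictly positive $\sigma^2$ plus the $a_{k'm}=0$ contributions decide the sign); that is a modelling assumption about the gain profile of the top $M_k$ elements, not a consequence of $M_k$ being large. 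The defensible statement is conditional: the SINR is increasing in each $g_{km}$ in the noise-dominated regime, under uniform scaling of all gains, or when the selected gains are approximately equal---and it is this conditional version that the optimality argument in Appendix~\ref{proof_selection_optimal} implicitly relies on. Your proposal is more honest than the paper here, but as written it does not prove the lemma as stated, because the lemma as stated does not hold.
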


\begin{proof}
    The proof is straightforward; hence, we omit it.
\end{proof}

\begin{proposition}\label{propos_selection_optimal}
    The proposed method for antenna selection based on the gains of the antenna elements that is set out in Algorithm \ref{alg1} is optimal when a substantial number of antenna elements are selected for each user.
\end{proposition}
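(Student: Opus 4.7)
The plan is to show that, as $M_k$ grows large, user $k$'s SINR in~\eqref{SINR} becomes dominated by its own signal term, and that Algorithm~\ref{alg1} is precisely the selection rule that maximizes this dominant term. Writing $\mathcal{S}_k$ for the selected antenna set, the numerator equals $(p_k\beta_k^2/M_k)\bigl(\sum_{m\in\mathcal{S}_k}\sqrt{g_{km}}\bigr)^2$, which, since the per-element gain is bounded above by $G_{\mathsf{E,max}}$ in~\eqref{g_max}, scales as $\Theta(M_k)$. By contrast, every term in the denominator scales only as $O(1)$ in $M_k$: the self-interference contribution $(\beta_k^2 p_k/M_k)\sum_{m\in\mathcal{S}_k} g_{km}$ is bounded by $\beta_k^2 p_k G_{\mathsf{E,max}}$, each cross-interference term $(\beta_k^2 p_{k'}/M_{k'})\sum_m g_{km} a_{k'm}$ is bounded analogously, and the noise $\sigma^2$ is fixed. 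Hence $\mathsf{SINR}_k=\Theta(M_k)$, and to leading order the comparison between any two feasible choices of $\mathcal{S}_k$ reduces to comparing their numerators.

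Given this reduction, I would then argue that Algorithm~\ref{alg1} maximizes the numerator. For a fixed cardinality $M_k$, the sum $\sum_{m\in\mathcal{S}_k}\sqrt{g_{km}}$ is a linear functional of the indicator vector $\mathbf{a}_k$, so a standard greedy/rearrangement argument shows that it is maximized precisely by choosing the $M_k$ indices with the largest $\sqrt{g_{km}}$, equivalently the largest $g_{km}$ by monotonicity of the square root. This is exactly the output of Algorithm~\ref{alg1}. Combined with the scaling step, no alternative selection of size $M_k$ can yield a strictly larger $\mathsf{SINR}_k$ to leading order in $M_k$. Lemma~1 is used here as a consistency check: the monotonicity of $\mathsf{SINR}_k$ in each selected antenna's gain ensures that the direction of the greedy step agrees with the direction in which the SINR actually improves, so that nothing is lost by ignoring the sub-leading self-interference effect.

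To lift this per-user statement to the joint max-min problem~$(\mathcal{P})$, I would observe that user $k$'s selection enters its own SINR through the dominant $\Theta(M_k)$ signal term and through a self-interference term that carries an extra $1/M_k$ factor, while it enters every other user's SINR only through an $O(1)$ cross-interference contribution that is dwarfed by that user's own $\Theta(M_k)$ numerator. Consequently, applying Algorithm~\ref{alg1} independently for every user asymptotically maximizes every $\mathsf{SINR}_k$ simultaneously, and in particular maximizes $\min_k \mathsf{SINR}_k$; any deviation can only degrade at least one $\mathsf{SINR}_k$ at leading order. The main technical obstacle is the shared-antenna constraint~\eqref{eq:constraint-element}: one must verify that the per-user greedy sets are simultaneously realizable. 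In the regime of interest this should follow from the hemispherical geometry together with the directional element pattern in Proposition~\ref{propos_gain}, which render the highest-gain antennas for spatially separated users essentially disjoint so that~\eqref{eq:constraint-element} remains slack; formalizing this decoupling, and making precise the meaning of ``substantial'' when both $M$ and the $M_k$ grow together, is the delicate part of the argument.
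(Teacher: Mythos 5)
Your overall strategy---an asymptotic argument in $M_k$ that reduces the selection problem to maximizing the signal term---is in the same spirit as the paper's, but the specific reduction you use has a gap. You argue that because the numerator of \eqref{SINR} scales as $\Theta(M_k)$ while every denominator term is $O(1)$, the comparison between any two feasible choices of $\mathcal{S}_k$ ``reduces to comparing their numerators.'' This does not follow: the SINR is a ratio, and for two candidate selections both the numerators and the denominators are selection-dependent up to $\Theta(1)$ multiplicative constants. A selection with a slightly smaller numerator but a sufficiently smaller self-interference term $\frac{p_k\beta_k^2}{M_k}\sum_{m\in\mathcal{S}_k}g_{km}^2$ (which is $\Theta(1)$, not $o(1)$---the $1/M_k$ is cancelled by the sum over $M_k$ terms, contrary to what your third paragraph suggests) could in principle yield a larger ratio, and growth of $M_k$ alone cannot rule this out. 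This is precisely where the paper's proof leans on Lemma~1: the paper (i) argues that only the \emph{cross}-interference terms become selection-invariant when many elements are chosen, because gain increases and decreases at other users' locations average out, and then (ii) invokes the monotonicity of the SINR in each selected element's gain to resolve the coupled effect of an element swap on the numerator and the self-interference term simultaneously. In your write-up Lemma~1 is demoted to a ``consistency check,'' but it is the load-bearing step; without it (or an equivalent argument showing that the numerator's gain from greedy selection outweighs the accompanying increase in self-interference), the reduction to ``maximize the numerator'' is unjustified.

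The rest of your argument is sound and in places more explicit than the paper's. The rearrangement step showing that the top-$M_k$ indices maximize $\sum_{m\in\mathcal{S}_k}\sqrt{g_{km}}$ is exactly Algorithm~\ref{alg1}, your lifting to the max-min objective (each user's selection enters other users' SINRs only through $O(1)$ cross-interference) makes precise what the paper only gestures at, and you are right to flag the per-antenna constraint \eqref{eq:constraint-element} as a feasibility issue that the paper's proof never addresses. But these strengths do not repair the central reduction: you need either the paper's Lemma~1-based swap argument or an explicit bound showing that the denominator's dependence on $\mathcal{S}_k$ is asymptotically negligible \emph{relative to the numerator's dependence}, not merely bounded.
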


\begin{proof}
See Appendix \ref{proof_selection_optimal}.
\end{proof}
\subsection{Optimal Power Allocation via the Bisection Method}

\begin{algorithm}[t]
\caption{Bisection method to solve the power allocation problem ($\mathcal{P}1$).}\label{alg2}
\begin{algorithmic}[1]
%\For{$t = 1,2,\ldots$}
\State  Initialize the values of $\eta_{\mathsf{min}}$ and $\eta_{\mathsf{max}}$,
where $\eta_{\mathsf{min}}$ and $\eta_{\mathsf{max}}$ show a range for the minimum $\mathsf{SINR}$ of users. Choose a tolerance $\epsilon>0$.
\Repeat
\State Set $\eta=\frac{\eta_{\mathsf{min}} +\eta_{\mathsf{max}}}{2}$ and solve the linear feasibility
problem ($\mathcal{P}2$) by the interior-point method utilizing the CVX solver \cite{cvx}.
\State If the problem ($\mathcal{P}2$) is feasible, set $\eta_{\mathsf{min}}=\eta$; else set $\eta_{\mathsf{max}}=\eta$.
\Until {$\eta_{\mathsf{max}} -\eta_{\mathsf{min}}<\epsilon$.}
%\EndFor
\end{algorithmic}
\end{algorithm}

With beamforming and antenna selection matrices derived in previous sections, we can express the power allocation sub-problem as follows:
\begin{alignat}{2}
\text{($\mathcal{P}1$):~~~~    }&\underset{\bold{P}}{\max}~~~        && \underset{k}{\min}  ~~\mathsf{SINR_k}\label{eq:obj_fun_P}\\
&\text{s.t.} &      &   \eqref{eq:constraint+},\eqref{eq:constraint-sum_no_W}.\nonumber
\end{alignat}
This problem is still non-convex with respect to the power allocation coefficients $\bold{P}$.
\begin{proposition}\label{propos_quasi}
The optimization problem ($\mathcal{P}1$) is a quasi-linear optimization problem. 
\end{proposition}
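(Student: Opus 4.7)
The plan is to recast $(\mathcal{P}1)$ in epigraph form by introducing an auxiliary variable $\eta$ that serves as a common lower bound on every user's SINR, and then to verify that, with $\eta$ held fixed, the remaining feasibility conditions are all affine in the power vector $\mathbf{P}$. This will show that the optimum of $(\mathcal{P}1)$ is obtained by a one-dimensional search over $\eta$ that thresholds a family of linear feasibility programs, which is precisely what is meant by quasi-linearity and is exactly the structure exploited by Algorithm \ref{alg2}.

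First I would rewrite the max--min in epigraph form as
\begin{equation}
\max_{\mathbf{P},\eta}~~\eta \quad \text{s.t.} \quad \mathsf{SINR}_k \geq \eta,~\forall k \in \mathcal{K}, \quad \sum_{k=1}^{K} p_k \leq P_{\mathsf{HAPS}}, \quad p_k > 0.
\end{equation}
Next, I would substitute the closed-form expression from (\ref{SINR}) into the SINR inequality and cross-multiply by the denominator, which is strictly positive as long as $\sigma^2>0$. After this manipulation, the $k$-th SINR constraint becomes
\begin{equation}
\frac{\beta_k^2}{M_k}\mathrm{Tr}(\mathbf{G}_k\mathbf{A}_k)^2\, p_k \;-\; \eta\, \beta_k^2 \sum_{k'=1}^{K} \frac{\mathrm{Tr}(\mathbf{G}_k^2 \mathbf{A}_{k'})}{M_{k'}}\, p_{k'} \;\geq\; \eta\, \sigma^2.
\end{equation}
The key observation is that, with $\eta$ fixed, all of the coefficients depend only on the large-scale fading, the antenna-gain matrices $\mathbf{G}_k$, and the antenna-selection matrices $\mathbf{A}_k$, which were already determined in the previous sub-problems. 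Hence every SINR super-level set is a half-space in $\mathbf{P}$, and combined with the total-power budget $\sum_k p_k \leq P_{\mathsf{HAPS}}$ and the positivity constraints in \eqref{eq:constraint+}, the feasible region for $\mathbf{P}$ is a polyhedron.

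Putting the pieces together, for any fixed $\eta$ the conjunction of the $K$ SINR inequalities and the linear power constraints is a linear feasibility program in $\mathbf{P}$, namely the program $(\mathcal{P}2)$ referenced inside Algorithm \ref{alg2}. Because the optimum of $(\mathcal{P}1)$ is the supremum of those $\eta$ for which this family of linear programs admits a feasible point, and because feasibility is monotone in $\eta$ (any $\mathbf{P}$ feasible for some $\eta$ is also feasible for every smaller $\eta'$), the optimum is attained at the unique threshold separating feasible from infeasible instances, justifying the bisection procedure. The main obstacle here is mild: one has to be careful to argue that cross-multiplying the SINR denominator preserves equivalence, which is clear from $\sigma^2>0$, and to check the monotonicity of feasibility in $\eta$ that underwrites the bisection; both checks are essentially immediate from the structure above.
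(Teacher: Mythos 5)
Your proposal is correct and follows essentially the same route as the paper: both arguments cross-multiply the SINR expression from (\ref{SINR}) by its strictly positive denominator and observe that the resulting inequality is affine in $\mathbf{P}$, so that each super-level set of the objective is a polyhedron and the remaining constraints are linear. Your epigraph/bisection framing and the explicit monotonicity-in-$\eta$ check are just a repackaging of the paper's statement that the upper-level set $\mathsf{ULS}(f,t)$ is a linear set for every $t$.
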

\begin{proof}
See Appendix \ref{proof_quasi}.
\end{proof}

Since problem ($\mathcal{P}1$) is a quasi-linear optimization problem, its optimal solution can be efficiently found using the bisection
method \cite{boyd2020disciplined}. To this end, we rewrite problem ($\mathcal{P}1$) by adding slack variable $\eta$ as follows:
\begin{alignat}{2}
\text{($\mathcal{P}2$):~~~~    }&\underset{\bold{P},\eta}{\text{max}}~~~ \eta      && \label{eq:obj_fun_P}\\
&\text{s.t.} &      &  \mathsf{SINR_k}\geq \eta, \forall k, \in \mathcal{K},\\&&&\eqref{eq:constraint+},\eqref{eq:constraint-sum_no_W}.\nonumber
\end{alignat}
It is easy to prove that ($\mathcal{P}2$) is equivalent to ($\mathcal{P}1$). We do this by noting that the optimal solution for ($\mathcal{P}2$) must satisfy $\eta=\min (\mathsf{SINR_1},...,\mathsf{SINR_K})=\mathsf{SINR_1}=...=\mathsf{SINR_K}$, which is identical to the optimal solution for ($\mathcal{P}1$); hence, the proof is completed. 
For any given value of $\eta$, ($\mathcal{P}2$) will be a linear feasibility
problem that can be optimally solved using convex optimization techniques, such as the interior-point method \cite{cvx}.
The bisection method is summarized in Algorithm 2. Since each iteration of Algorithm 2 requires
solving only a convex problem, Algorithm 2's overall complexity is polynomial at worst. Algorithm 3 summarizes the methods proposed to solve problem ($\mathcal{P}$) and find variables $\bold{P},\bold{A},\bold{W}$.

\begin{algorithm}[t]
\caption{The methods proposed to solve the problem ($\mathcal{P}$) and find variables $\bold{P},\bold{A}, ~\text{and}~\bold{W}$.}\label{alg3}
\begin{algorithmic}[1]
%\For{$t = 1,2,\ldots$}
\Require{$\bold{G} \in \mathbb{R}^{K \times M},\bold{\beta} \in \mathbb{R}^{1\times K},\bold{m} \in \mathbb{R}^{1\times K}, P_{\mathsf{HAPS}} \in \mathbb{R}^{1\times 1}$} 
\Ensure{$\bold{P} \in \mathbb{R}^{1\times K},\bold{A} \in \mathbb{B}^{K \times M},\bold{W} \in \mathbb{C}^{K \times M}$}
\Statex
\State Perform analog beamforming at HAPS utilizing the steering vectors of antennas for each user, i.e., $\mathbf{w}_{k}=\frac{1}{\sqrt{M_k}}\mathbf{b}_k^*$.
\State Derive the achievable SINR of each user by the bound \eqref{SINR} in Proposition \ref{propos_sinr} and update the objective function in ($\mathcal{P}$) accordingly.
\State Calculate antenna selection vectors for each user utilizing the gains of antenna elements as in Algorithm 1.
\State Calculate the optimal power allocation for each user utilizing the bisection method in Algorithm 2.
\end{algorithmic}
\end{algorithm}

\section{Numerical Results}
In this section, we report numerical results to demonstrate that our proposed HAA scheme outperforms baseline array schemes. Simulation parameters are summarized in Table I. In addition, the following default parameters were employed in the simulations, with variations specified as needed.
The carrier frequency was set to $f_\mathrm{C}=2~\mathrm{GHz}$, the communication bandwidth is assumed to be $\mathsf{BW}=20~\mathrm{MHz}$, and the noise power spectral density is $N_0=-174~\mathrm{dBm/Hz}$. We assume that all users are uniformly distributed over a square area measuring $60~\mathrm{km}$ on each side.
 We also assume that the HAPS is deployed in the middle of the square area. Default values for the number of users and antenna elements in HAPS's transmit array are $K=16$ and $M=2650$, respectively. We assume that all antenna elements are placed on a hemisphere with a radius of $d_m=3~\rm{meters}$.
 We set the maximum transmit power at the HAPS to $P_{\mathsf{HAPS}}=50 ~\rm{dBm}=100~\mathrm{W}$. Given an urban area, the excessive path loss affecting air-to-ground links in the LoS and NLoS cases is assumed to be $\eta_{\mathsf{LoS}}^{\mathsf{dB}}=1~\mathrm{dB}$ and $\eta_{\mathsf{NLoS}}^{\mathsf{dB}}=20~\mathrm{dB}$, respectively \cite{Irem2016}. Also, for the urban area, $A=9.61$ and $B=0.16$.
  In Algorithm 2, we use initial values $\eta_{\mathsf{min}}=0$ and $\eta_{\mathsf{max}}=1500$ and set the tolerance value to $\epsilon=0.01$. We also consider the HAPS's altitude to be $20 ~\mathrm{km}$. To calculate each antenna element's gain, we assume that the $3~\mathrm{dB}$ beamwidth is $\theta_{\mathsf{3dB}}=25~\mathrm{degrees}$, while the front-to-back ratio is $\gamma_{\mathsf{max}}=30~\mathrm{dB}$. We also suppose that $M_k=64$ elements are selected for each user.

 In simulation figures, we compare the proposed scheme, referred to as \textbf{hemispherical antenna array (HAA)}, with the following three baseline schemes:
 \begin{itemize}
     \item \textbf{Cylindrical antenna array (CAA):} In this configuration, the identical number of antenna elements, i.e., $M=2650=50\times53$, are placed on a cylindrical structure. In this case, $50$ indicates the number of vertical element layers, while $53$ represents the number of horizontal elements arranged around the circumference of each circular layer.
     \item \textbf{Rectangular antenna array (RAA):} Similar to the HAA and the CAA, this arrangement employs $M=2650=50\times53$ antenna elements. However, in this case, elements are situated on a rectangular surface, with $50$ representing the number of elements along the length of the rectangle, and $53$ representing the number of elements along the width of the rectangle.
     \item \textbf{Hybrid rectangular and cylindrical antenna array (HRCAA):} In this scheme, $M_{\mathsf{cyl}}=2000=50\times40$ antenna elements are placed on a cylindrical structure, with $50$ indicating the number of vertical element layers, $40$ representing the number of horizontal elements around the circumference of each circular layer, and an additional $M_{\mathsf{rect}}=M-M_{\mathsf{cyl}}=650$ elements placed on a rectangular surface underneath the cylinder.
 \end{itemize}

Note that the CAA, RAA, and HRCAA baseline schemes were previously proposed in \cite{Cylindrical_VTC}, \cite{Softbank_giga}, and \cite{Cylindrical_VTC}, respectively. For a fair comparison among all schemes, we utilize the same values for the total number of elements $M$, selected number of elements $M_k$, and $3~\mathrm{dB}$ beamwidth $\theta_{\text{3dB}}$ for the proposed and baseline schemes.
Also, identical parameters, signaling, and channel models are used for the three baseline schemes and the proposed HAA scheme. We also employ the same beamforming, antenna selection, and power allocation techniques for all four schemes. The sole distinction among the four schemes is their elements' angles with respect to terrestrial users, denoted as $\theta_{km}$. This disparity results in variation in antenna element gains and leads to the arrays with different data rates.

\begin{table}[t]
	\caption{Simulation parameters.}
	  \vspace{-11pt}
	\begin{center}
	{\rowcolors{2}{white}{black!10}
		\begin{tabular}{|c|c|} \hline \label{sim_param}
		%	\hline
		\textbf{Parameter} & 	\textbf{Value} \\ 
			\hline \hline
		Carrier frequency, $f_\mathrm{C}$  & $2~\mathrm{GHz}$ \\ 
			%\hline
			Length of square area for distribution of users & $60~\mathrm{km}$ \\ 
		%	\hline
			Communication bandwidth & $20~\mathrm{MHz}$ \\ 
		%	\hline
			Noise power spectral density &  $-174~\rm{dBm/Hz}$\\ 
   Noise figure & $7 ~\rm{dB}$\\ 
		%	\hline
			 Number of users, $K$ & $16$ \\ 
		%	\hline
			Maximum transmit power at HAPS,  $P_{\mathsf{HAPS}}$ & $100~\mathrm{Watt}$ \\ 
		%	\hline
			Number of transmit antenna elements at HAPS, $M$ & $2650$\\ 
   Hemisphere radius & $3~\rm{m}$\\ 
				 
		%	\hline
		% 	Maximum directional gain for each antenna element $G_{\text{E,max}}$  & $8~\mathrm{dBi}$ \\ 
		% %	\hline
			Front-to-back ratio for each antenna element,  $\gamma_{\mathsf{max}}$ & $30~\mathrm{dB}$\\
		%	\hline
  $3~\mathrm{dB}$ beamwidth of each antenna element, $\theta_{\mathsf{3dB}}$ & $25~\mathrm{degrees}$ \\
			Number of selected antenna elements for each user, $M_k$ & 64 \\
		%	\hline
			Altitude of HAPS & $20 ~\rm{km}$\\
			\hline
		\end{tabular}}
	\end{center}
	\vspace{-19pt}
\end{table}

\begin{figure}[t]
\begin{subfigure}{0.5\textwidth}
\includegraphics[width=1\linewidth, height=4.5cm]{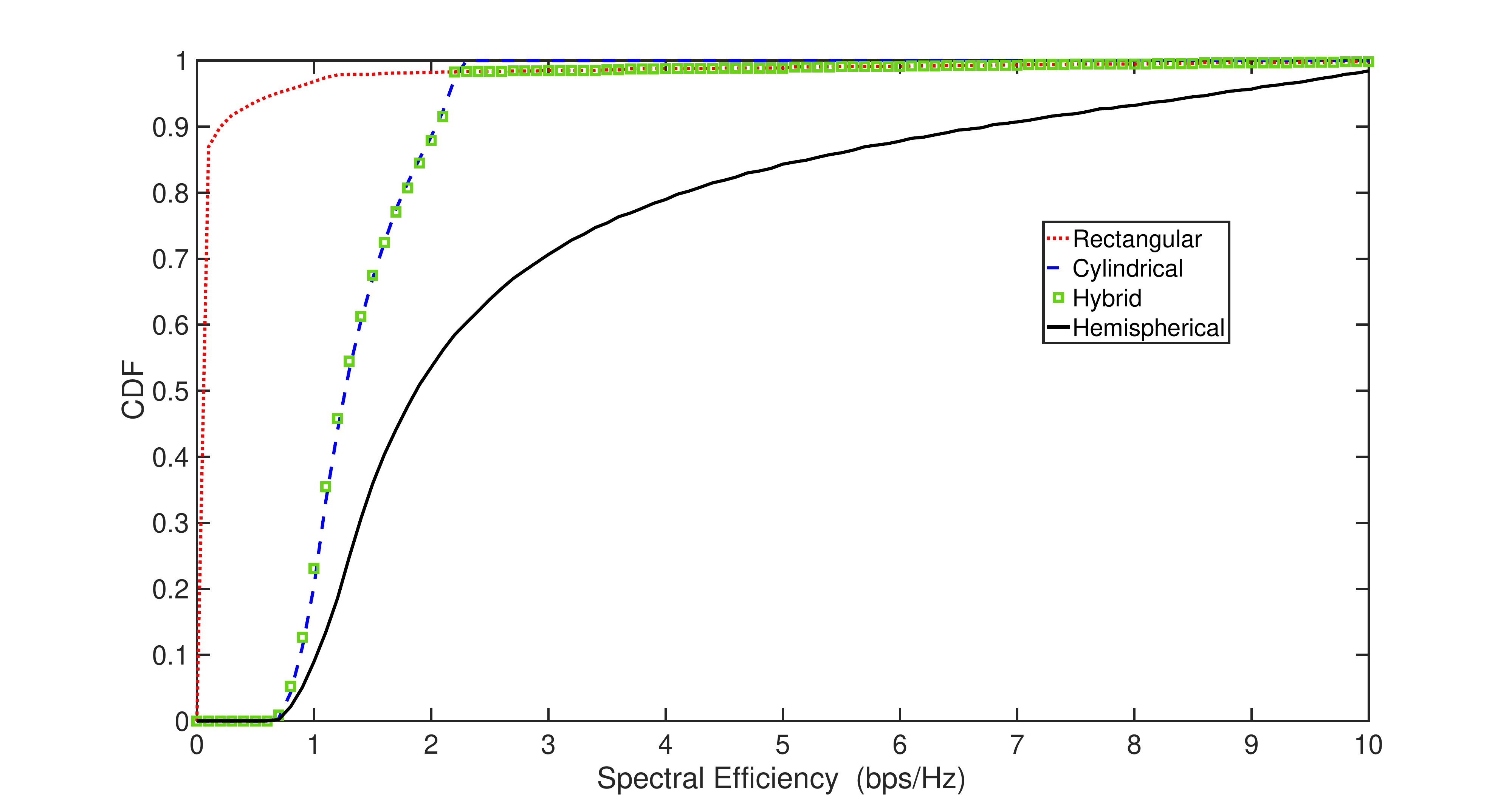} 
\caption{$200 ~\mathrm{km} \times 200 ~\mathrm{km}$}
\label{CDF_WP_200}
\end{subfigure}
\begin{subfigure}{0.5\textwidth}
\includegraphics[width=1\linewidth, height=4.5cm]{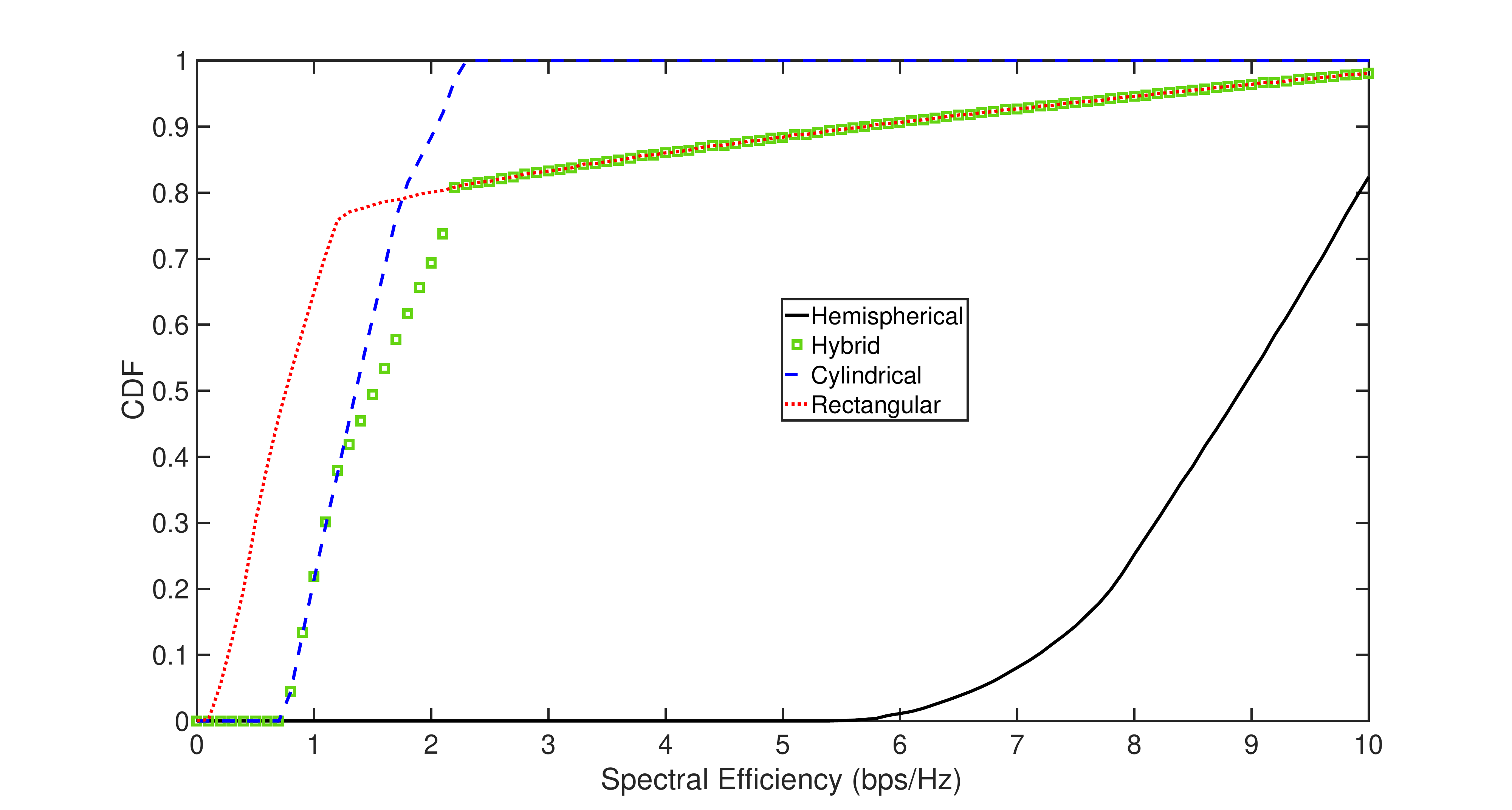}
\caption{$60 ~\mathrm{km} \times 60 ~\mathrm{km}$}
\label{CDF_WP_60}
\end{subfigure}
\caption{
CDF of spectral efficiency for a user uniformly distributed across $10,000$ different locations in two square urban areas with dimensions of $200~\mathrm{km} \times 200~\mathrm{km}$ and $60~\mathrm{km} \times 60~\mathrm{km}$. The user is allocated a fixed power of $1~\mathrm{Watt}$ and served by 64 antenna elements ($M_k=64$).
}
\label{CDF_WP}
\end{figure}

Fig.~\ref{CDF_WP} shows the cumulative distribution function (CDF) of spectral efficiency for a user that is uniformly distributed across $10,000$ different locations in two square urban areas with dimensions $200~\mathrm{km} \times 200~\mathrm{km}$ and $60~\mathrm{km} \times 60~\mathrm{km}$. We assumed the user is allocated a fixed power of $1~\mathrm{Watt}$ and served by 64 antenna elements ($M_k=64$).
As can be seen in both Fig.~\ref{CDF_WP_200} and Fig.~\ref{CDF_WP_60}, the HAA scheme outperforms the three other antenna array configurations.
In Fig.~\ref{CDF_WP_200}, where the coverage area is relatively large, the CAA scheme outperforms the RAA scheme for most users. This is because in the CAA, the antenna elements oriented towards the horizon ensure greater gains for distant users. However, in Fig.~\ref{CDF_WP_60}, the coverage area is smaller, and most users are in a closer proximity to the HAPS. In this case, the RAA scheme, whose antenna elements are oriented towards the nadir of the HAPS, offers users better gains than the CAA scheme.
Importantly, the HAA scheme performs the best since its antennas are directed towards every point on the ground.
Additionally, in both figures, the HRCAA scheme outperforms the RAA and CAA schemes. By capitalizing on the advantages of cylindrical and rectangular surfaces for distant and nearby users, respectively, the HRCAA achieves a CDF upper-bounded by the lower envelope of the CDF of the RAA and CAA schemes.
\begin{figure}[t]
\begin{subfigure}{0.24\textwidth}
\includegraphics[width=\textwidth]{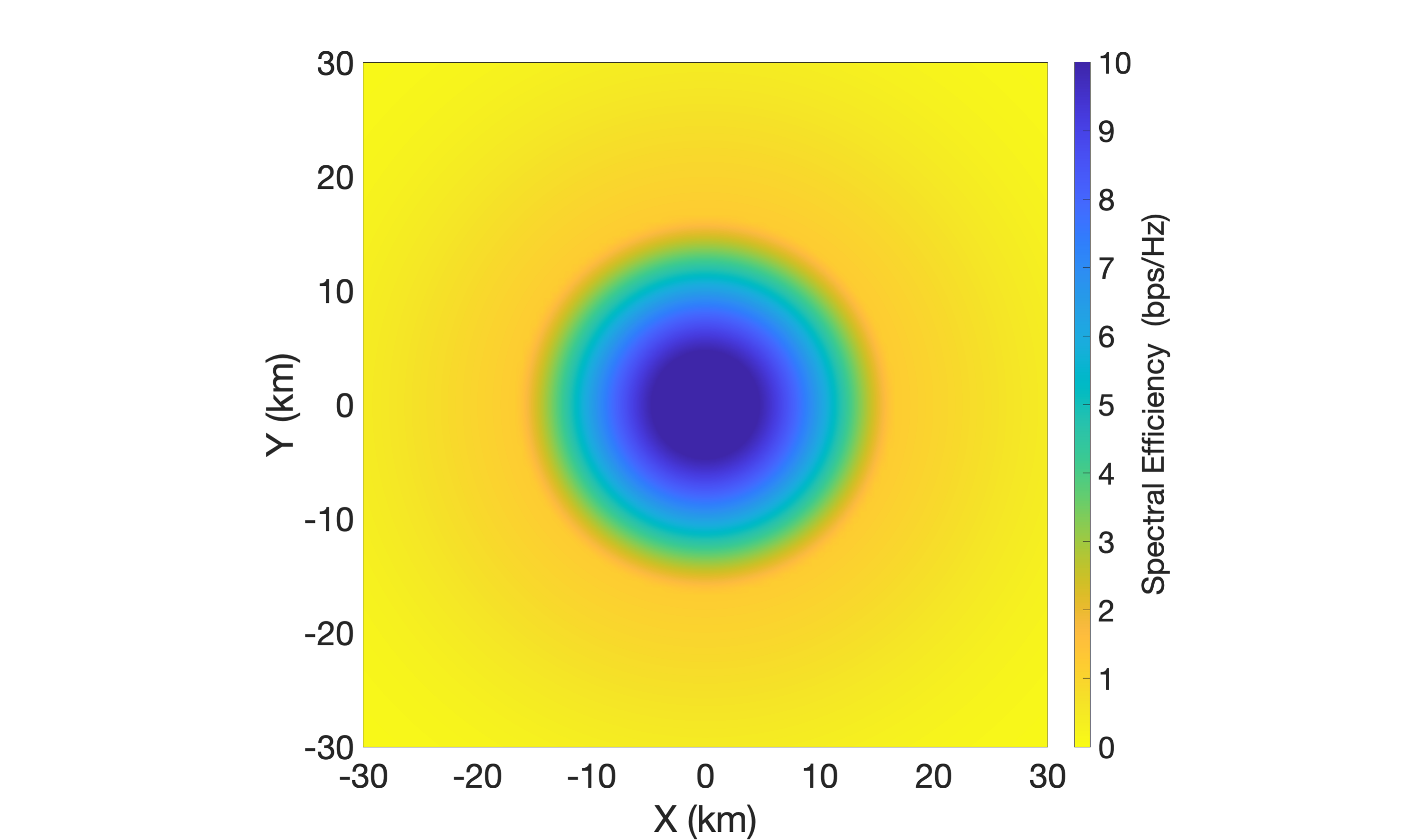} 
\caption{Rectangular}
\label{hear_wp_rect60}
\end{subfigure}
\begin{subfigure}{0.24\textwidth}
\includegraphics[width=1\linewidth]{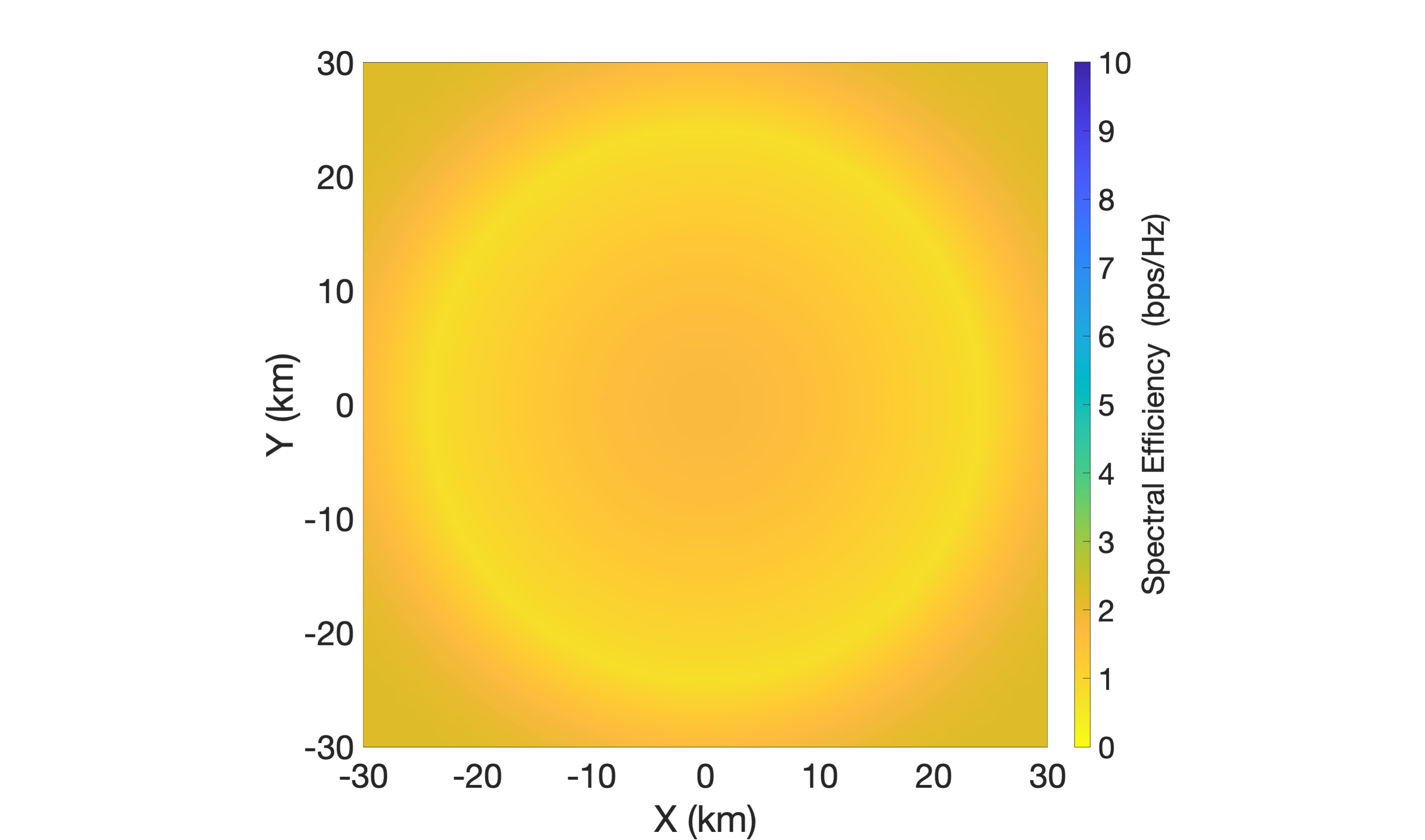} 
\caption{Cylindrical}
\label{heat_wp_cyl60}
\end{subfigure}
\begin{subfigure}{0.24\textwidth}
\includegraphics[width=1\linewidth]{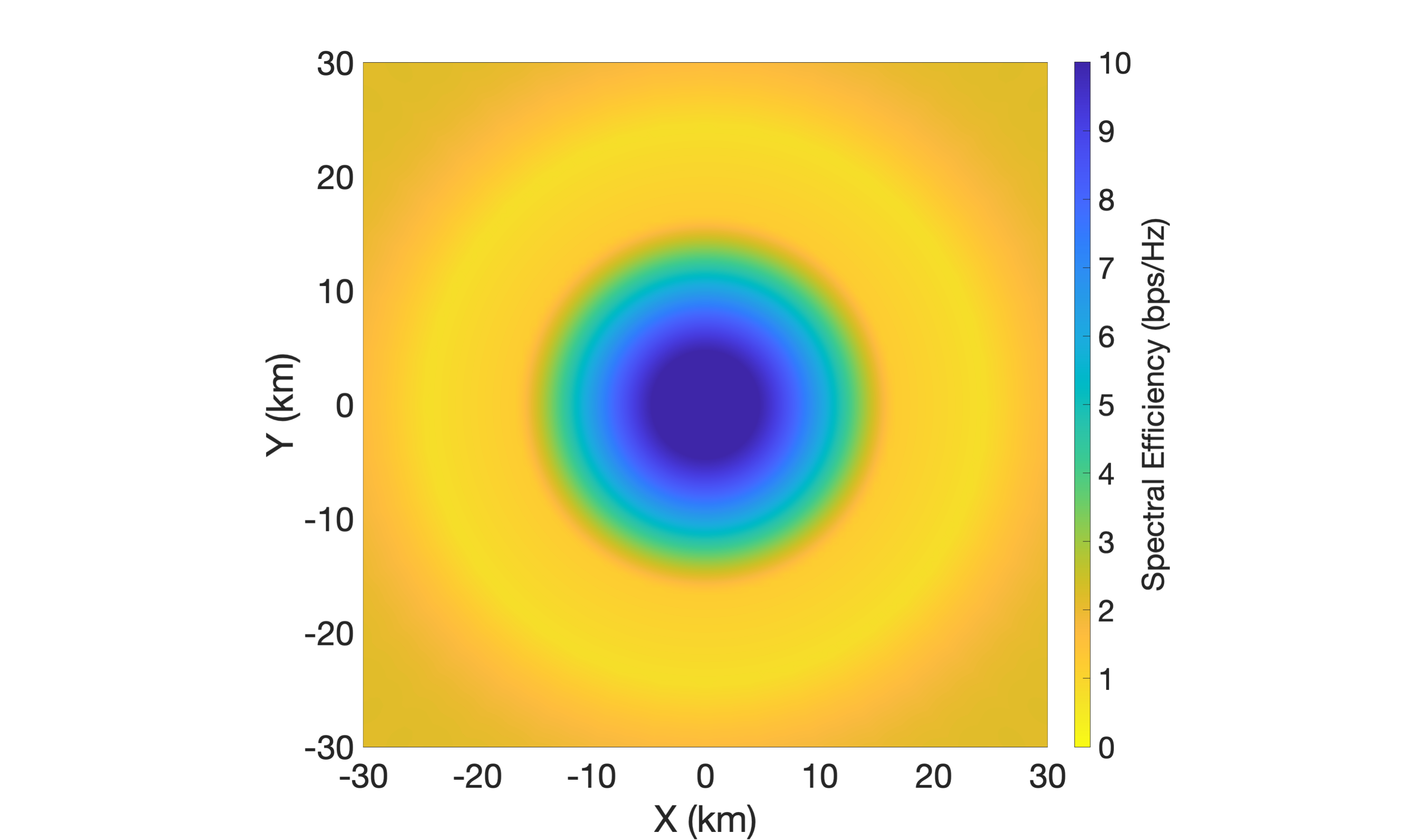}
\caption{Hybrid rectangular and cylindrical}
\label{heat_wp_hybrid60}
\end{subfigure}
\begin{subfigure}{0.24\textwidth}
\includegraphics[width=1\linewidth]{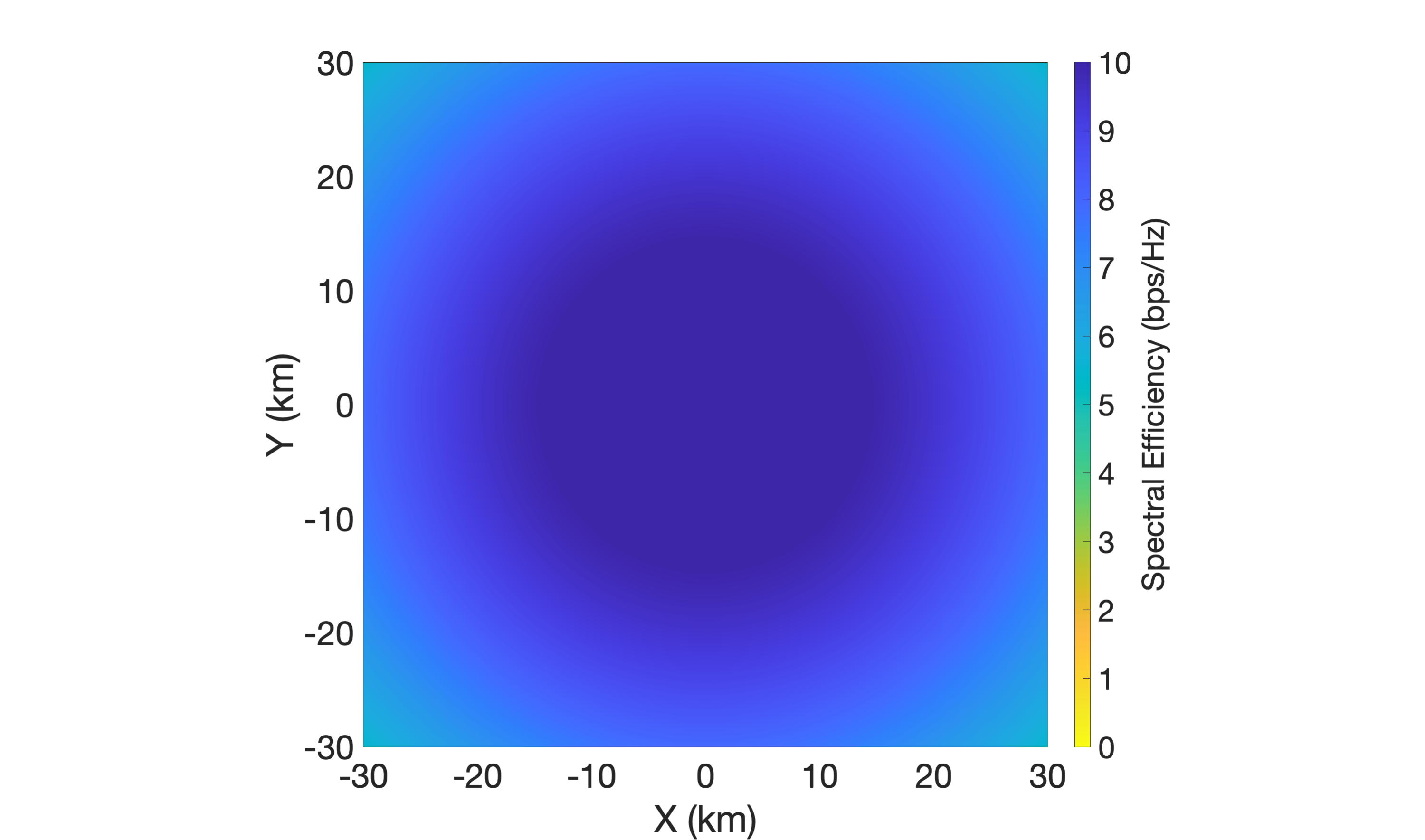}
\caption{Proposed hemispherical}
\label{heat_wp_hemi60}
\end{subfigure}
\caption{
Heatmaps of spectral efficiency for rectangular, cylindrical, hybrid, and hemispherical antenna arrays.  We assumed a user is uniformly distributed across $10,000$ different locations in a square urban area with dimensions $60~\mathrm{km} \times 60~\mathrm{km}$. The user is allocated a fixed power of $1~\mathrm{Watt}$ and served by 64 antenna elements ($M_k=64$).
}
\label{Heat_WP60}
\end{figure}

Fig.~\ref{Heat_WP60} presents heatmaps illustrating the spectral efficiency distribution of the rectangular, cylindrical, hybrid, and hemispherical antenna arrays. For this figure, we assumed a user is uniformly distributed across $10,000$ different locations in a square urban area with dimensions $60~\mathrm{km} \times 60~\mathrm{km}$. The user is allocated a fixed power of $1~\mathrm{Watt}$ and served by 64 antenna elements ($M_k=64$).
As can be seen in Fig.~\ref{hear_wp_rect60}, regions beneath the HAPS exhibit higher performance than distant locations. This outcome is attributable to the orientation of antenna elements in the RAA, which face downward and thus provide maximum gains to the areas directly beneath them.
In Fig.~\ref{heat_wp_cyl60}, the CAA scheme showcases an inverse trend. As the antenna elements are oriented toward the horizon in the CAA, remote areas experience superior spectral efficiencies compared to those of nearby ones due to their higher antenna gains.
Fig.~\ref{heat_wp_hybrid60} illustrates the advantages of the HRCAA scheme. This configuration combines the benefits of the RAA and CAA schemes to offer commendable data rates to both nearby and distant users.
Finally, Fig.~\ref{heat_wp_hemi60} highlights the HAA scheme, which stands out by providing favorable spectral efficiencies across all regions. This result can be attributed to the fact that users at every location can align with antenna elements oriented toward them and have a robust spectral efficiency for all regions.

\begin{figure}[t]
\begin{subfigure}{0.24\textwidth}
\includegraphics[width=\textwidth]{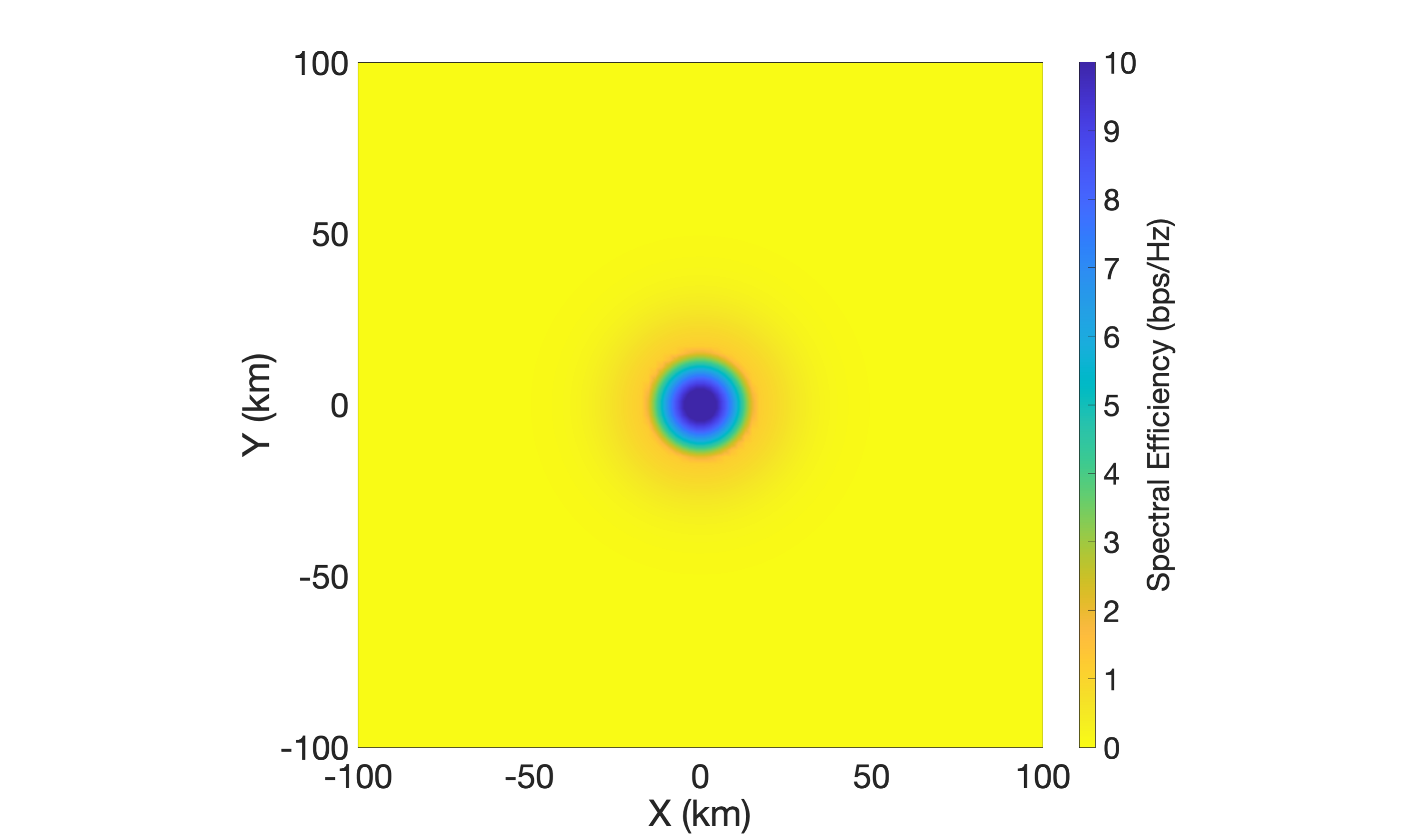} 
\caption{Rectangular}
\label{hear_wp_rect200}
\end{subfigure}
\begin{subfigure}{0.24\textwidth}
\includegraphics[width=1\linewidth]{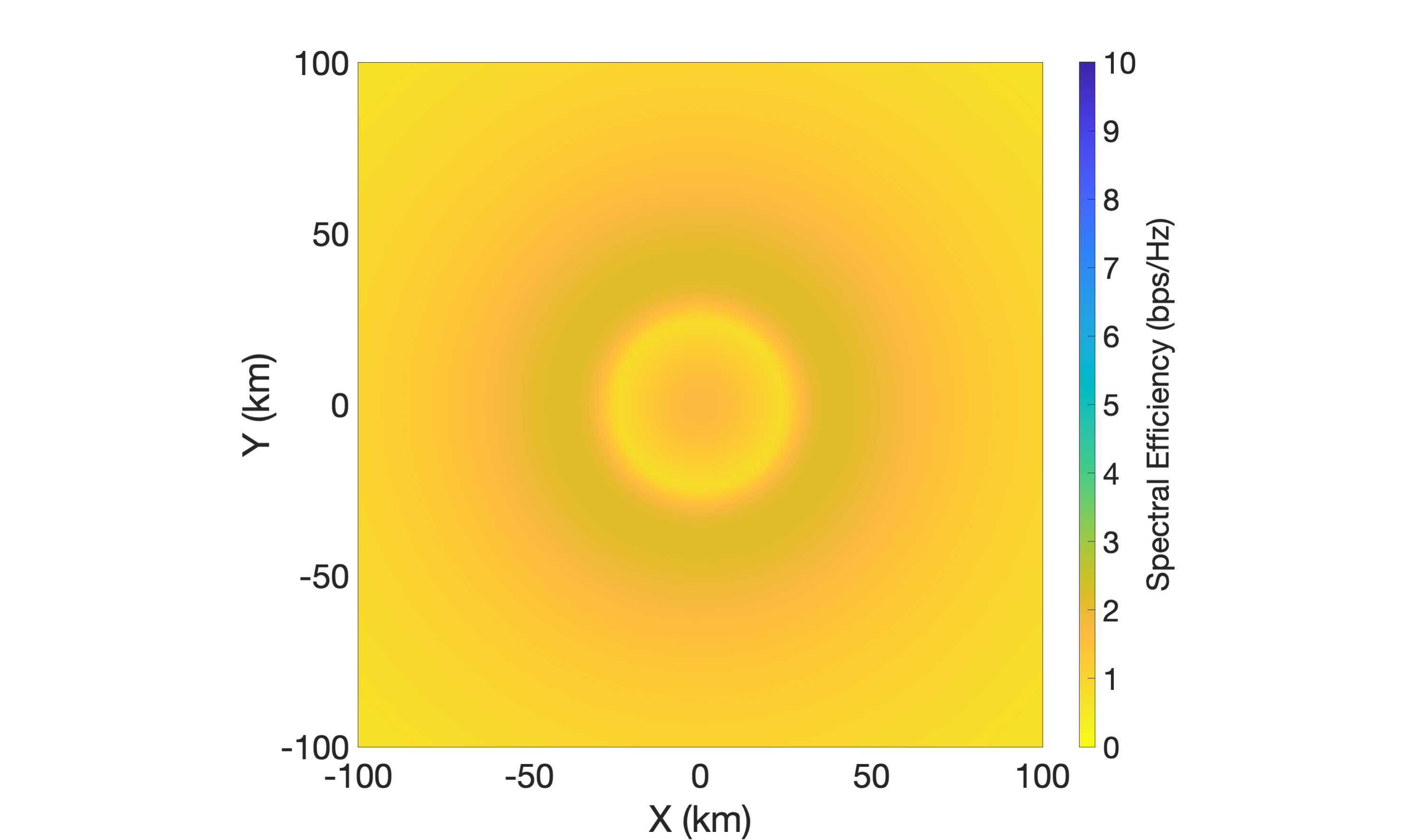} 
\caption{Cylindrical}
\label{heat_wp_cyl200}
\end{subfigure}
\begin{subfigure}{0.24\textwidth}
\includegraphics[width=1\linewidth]{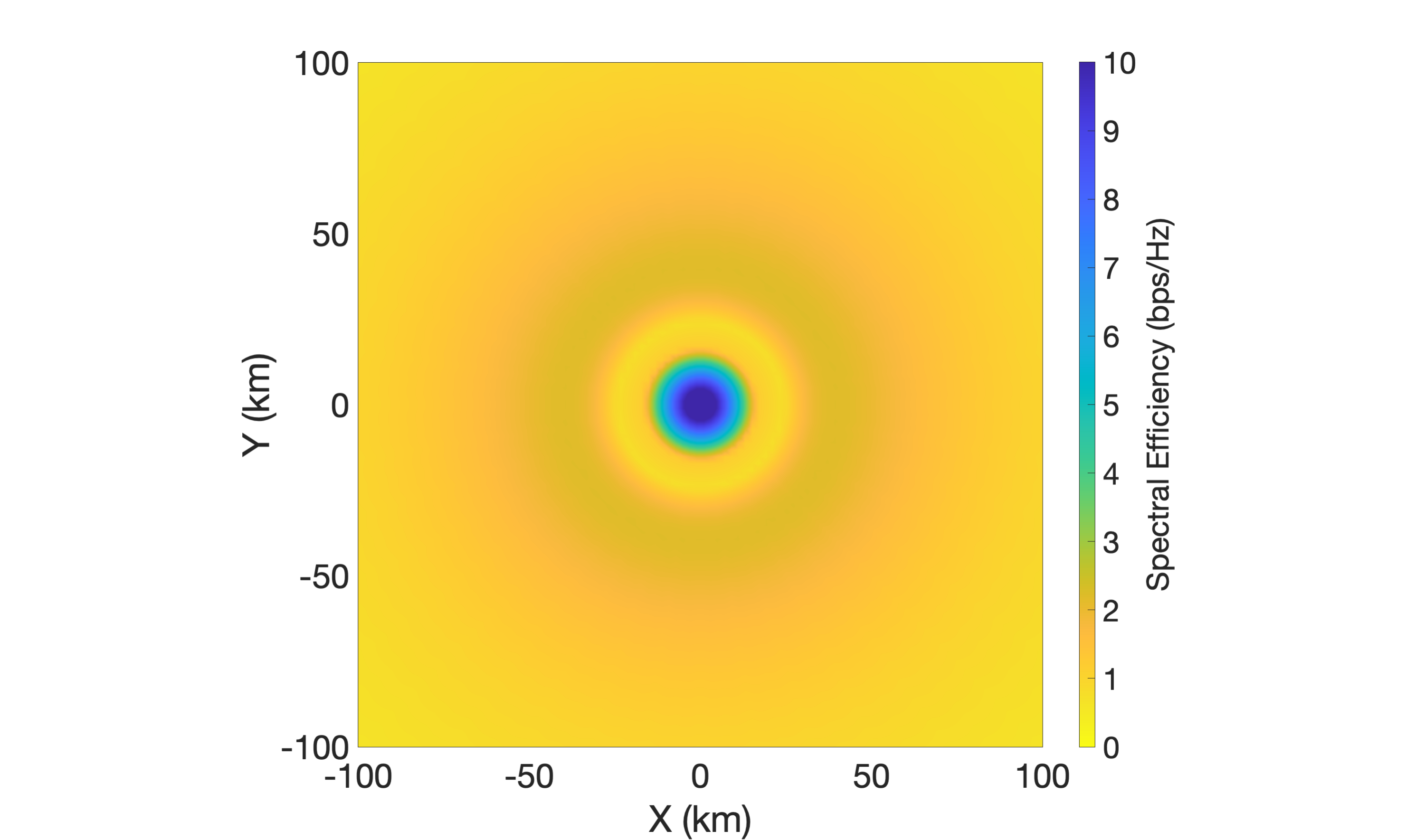}
\caption{Hybrid rectangular and cylindrical}
\label{heat_wp_hybrid200}
\end{subfigure}
\begin{subfigure}{0.24\textwidth}
\includegraphics[width=1\linewidth]{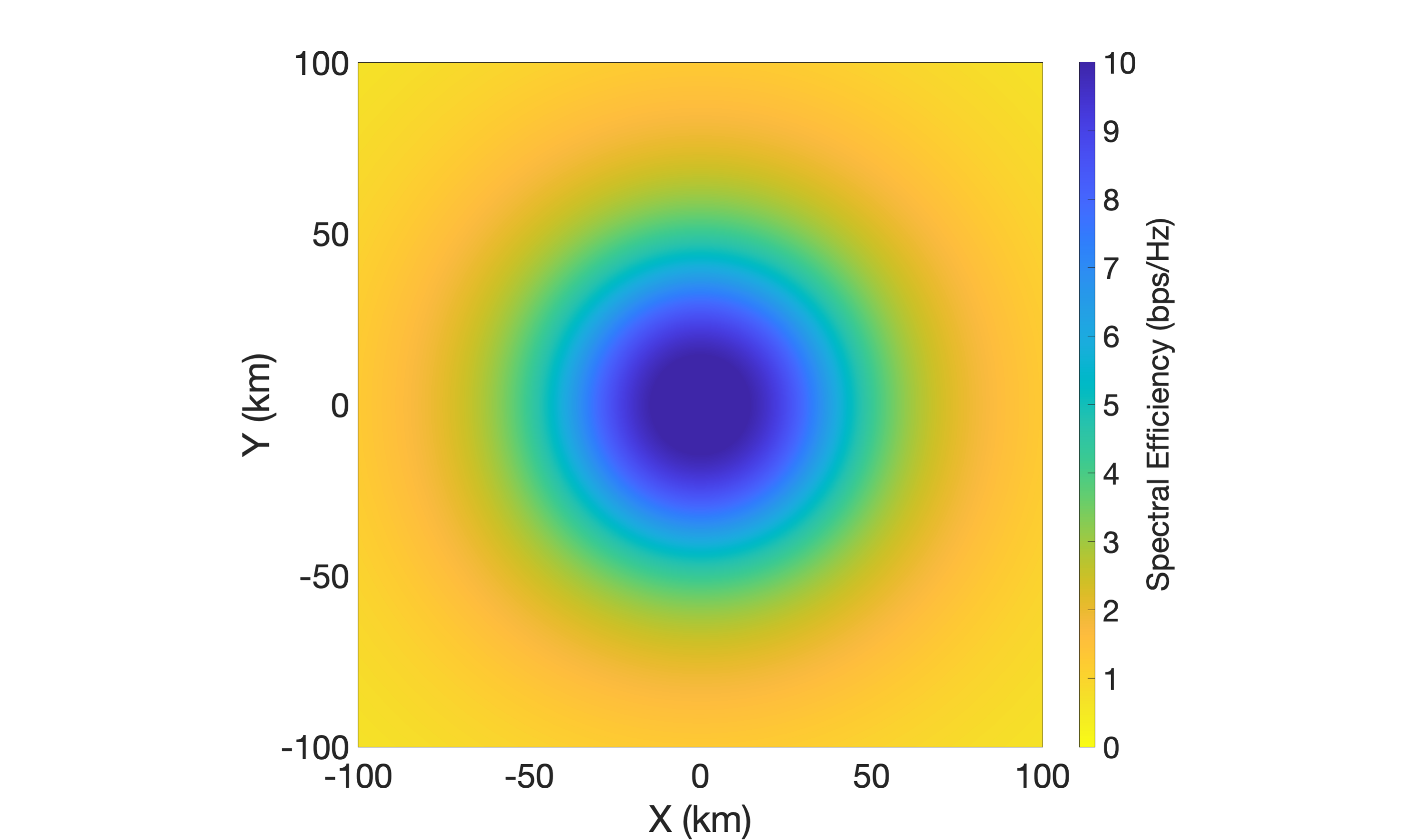}
\caption{Proposed hemispherical}
\label{heat_wp_hemi200}
\end{subfigure}
\caption{Heatmaps of spectral efficiency for rectangular, cylindrical, hybrid, and hemispherical antenna arrays.  We assumed a user is uniformly distributed across $10,000$ different locations in a square urban area with dimensions $200~\mathrm{km} \times 200~\mathrm{km}$. The user is allocated a fixed power of $1~\mathrm{Watt}$ and served by 64 antenna elements ($M_k=64$).}
\label{Heat_WP_200}
\end{figure}

Fig.~\ref{Heat_WP_200} displays another set of heatmaps of spectral efficiency distribution of the rectangular, cylindrical, hybrid, and hemispherical antenna arrays. This time, we assumed a user is uniformly distributed across $10,000$ different locations in a square urban area with dimensions $200~\mathrm{km} \times 200~\mathrm{km}$. The user is allocated a fixed power of $1~\mathrm{Watt}$ and served by 64 antenna elements ($M_k=64$).
Fig.~\ref{hear_wp_rect200} and Fig.~\ref{heat_wp_cyl200} show that peak spectral efficiencies of the RAA and CAA schemes are achieved in the regions with small and medium-sized radii, respectively. However, remote users experience much lower spectral efficiencies due to substantial path loss.
As can be seen in Fig.~\ref{heat_wp_hybrid200}, the HRCAA scheme can provide peak spectral efficiencies to regions with both small and medium-sized radii. Finally, in Fig.~\ref{heat_wp_hemi200}, the HAA scheme stands out for its capacity to deliver competitive spectral efficiencies across a broader region than the three baseline schemes. 
In the HAA scheme, antenna elements' gains uniformly influence the spectral efficiencies of all users. Therefore, the path loss is the only factor that impacts spectral efficiencies. Note that as one can see in the figure, the proposed hemispherical scheme has much better rates as compared to the baseline schemes. Said differently, in order to provide a specific rate, the proposed scheme requires less power consumption, which is essential for HAPS with its limited power resources.

\begin{figure}[t]
\begin{subfigure}{0.24\textwidth}
\includegraphics[width=\textwidth]{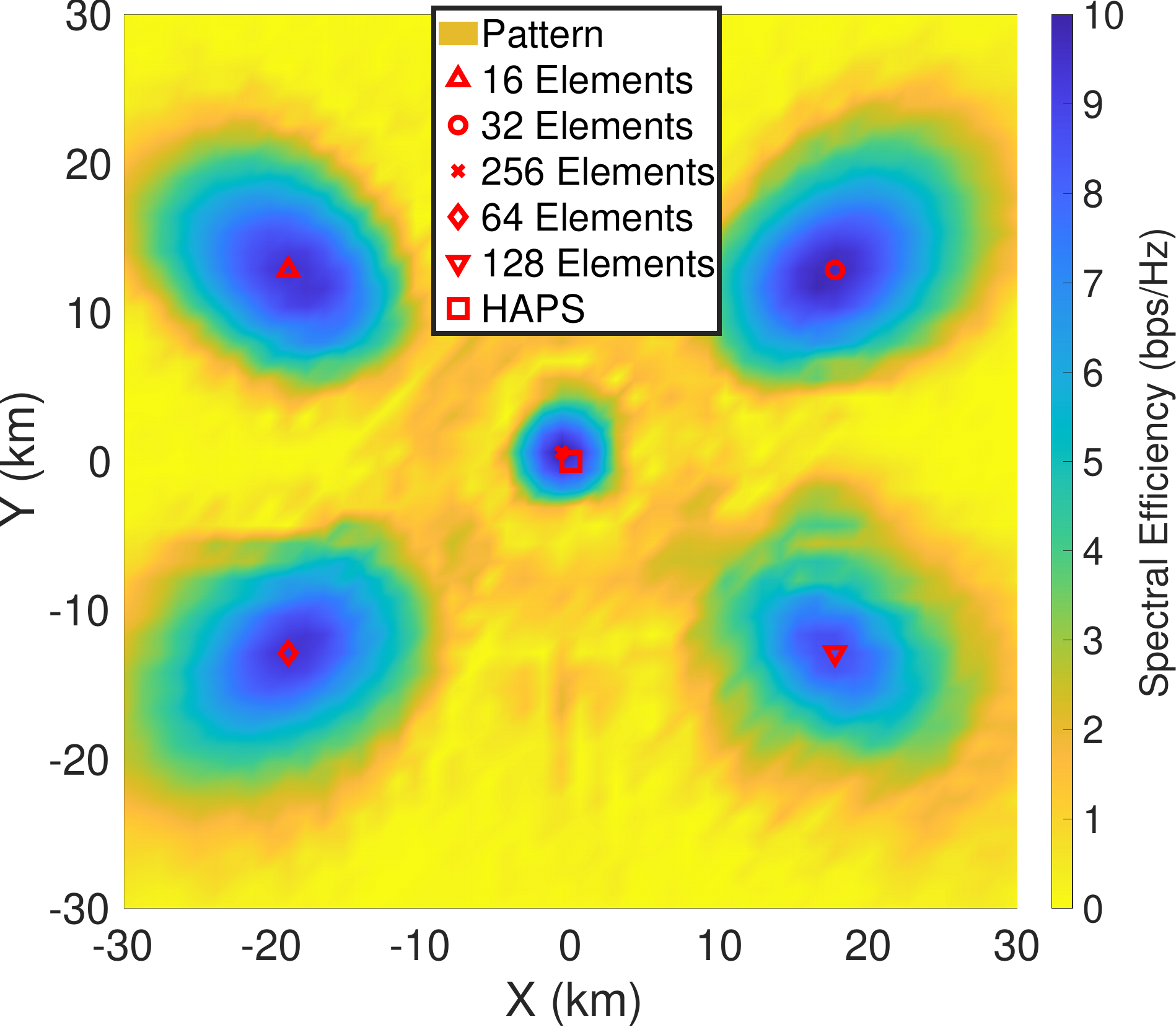} 
\caption{$\theta_{\text{3dB}}=10$}
\label{hear_M_10}
\end{subfigure}
\begin{subfigure}{0.24\textwidth}
\includegraphics[width=1\linewidth]{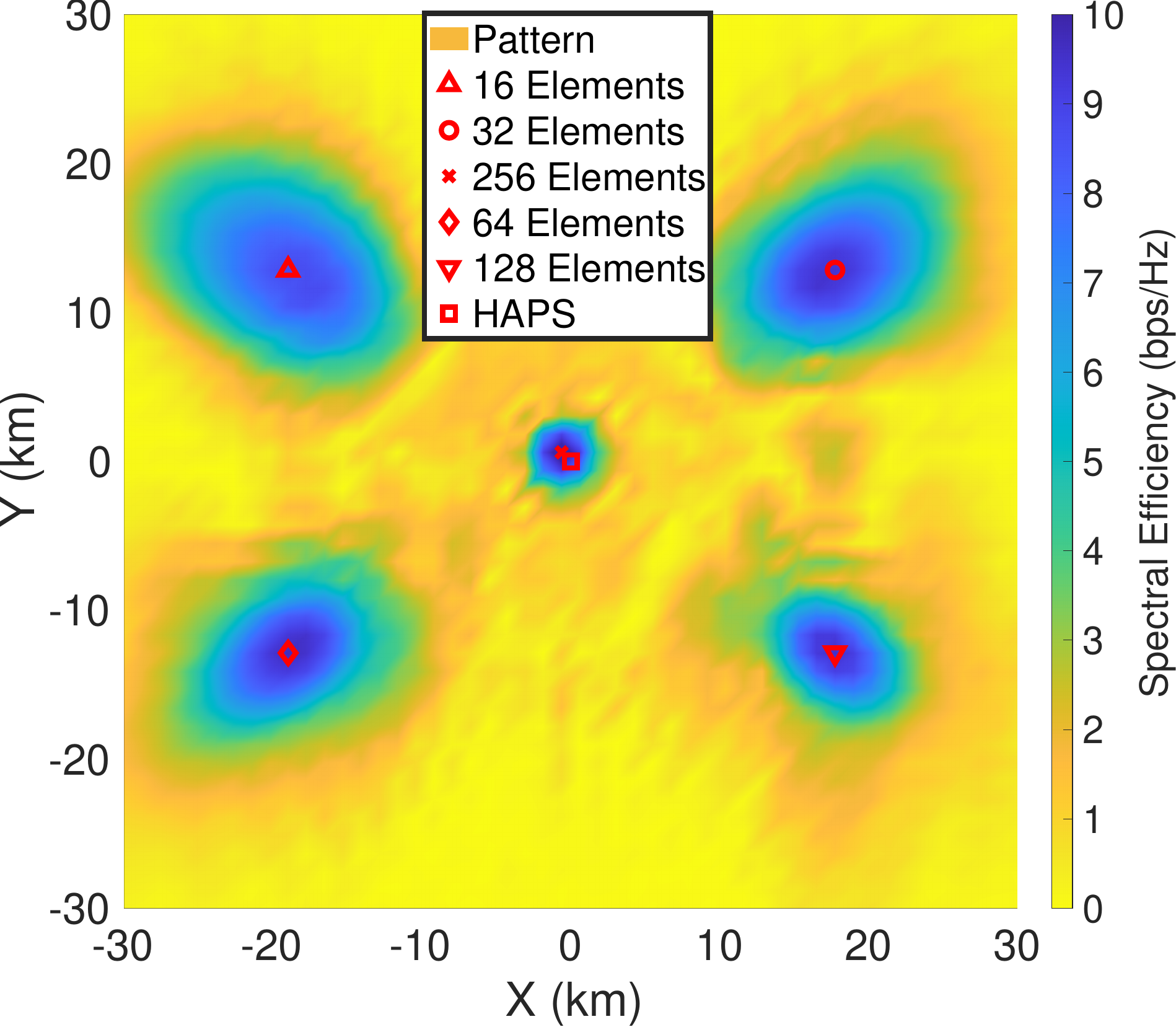} 
\caption{$\theta_{\text{3dB}}=15$}
\label{heat_M_15}
\end{subfigure}
\begin{subfigure}{0.24\textwidth}
\includegraphics[width=1\linewidth]{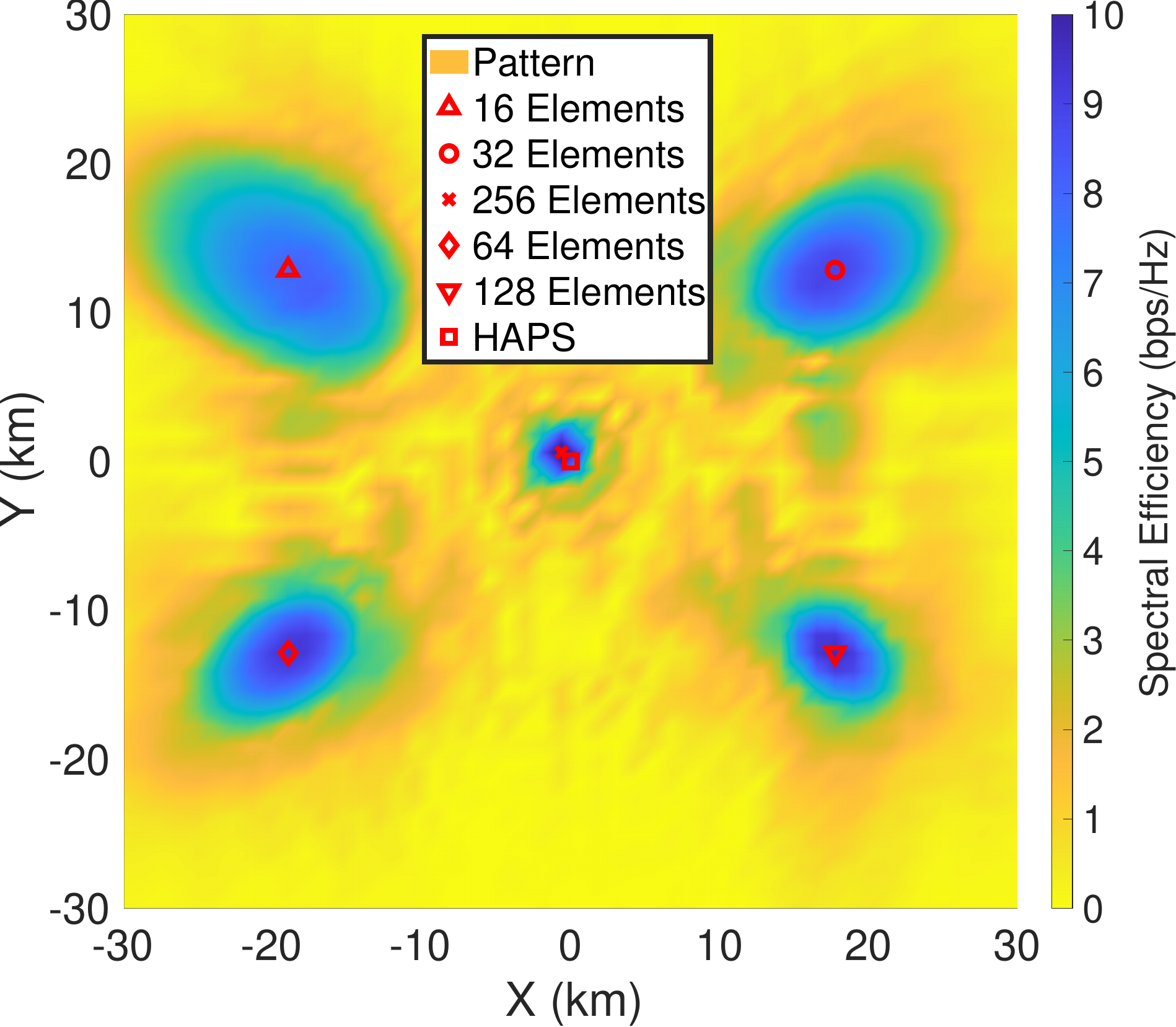} 
\caption{$\theta_{\text{3dB}}=20$}
\label{heat_M_20}
\end{subfigure}
\begin{subfigure}{0.24\textwidth}
\includegraphics[width=1\linewidth]{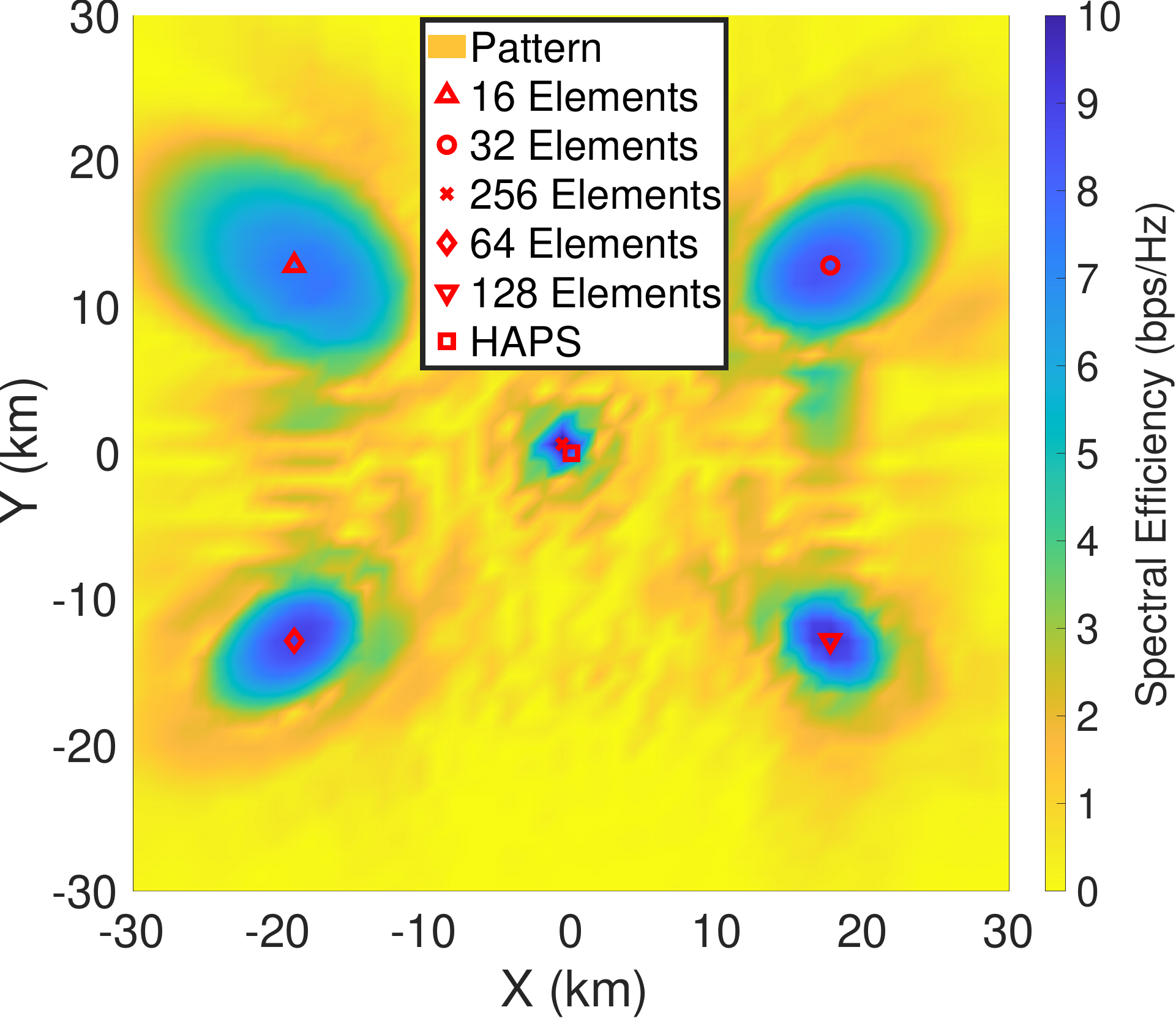} 
\caption{$\theta_{\text{3dB}}=25$}
\label{heat_M_25}
\end{subfigure}
\caption{
Heatmaps displaying the spectral efficiencies for five beams with different numbers of selected antenna elements  are presented for the proposed HAA. Each beam is allocated a fixed power of $1~\mathrm{Watt}$. These heatmaps consider four distinct $3~\mathrm{dB}$ beamwidth values for the antenna elements.
}
\label{Heat_M_element}
\end{figure}

Fig.~\ref{Heat_M_element} presents a set of heatmaps showing the spectral efficiencies for five beams with different numbers of antenna elements selected, i.e., $M_k=16, 32, 64, 128,$ and $ 256$, for the proposed HAA scheme. Each beam is allocated a fixed power of $1~\mathrm{Watt}$.
In Fig.~\ref{hear_M_10}, we visualize the beams with a $3~\mathrm{dB}$ beamwidth of $\theta_{\mathsf{3dB}}=10$. As shown in the figure, in cases where the beams of single antenna elements are narrow, increasing the number of elements selected from $M_k=16$ to $M_k=256$ does not enhance the array gain. This finding indicates that when the $3~\mathrm{dB}$ beamwidth of elements is small, selecting a massive number of elements for beam creation may not yield substantial benefits. For instance, when $M_k=128$, the power of the beam is equally divided among all $128$ elements selected. However, most antennas exhibit negligible gain for the beam, which makes it impractical to allocate power to them. Conversely, when $M_{k}=16$, the power is distributed among antennas that offer significant gains for the beam.
Fig.~\ref{heat_M_15}, Fig.~\ref{heat_M_20}, and Fig.~\ref{heat_M_25} illustrate the footprints of beams with $3~\mathrm{dB}$ beamwidth of $\theta_{\mathsf{3dB}}=15$, $\theta_{\mathsf{3dB}}=20$, and $\theta_{\mathsf{3dB}}=25$, respectively. It becomes apparent that widening the beams of each element makes it possible to achieve an array gain with a high number of antenna elements selected. For instance, in Fig.~\ref{heat_M_25} where $\theta_{\mathsf{3dB}}=25$, a narrow beam with a high array gain is created with $M=128$ antenna elements. It is worth noting that, as one can see in this figure, there is increased interference on neighboring beams when $3~\mathrm{dB}$ beamwidth is a higher value.
Therefore, our results highlight that, when the $3~\mathrm{dB}$ beamwidth value is smaller, array (beamforming) gains cannot be effectively harnessed and it is more advantageous to create beams with a smaller number of antenna elements selected. However, with higher $3~\mathrm{dB}$ beamwidth values, beamforming becomes a more potent tool that supports the creation of very narrow high-gain beams utilizing a larger number of antenna elements.

\begin{figure}[t]
\begin{subfigure}{0.24\textwidth}
\includegraphics[width=\textwidth]{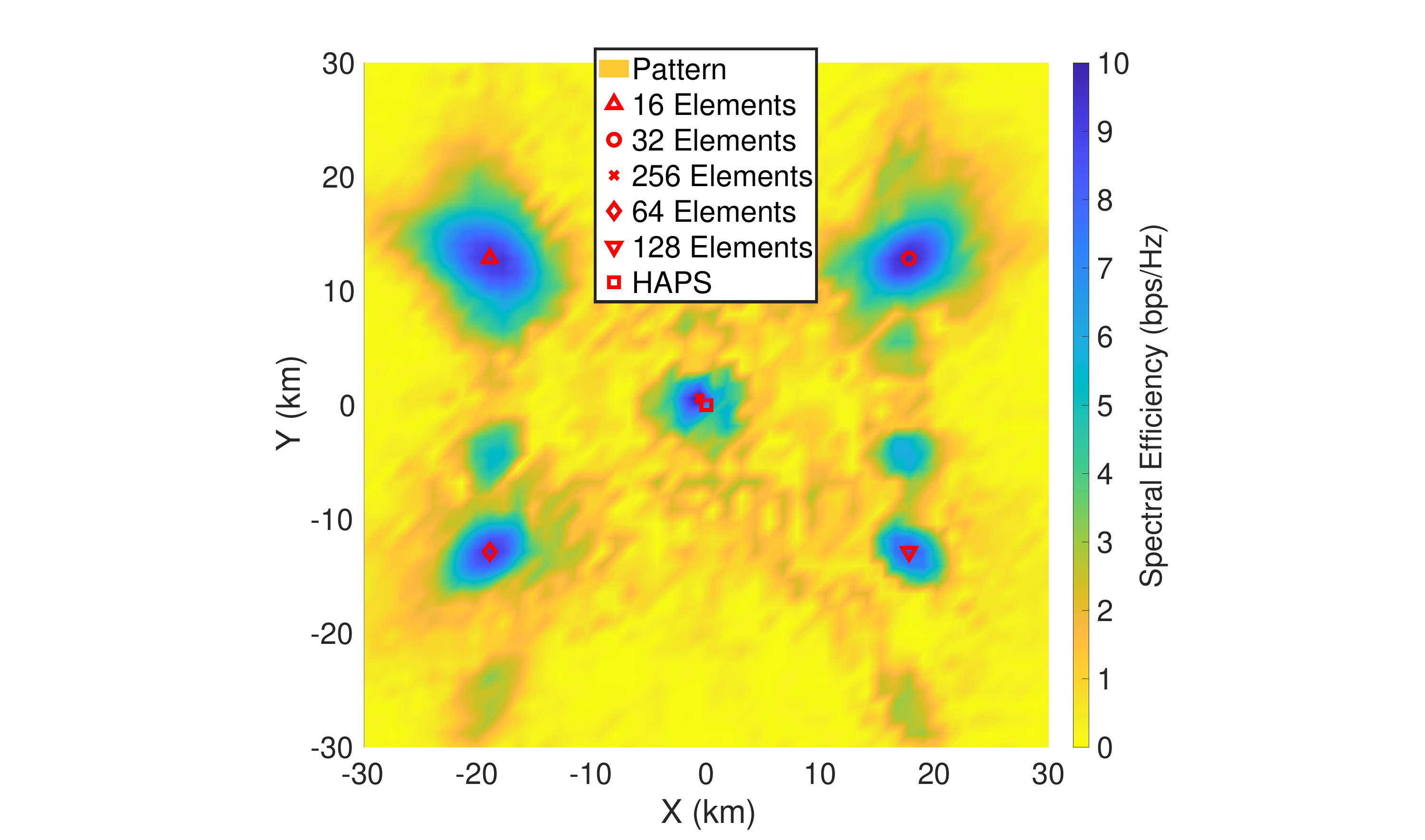} 
\caption{$6~\mathrm{m}$ radius, $\theta_{\text{3dB}}=10$}
\label{heat_r6_10d}
\end{subfigure}
\begin{subfigure}{0.24\textwidth}
\includegraphics[width=1\linewidth]{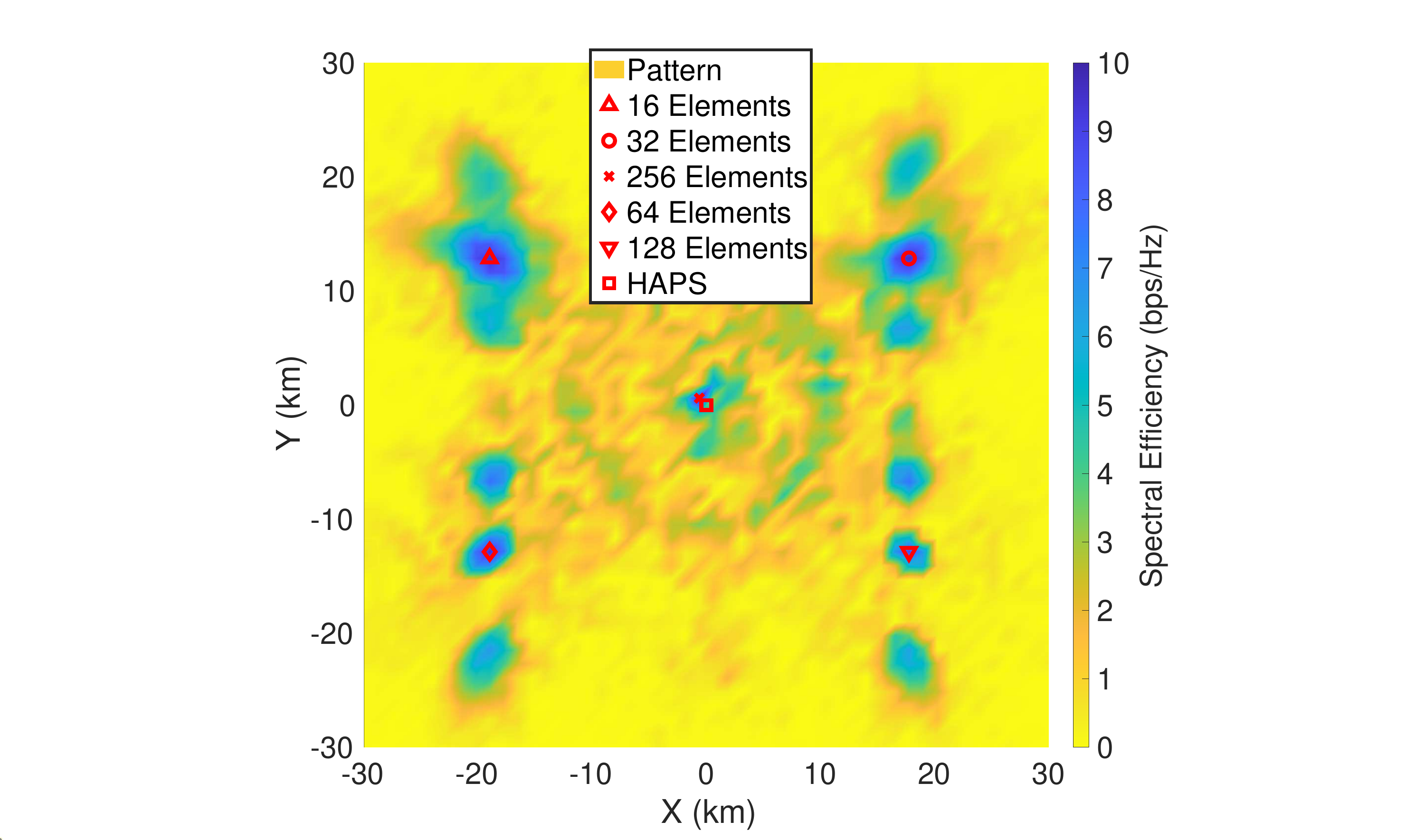} 
\caption{$9~\mathrm{m}$ radius, $\theta_{\text{3dB}}=10$}
\label{heat_r9_10d}
\end{subfigure}
\begin{subfigure}{0.24\textwidth}
\includegraphics[width=1\linewidth]{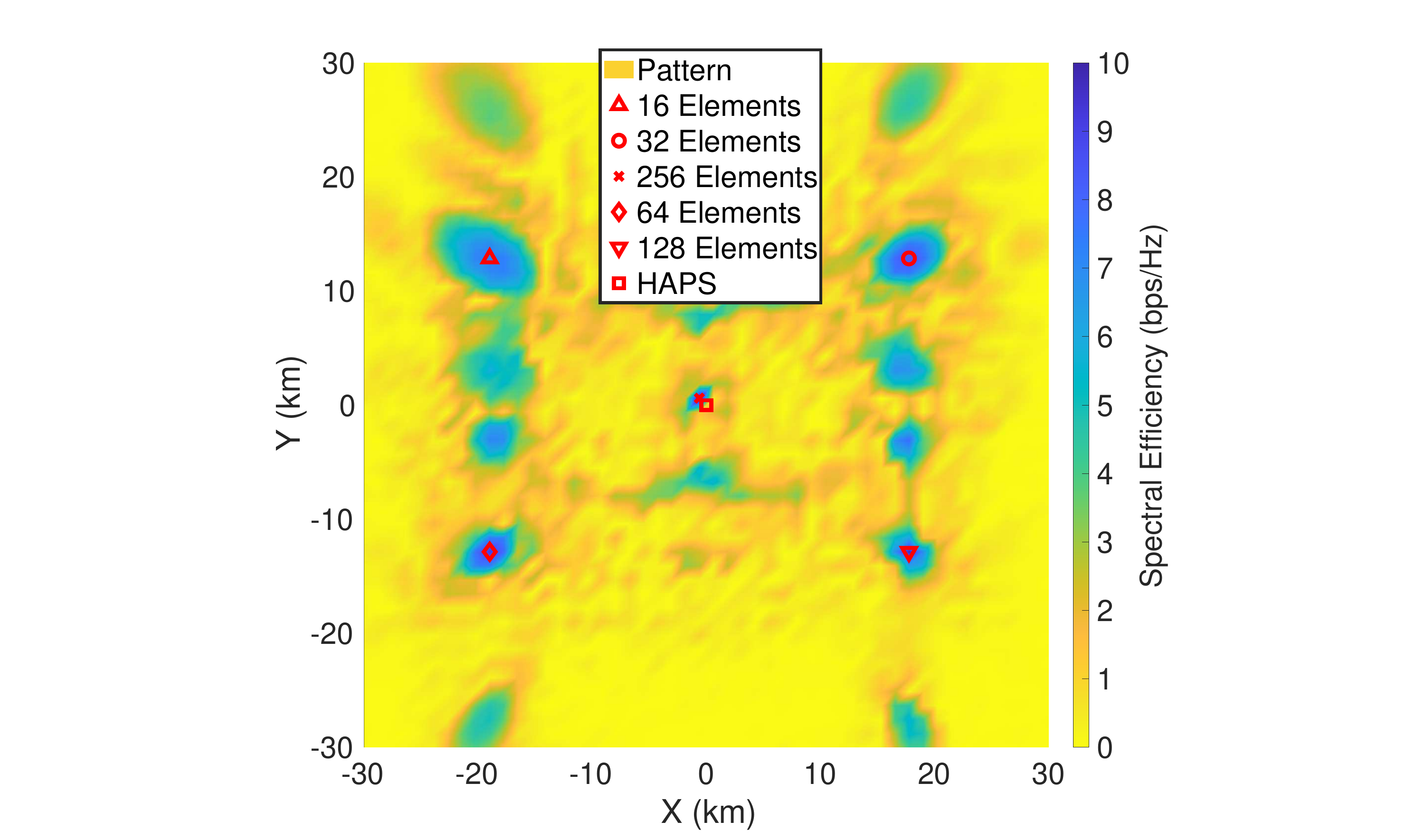} 
\caption{$6~\mathrm{m}$ radius, $\theta_{\text{3dB}}=25$}
\label{heat_r6_25d}
\end{subfigure}
\begin{subfigure}{0.24\textwidth}
\includegraphics[width=1\linewidth]{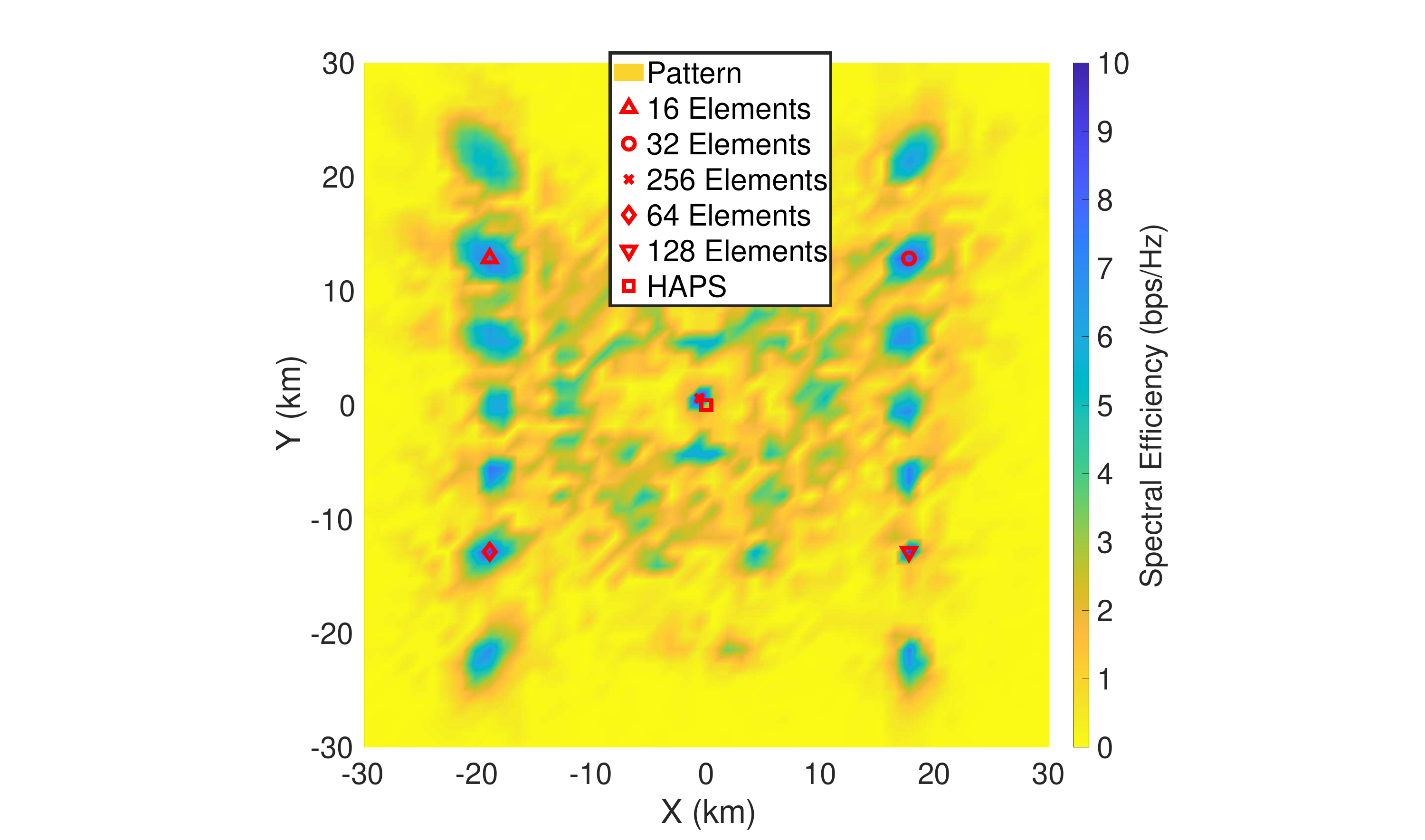} 
\caption{$9~\mathrm{m}$ radius, $\theta_{\text{3dB}}=25$}
\label{heat_r9_25d}
\end{subfigure}
\caption{
Heatmaps of spectral efficiencies for five beams with different numbers of selected antenna elements are presented for the proposed HAA. Each beam is allocated a fixed power of $1~\mathrm{Watt}$. The heatmaps consider two distinct $3~\mathrm{dB}$ beamwidth values for the antenna elements, and two distinct radii for the hemispherical antenna array.
}
\label{Heat_radius}
\end{figure}

 Fig.~\ref{Heat_radius} shows a set of heatmaps displaying the patterns of 
spectral efficiencies for five beams with different numbers of selected antenna elements — i.e., $M_k=16, 32, 64, 128,$ and $ 256$ — for the proposed HAA scheme. Five red markers represent the centers of the five beams formed by the HAPS on the ground . 
In this figure, as indicated by the color bar, the spectral efficiency ranges from $0 ~\rm{bps/Hz}$ (light yellow) to $10 ~\rm{bps/Hz}$ (dark blue). Each beam is allocated a fixed power of $1~\mathrm{Watt}$.
For a $3~\mathrm{dB}$ beamwidth of $\theta_{\mathsf{3dB}}=10$, in Fig.~\ref{heat_r6_10d} and Fig.~\ref{heat_r9_10d}, we visualize the beams with a hemisphere radius of $6 ~\rm{m}$ and $9 ~\rm{m}$, respectively. Furthermore, for a $3~\mathrm{dB}$ beamwidth of $\theta_{\mathsf{3dB}}=25$, Fig.~\ref{heat_r6_25d} and Fig.~\ref{heat_r9_25d} show the beams with a hemisphere radius of $6 ~\rm{m}$ and $9 ~\rm{m}$, respectively. We can see that increasing the radius of the hemisphere, results in narrower beams with stronger grating lobes \cite{bjornson2024introduction} for both $3~\mathrm{dB}$ beamwidth values. This is due to the fact that, with an increase in the radius, the distance between the antenna elements increases as well. Note that for $\theta_{\mathsf{3dB}}=25$ case, due to a better array gain, the beams are narrower compared with the $\theta_{\mathsf{3dB}}=10$ case. However, due to their wider beams for each element, $\theta_{\mathsf{3dB}}=25$ case results in strong grating lobes.

\begin{figure}[t]
\begin{subfigure}{0.24\textwidth}
\includegraphics[width=\textwidth]{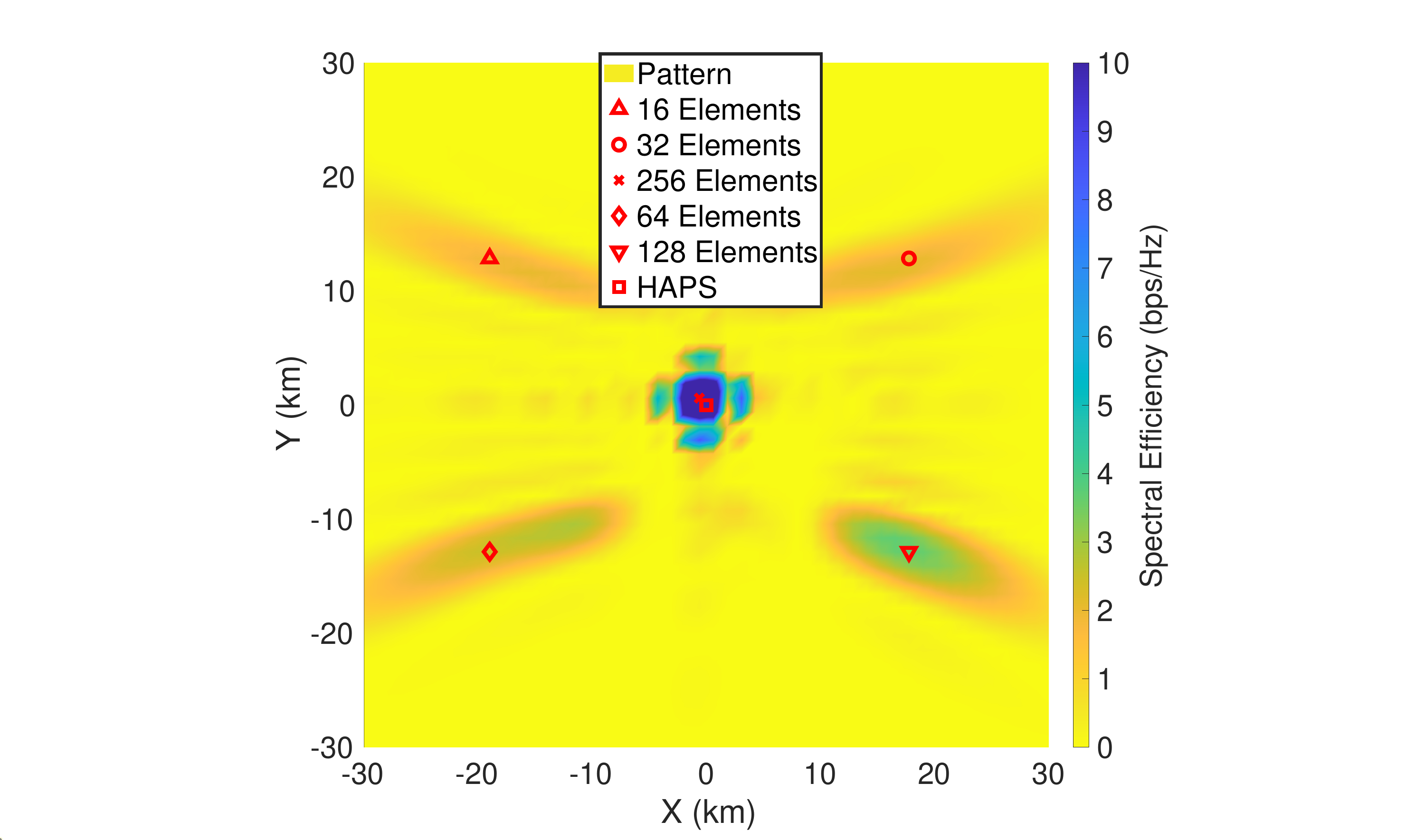} 
\caption{Rectangular, $\theta_{\text{3dB}}=10$}
\label{heat_M_R10}
\end{subfigure}
\begin{subfigure}{0.24\textwidth}
\includegraphics[width=1\linewidth]{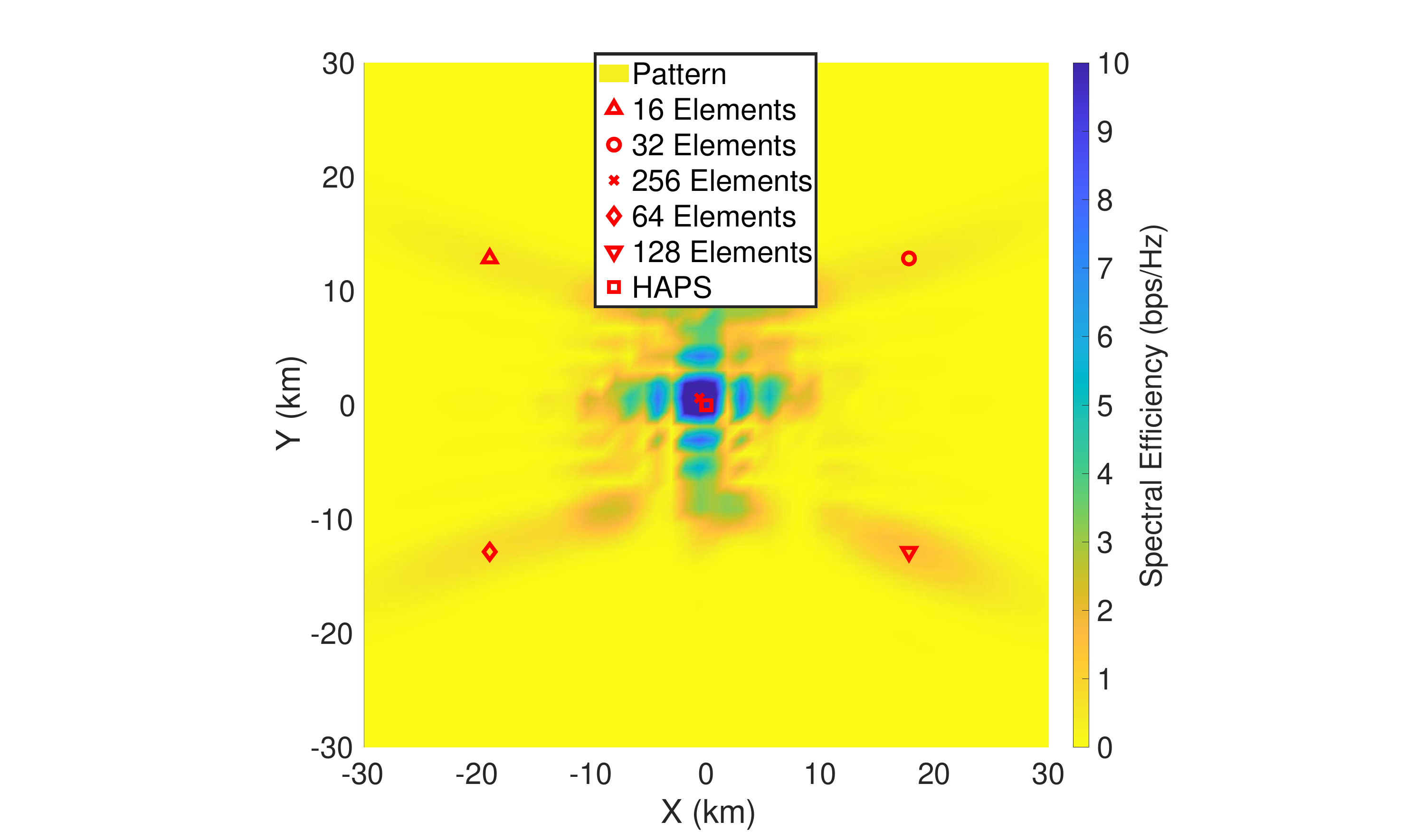} 
\caption{Rectangular, $\theta_{\text{3dB}}=25$}
\label{heat_M_R25}
\end{subfigure}
\begin{subfigure}{0.24\textwidth}
\includegraphics[width=1\linewidth]{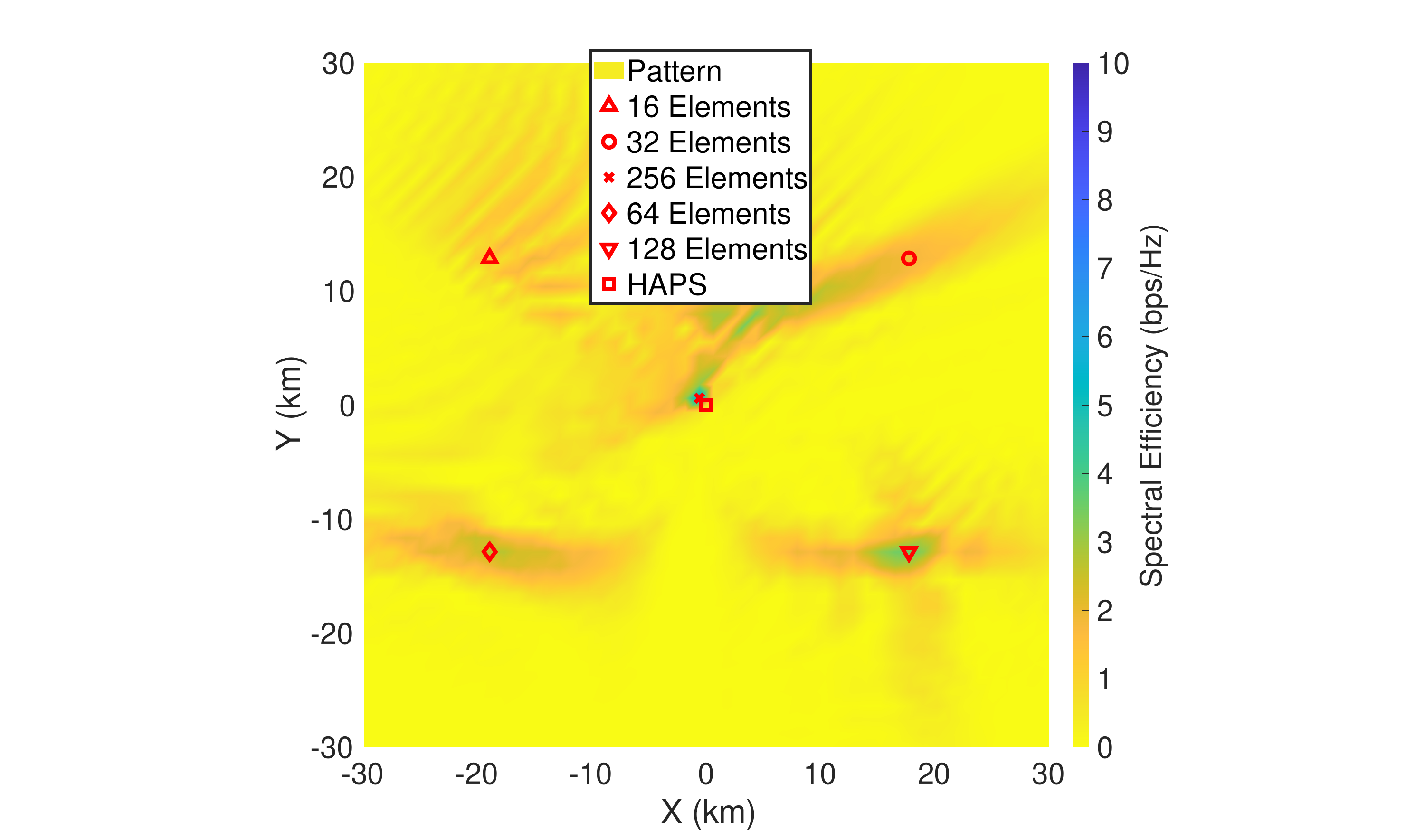} 
\caption{Cylindrical, $\theta_{\text{3dB}}=10$}
\label{heat_M_C10}
\end{subfigure}
\begin{subfigure}{0.24\textwidth}
\includegraphics[width=1\linewidth]{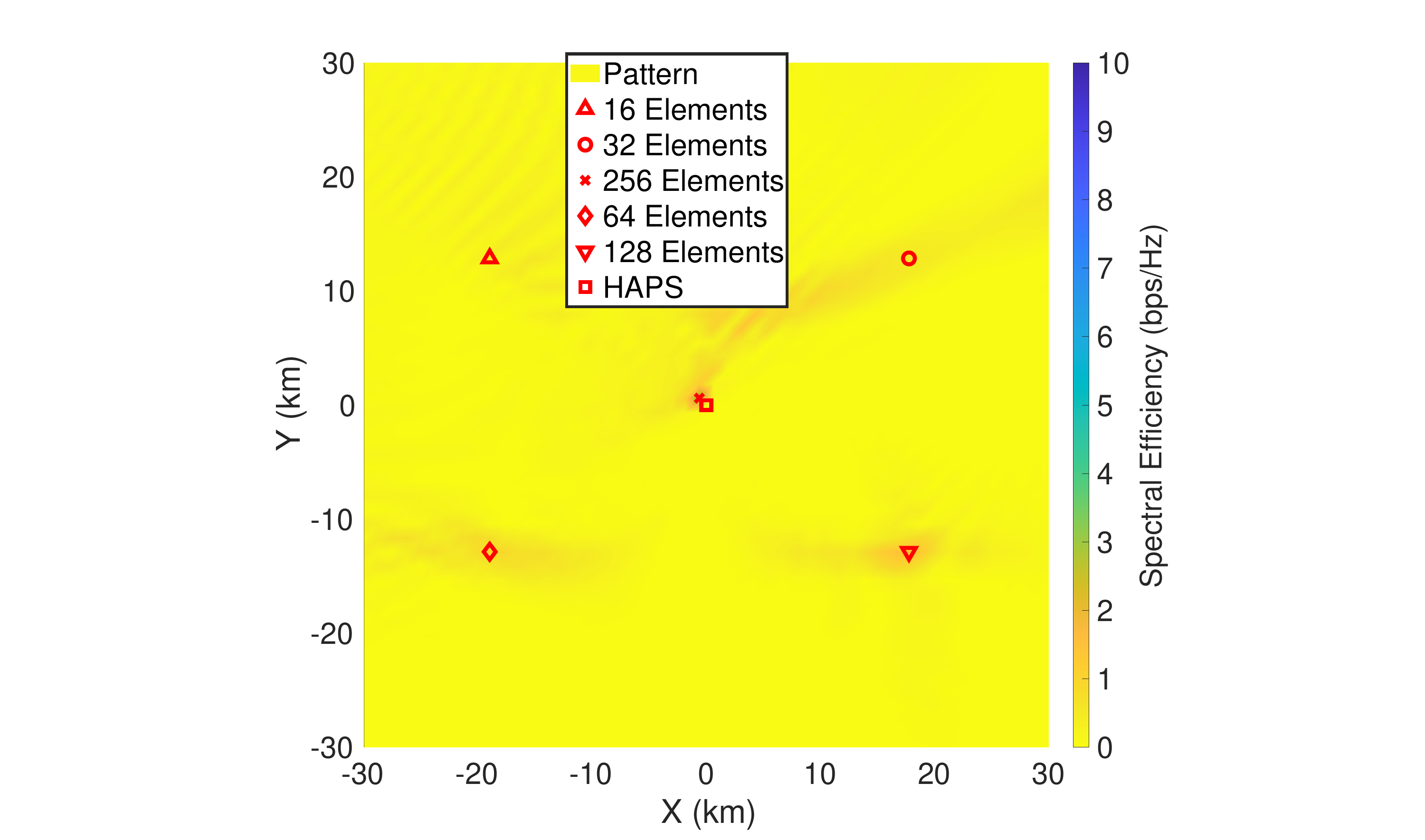} 
\caption{Cylindrical,$\theta_{\text{3dB}}=25$}
\label{heat_M_C25}
\end{subfigure}
\caption{
Heatmaps displaying spectral efficiencies for five beams with different numbers of selected antenna elements are presented for the rectangular and cylindrical baseline schemes. Each beam is allocated a fixed power of $1~\mathrm{Watt}$. These heatmaps consider two distinct $3~\mathrm{dB}$ beamwidth values for antenna elements.}
\label{Heat_M_element_baselines}
\end{figure}

Fig.~\ref{Heat_M_element_baselines} shows a set of heatmaps displaying spectral efficiencies for five beams with different numbers of selected antenna elements — i.e., $M_k=16, 32, 64, 128,$ and $ 256$ — for the rectangular and cylindrical baseline schemes. Each beam is allocated a fixed power of $1~\mathrm{Watt}$.
For the rectangular scheme, Fig.~\ref{heat_M_R10} and Fig.~\ref{heat_M_R25} show the beams with a $3~\mathrm{dB}$ beamwidth values of $\theta_{\mathsf{3dB}}=10$ and $\theta_{\mathsf{3dB}}=25$, respectively. We can see that, the rectangular scheme, with its antenna elements facing
downwards, has much stronger beams for the location under the HAPS. For the cylindrical scheme, Fig.~\ref{heat_M_C10} and Fig.~\ref{heat_M_C25} show the beams with a $3~\mathrm{dB}$ beamwidth values of $\theta_{\mathsf{3dB}}=10$ and $\theta_{\mathsf{3dB}}=25$, respectively. With its antenna elements facing towards the horizon, the cylindrical scheme has similar beams for the locations under the HAPS and those far from the HAPS. Note that for $\theta_{\mathsf{3dB}}=25$ case, due to a better array gain, narrower beams are created.

\begin{figure}[t]
\begin{subfigure}{0.5\textwidth}
\includegraphics[width=1\linewidth, height=4.2cm]{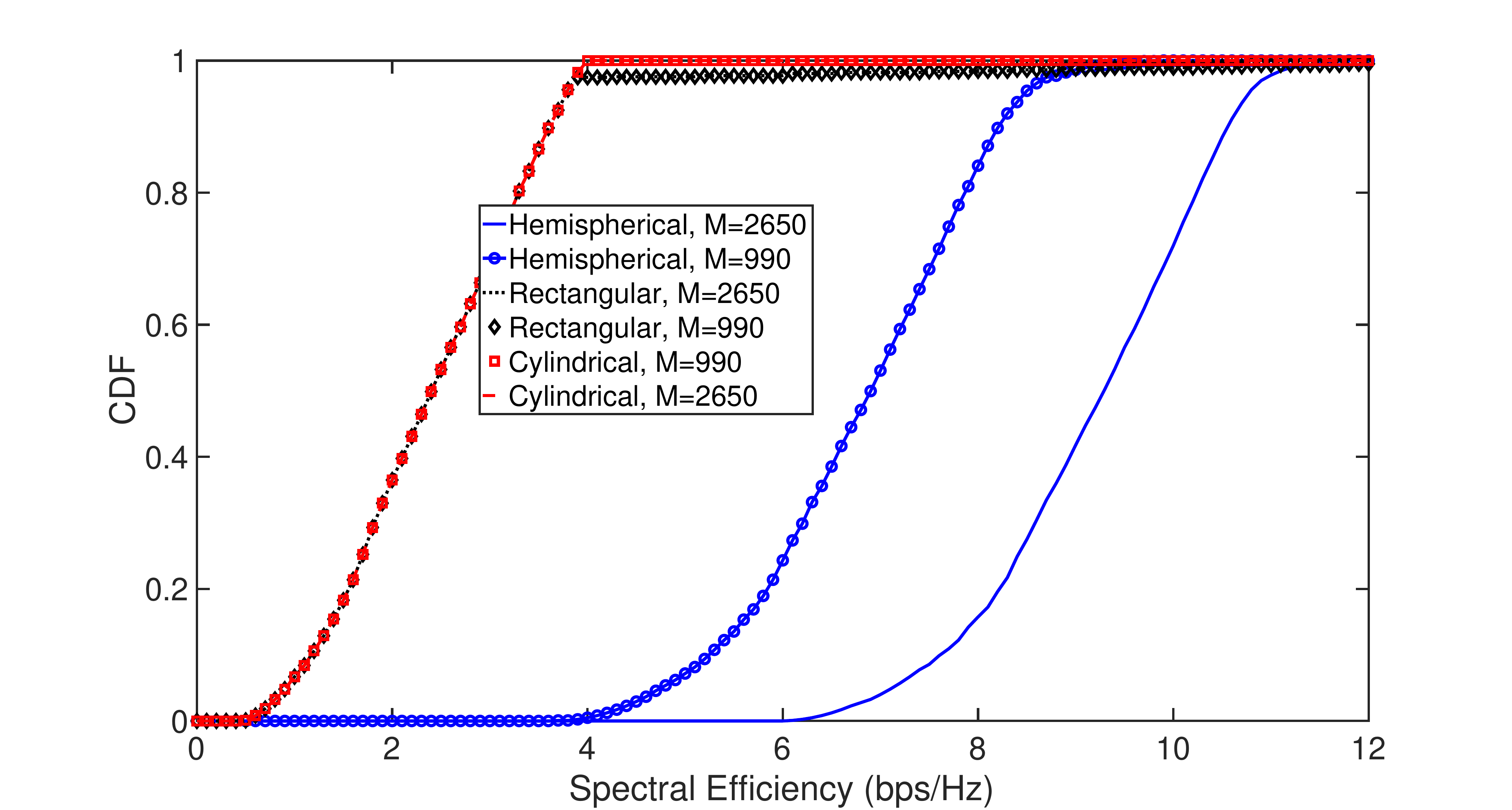} 
\caption{$\theta_{\text{3dB}}=10$}
\label{CDF_M10}
\end{subfigure}
\begin{subfigure}{0.5\textwidth}
\includegraphics[width=1\linewidth, height=4.2cm]{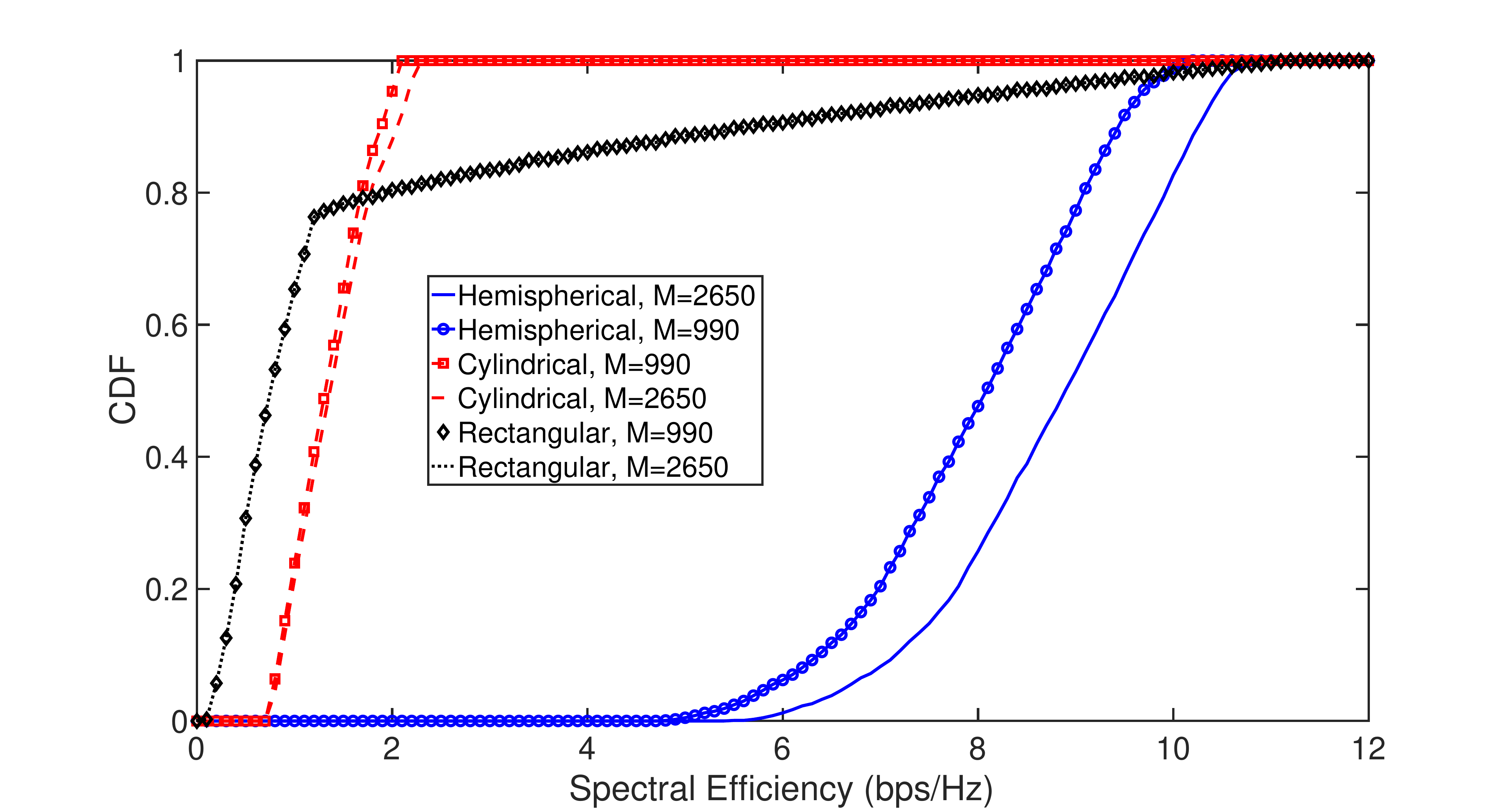}
\caption{$\theta_{\text{3dB}}=25$}
\label{CDF_M25}
\end{subfigure}
\caption{
CDF of spectral efficiency for a user uniformly distributed across $10,000$ different locations in a square urban area with dimensions $60~\mathrm{km} \times 60~\mathrm{km}$. We consider two distinct values for the total number of antenna elements in the array — namely, $M=990,~2650$. The user is allocated a fixed power of $1~\mathrm{Watt}$ and served by 64 antenna elements selected ($M_k=64$).}
\label{CDF_M10_25}
\end{figure}

Fig.~\ref{CDF_M10_25} shows the CDF of spectral efficiency for a user that is uniformly distributed across $10,000$ different locations in a square urban area with dimensions $60~\mathrm{km} \times 60~\mathrm{km}$. We consider two distinct values for the total number of antenna elements in the array — namely, $M=990,~\text{and}~2650$. We assume the user is allocated a fixed power of $1~\mathrm{Watt}$ and served by 64 selected antenna elements ($M_k=64$).
Fig. \ref{CDF_M10} and Fig. \ref{CDF_M25} show the CDF for $3~\mathrm{dB}$ beamwidth values of $\theta_{\text{3dB}}=10$ and $\theta_{\text{3dB}}=25$, respectively. For the HAA scheme, increasing $M$ improves the rates, which can be attributed to the fact that, for a bigger $M$, more elements with a higher gain will be available for each user to be selected. Note that this gain is more pronounced for the $\theta_{\text{3dB}}=10$ case, as the beam of each antenna element is narrower. For the rectangular case, a higher $M$ does not change the direction and gain of antenna elements at users' location as all elements are looking downwards. Therefore, the CDF remains fixed for both values of $M$. For the CAA, a higher $M$ leads to a higher granularization only in the azimuth angles of the elements. Therefore, we observe a small gain for increasing $M$ in the cylindrical array. Note that for the HAA, a higher $M$ leads to a higher granularization in both azimuth and elevation angles of the elements, while, for the RAA, a higher $M$ does not improve the angles of the elements.

\begin{figure}[t]
\begin{subfigure}{0.5\textwidth}
\includegraphics[width=1\linewidth, height=4.5cm]{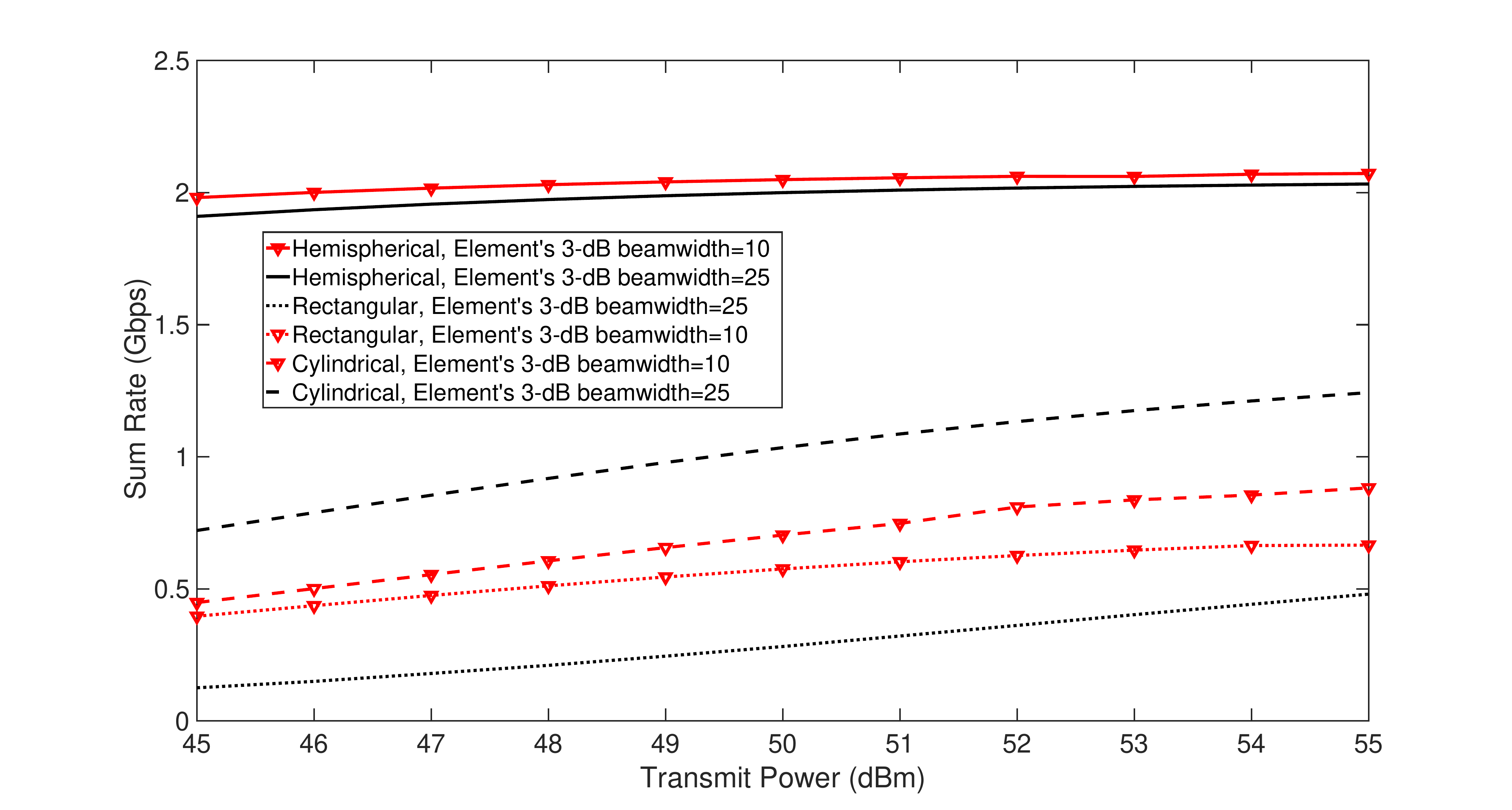} 
\caption{$M_{k}=64~ \text{Elements}$}
\label{sum_power_M64}
\end{subfigure}
\begin{subfigure}{0.5\textwidth}
\includegraphics[width=1\linewidth, height=4.5cm]{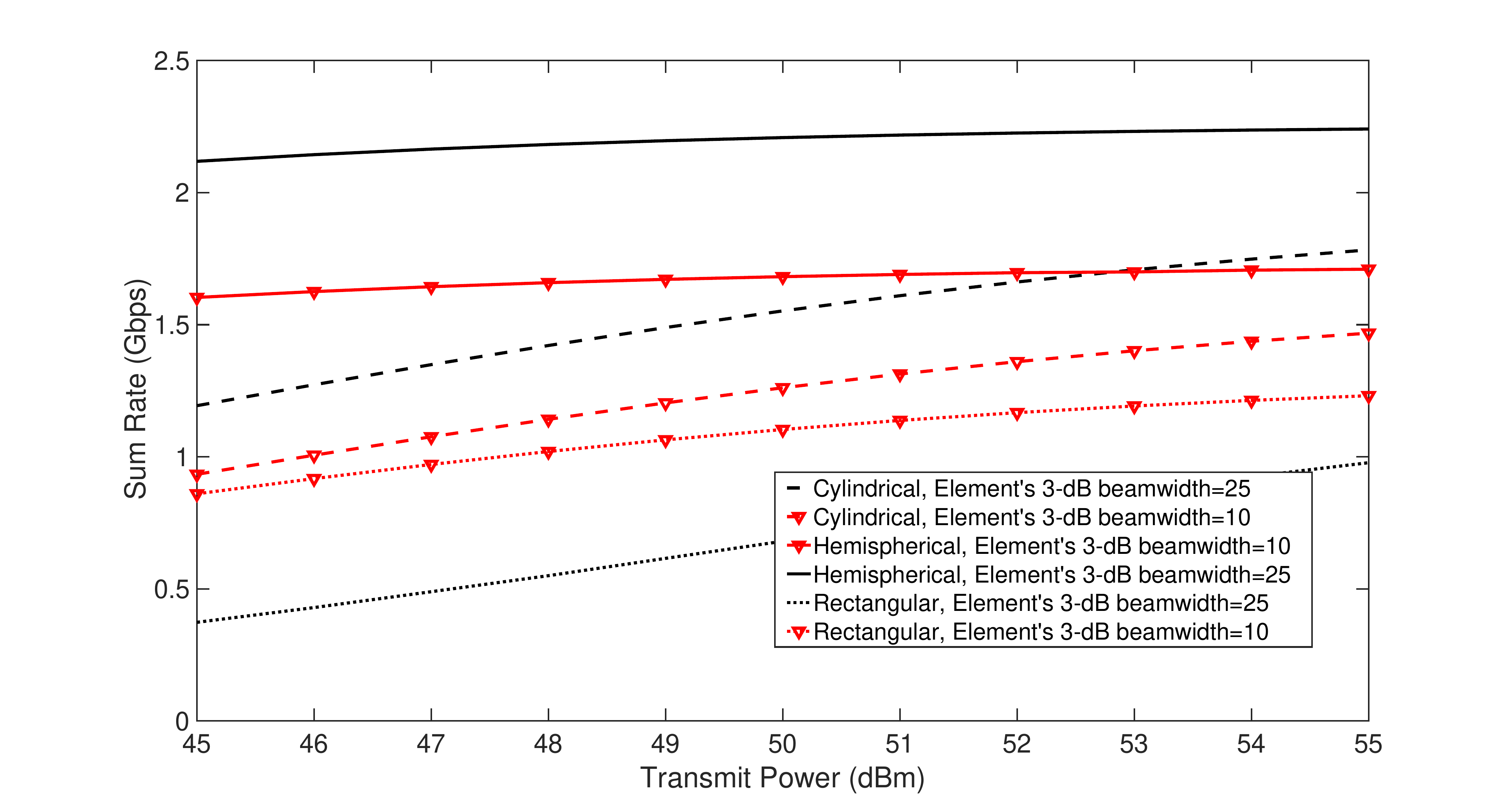}
\caption{$M_{k}=256~ \text{Elements}$}
\label{sum_power_M256}
\end{subfigure}
\caption{
Sum rate versus the total transmit power at the HAPS for $16$ users uniformly distributed in a square urban area measuring $60~\mathrm{km} \times 60~\mathrm{km}$. Simulation employs optimized parameter values proposed in Algorithm 3, a communication bandwidth of $20~\mathrm{MHz}$, and a total transmit power of $50~\mathrm{dBm}$ at the HAPS.
}
\label{sum_power_opt}
\end{figure}

Fig.~\ref{sum_power_opt} illustrates the sum rate of the HAA, CAA, and RAA schemes in relation to the total transmit power at the HAPS. Simulation was conducted with $16$ users uniformly distributed in a square urban area measuring $60~\mathrm{km} \times 60~\mathrm{km}$ using the optimal parameter values proposed in Algorithm 3. We also assumed a communication bandwidth of $20~\mathrm{MHz}$ and a total transmit power of $50~\mathrm{dBm}$ at the HAPS.
The achieved sum rate is presented for two cases: one with $M_k=64$ elements (see Fig.~\ref{sum_power_M64}) and the other with $M_k=256$ elements (see Fig.~\ref{sum_power_M256}). The HAA scheme outperforms both the CAA and RAA schemes. Additionally, the CAA scheme performs better than the RAA scheme.
Importantly, the HAA scheme's sum rate is lower with $\theta_{\mathsf{3dB}}=25$ than $\theta_{\mathsf{3dB}}=10$ when $M_k=64$. However, when $M_k=256$, the narrower beamwidth performs worse than the wider one. This discrepancy can be attributed to the fact that wider beamwidth values require more antenna elements to achieve an effective array (beamforming) gain.

\begin{figure}[t]
\begin{subfigure}{0.5\textwidth}
\includegraphics[width=1\linewidth, height=4.5cm]{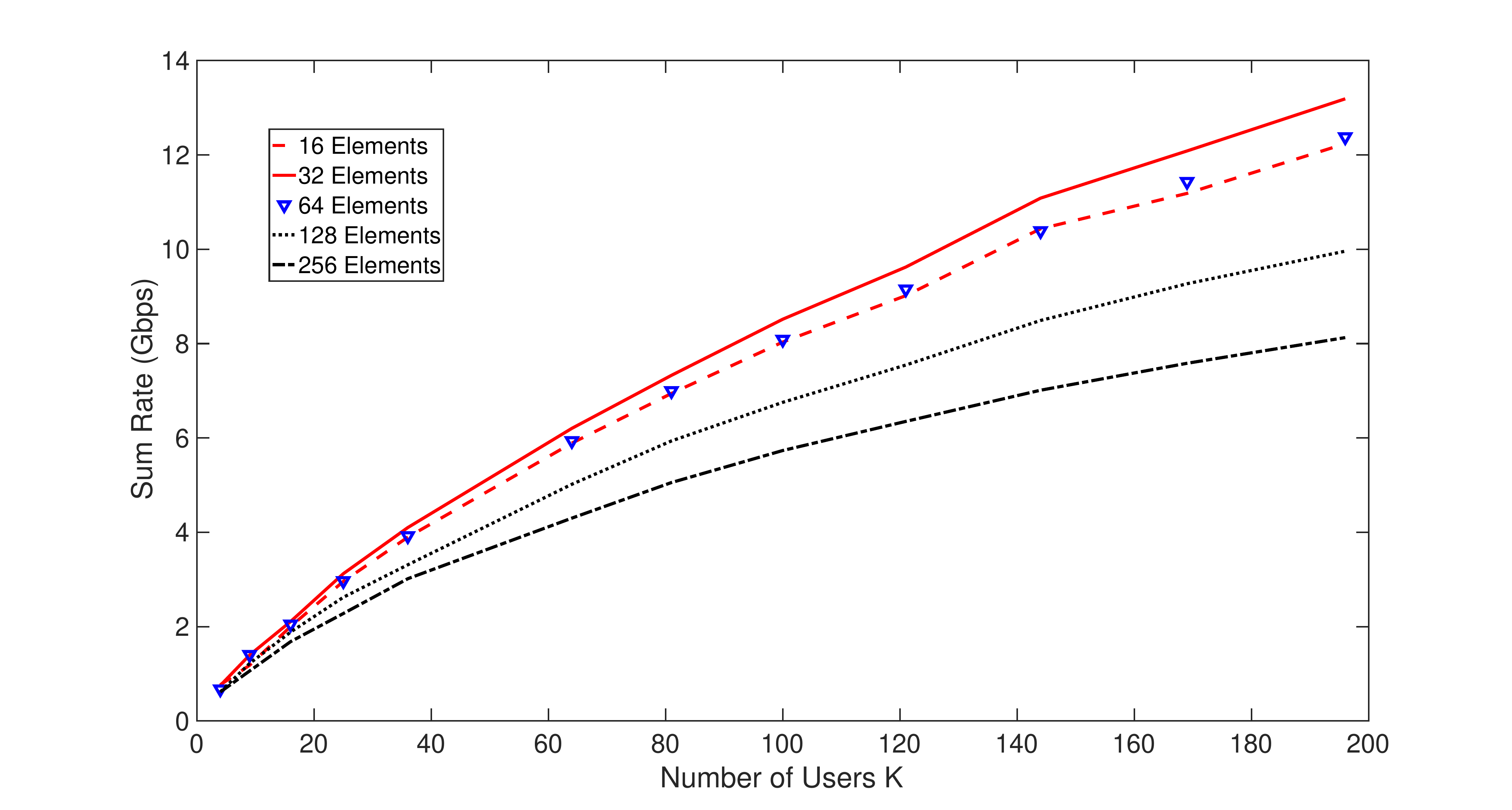} 
\caption{$\theta_{\mathsf{3dB}}=10$}
\label{sum_K_10}
\end{subfigure}
\begin{subfigure}{0.5\textwidth}
\includegraphics[width=1\linewidth, height=4.5cm]{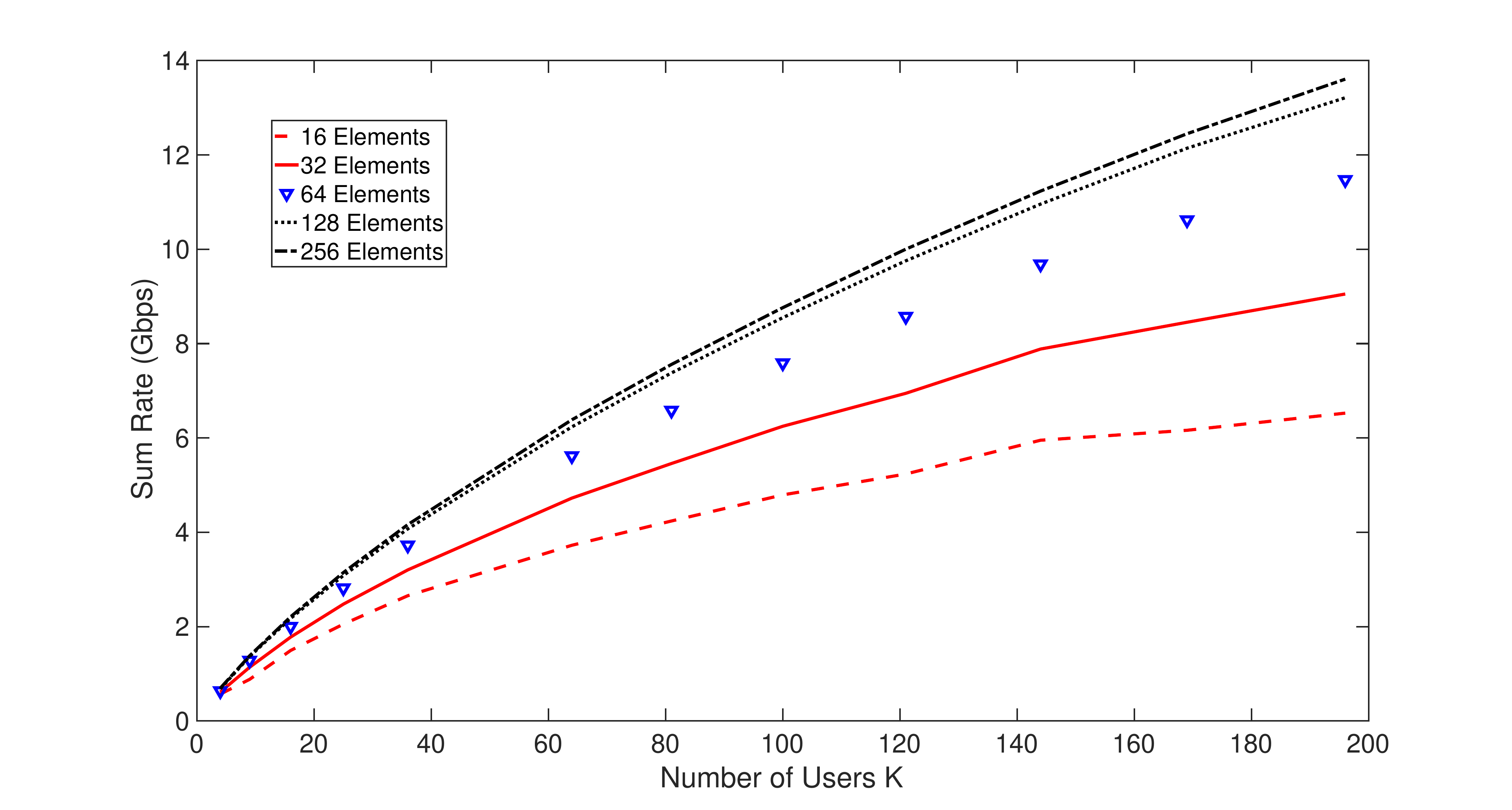}
\caption{$\theta_{\mathsf{3dB}}=25$}
\label{sum_K_25}
\end{subfigure}
\caption{Sum rate versus the number of users $K$ is shown for users uniformly distributed in a square urban area measuring $60~\mathrm{km} \times 60~\mathrm{km}$. Simulation employs optimized parameter values proposed in Algorithm 3, a communication bandwidth of $20~\mathrm{MHz}$, and a total transmit power of $50~\mathrm{dBm}$ at the HAPS.
}
\label{sum_K_opt}
\end{figure}

Fig. \ref{sum_K_opt} shows the sum rate as a function of the number of users $K$ uniformly distributed in a square urban area measuring $60~\mathrm{km} \times 60~\mathrm{km}$. Simulation employs the optimized parameter values proposed in Algorithm 3, a communication bandwidth of $20~\mathrm{MHz}$, and a total transmit power of $50~\mathrm{dBm}$ at the HAPS.
This figure presents the sum rates for five different numbers of antenna elements selected, i.e., $M_k=16, 32, 64, 128,$ and $ 256$. Increasing the number of users leads to higher sum rates. Fig. \ref{sum_K_10}, which shows the sum rate when the elements have a narrower $3~\mathrm{dB}$ beamwidths of $\theta_{\mathsf{3dB}}=10$, reveals that increasing $M_k$ from $16$ to $32$ improves the sum rate; however, further increasing it from $32$ to $256$ causes a decrease.
Fig. \ref{sum_K_25} shows the sum rate when the elements have a wider $3~\mathrm{dB}$ beamwidths of $\theta_{\mathsf{3dB}}=25$. In this case, increasing $M_k$ from $16$ to $256$ leads to an increase in the sum rate due to the array gain achieved by beamforming with a larger number of wider elements.

\begin{figure}[t]
\begin{subfigure}{0.5\textwidth}
\includegraphics[width=1\linewidth, height=4.5cm]{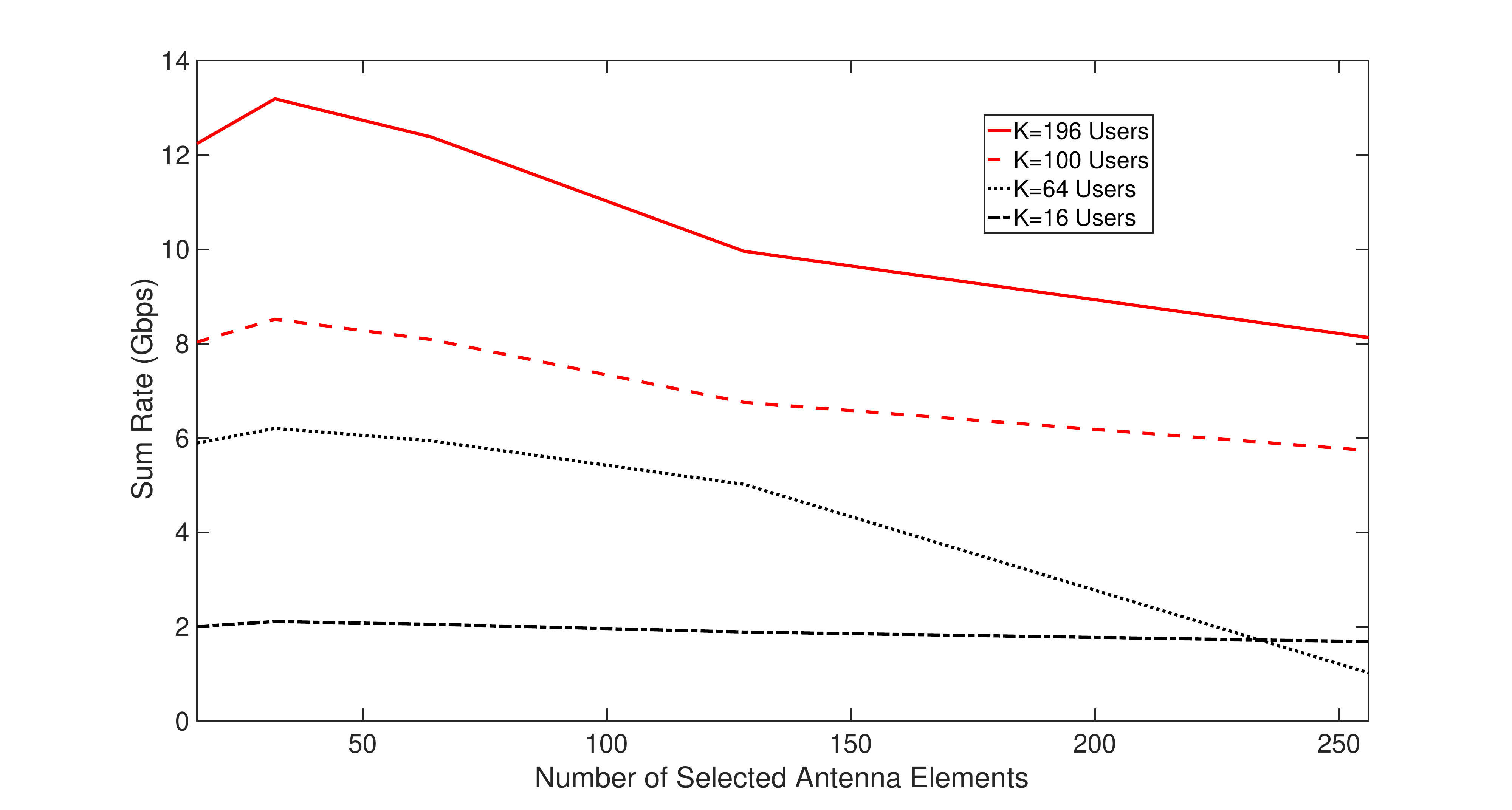} 
\caption{$\theta_{\mathsf{3dB}}=10$}
\label{sum_M_10}
\end{subfigure}
\begin{subfigure}{0.5\textwidth}
\includegraphics[width=1\linewidth, height=4.5cm]{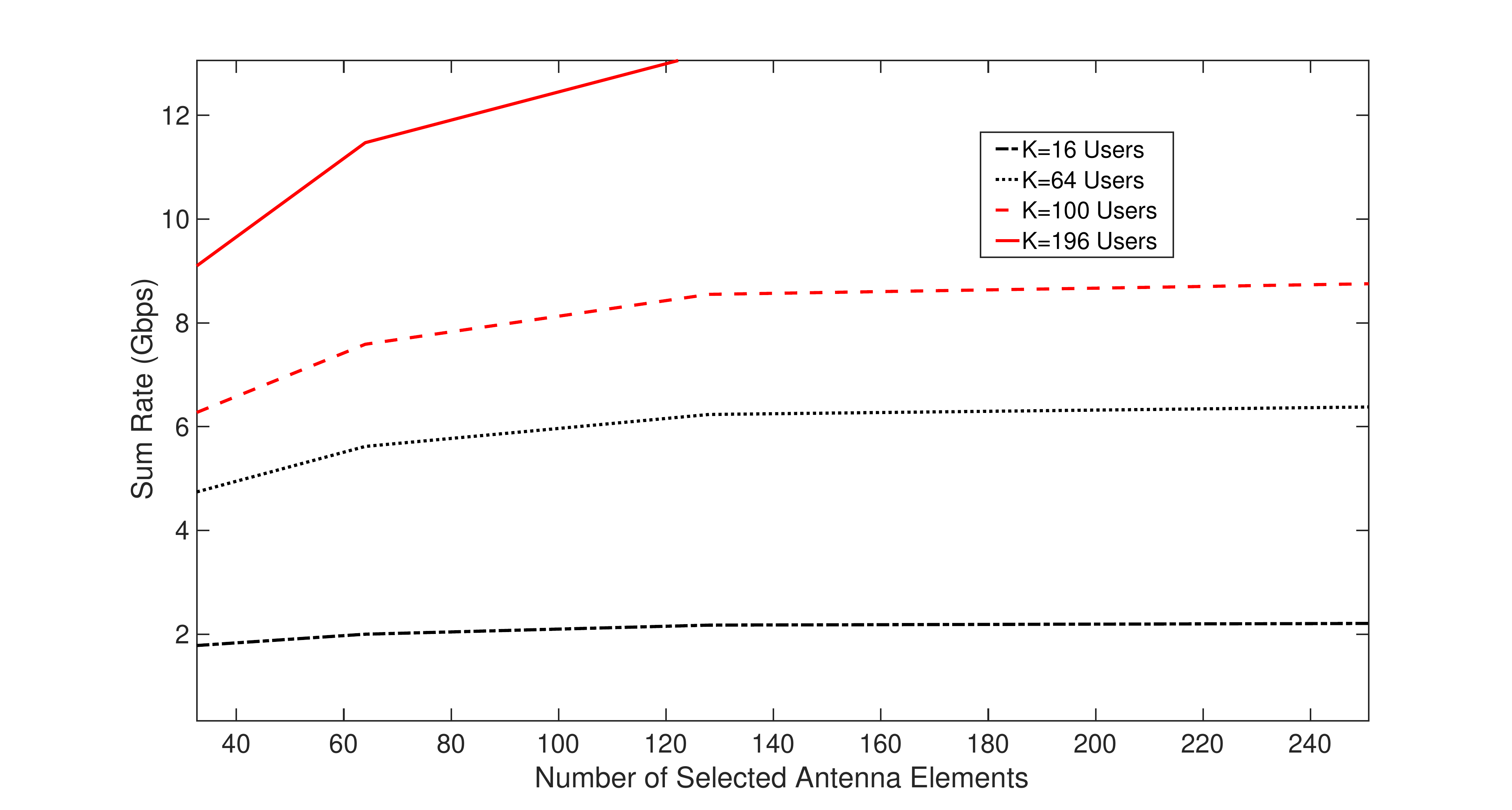}
\caption{$\theta_{\mathsf{3dB}}=25$}
\label{sum_M_25}
\end{subfigure}
\caption{
The sum rate as a function of the number of selected antenna elements $M_k$ for various numbers of users $K$ uniformly distributed in a square urban area measuring $60~\mathrm{km} \times 60~\mathrm{km}$. Simulation employs optimal parameter values proposed in Algorithm 3 and a communication bandwidth of $20~\mathrm{MHz}$.
}
\label{sum_M_opt}
\end{figure}

Fig. \ref{sum_M_opt} illustrates the sum rate as a function of the number of antenna elements selected $M_k$ for different numbers of users $K$ uniformly distributed in a square urban area measuring $60~\mathrm{km} \times 60~\mathrm{km}$. Simulation employs the optimal parameter values proposed in Algorithm 3, a communication bandwidth of $20~\mathrm{MHz}$, and a total transmit power of $50~\mathrm{dBm}$ at the HAPS.
This figure provides insight into the sum rates for different values of $K$ — specifically, $K=16, 64, 100,$ and $196$. Fig. \ref{sum_M_10} displays the sum rate when the elements have a narrower $3~\mathrm{dB}$ beamwidth of $\theta_{\mathsf{3dB}}=10$. It is evident that the sum rate first improves with an increase in $M_k$ from $16$ to $32$, but then declines with an increase of $M_k$ from $32$ to $256$. Furthermore, the optimal number of antenna elements selected to maximize the sum rate is $M_k=32$.
Fig. \ref{sum_M_25}, which shows the sum rate when the elements have a wider $3~\mathrm{dB}$ beamwidth of $\theta_{\mathsf{3dB}}=25$, suggests that, when $M_k$ increases from $16$ to $256$, the sum rate increases too, thus highlighting the array gain that can be achieved by beamforming with a larger number of wider elements.

\begin{figure}[t]
\begin{subfigure}{0.5\textwidth}
\includegraphics[width=1\linewidth, height=4.5cm]{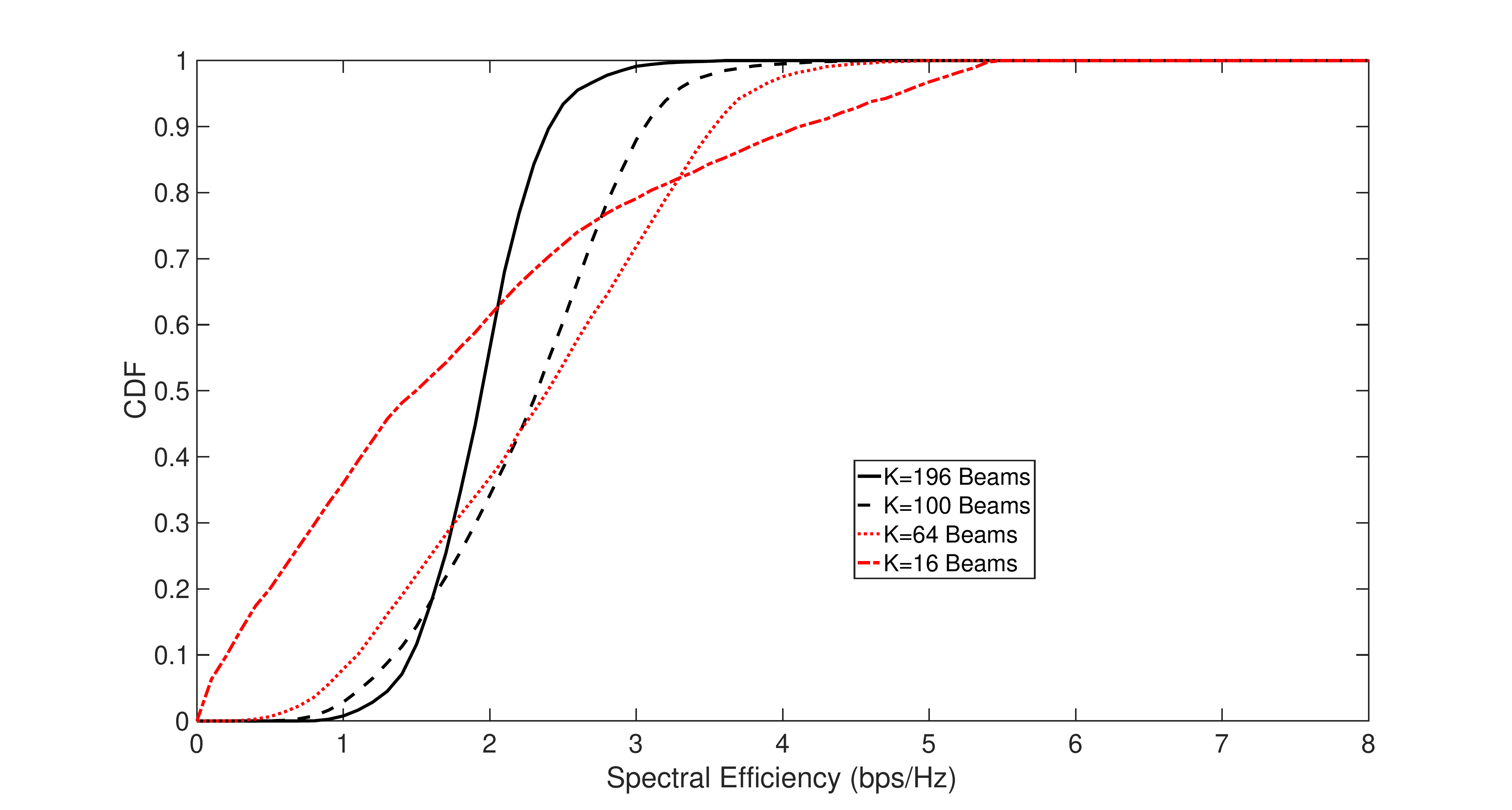} 
\caption{$\theta_{\text{3dB}}=10$}
\label{CDF_opt_10}
\end{subfigure}
\begin{subfigure}{0.5\textwidth}
\includegraphics[width=1\linewidth, height=4.5cm]{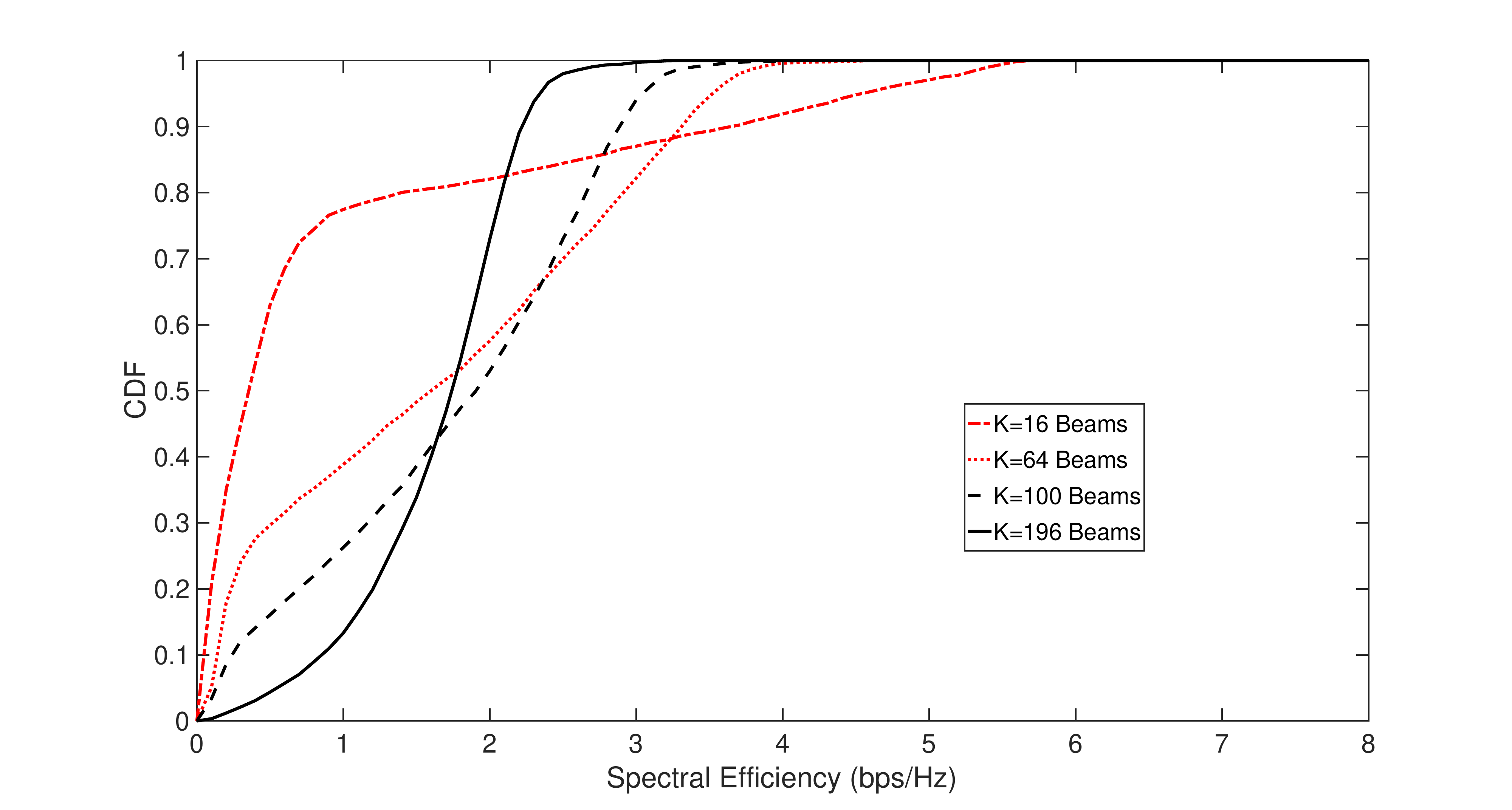}
\caption{$\theta_{\text{3dB}}=25$}
\label{CDF_opt_25}
\end{subfigure}
\caption{CDF of spectral efficiencies of users, with a total transmit power of $50~\rm{dBm}$ at the HAPS and $M_k=64$ selected antenna elements for each beam. We conducted simulations for four different quantities of beams — specifically, $K=16,~ 64, ~100,$ and $196$. Also, $5,000$ users were uniformly distributed in a square urban area measuring $60~\mathrm{km} \times 60~\mathrm{km}$.}
\label{CDF_opt}
\end{figure}

Fig. \ref{CDF_opt} displays the CDF of the spectral efficiencies of the users, with a total transmit power of $50~\rm{dBm}$ at the HAPS and $M_k=64$ antenna elements selected for each beam. This figure illustrates spectral efficiency distribution for four different quantities of beams — specifically, $K=16,~ 64, ~100,$ and $196$.
For this figure, $K$ beams are first generated to cover $K$ designated locations on the ground and create a square grid within an urban area measuring $60~\mathrm{km} \times 60~\mathrm{km}$. Then, the optimized parameter values for the $K$ beams are determined using Algorithm 3. Next, $5,000$ users are uniformly distributed in the urban area with each user assigned to the nearest beam center on the ground.
Fig. \ref{CDF_opt_10} and Fig. \ref{CDF_opt_25} provide the CDFs for $\theta_{\mathsf{3dB}}=10$ and $\theta_{\mathsf{3dB}}=25$, respectively. It is evident that a greater number of beams ($K$) leads to more uniform user spectral efficiencies. Additionally, it can be observed that $\theta_{\mathsf{3dB}}=10$ yields higher spectral efficiencies than $\theta_{\mathsf{3dB}}=25$, which is mainly due to the limitation of having only $M_k=64$ selected antenna elements, which does not provide sufficient beamforming gain for wider beamwidths.

\begin{figure}[t]
\begin{subfigure}{0.5\textwidth}
\includegraphics[width=1\linewidth, height=4.5cm]{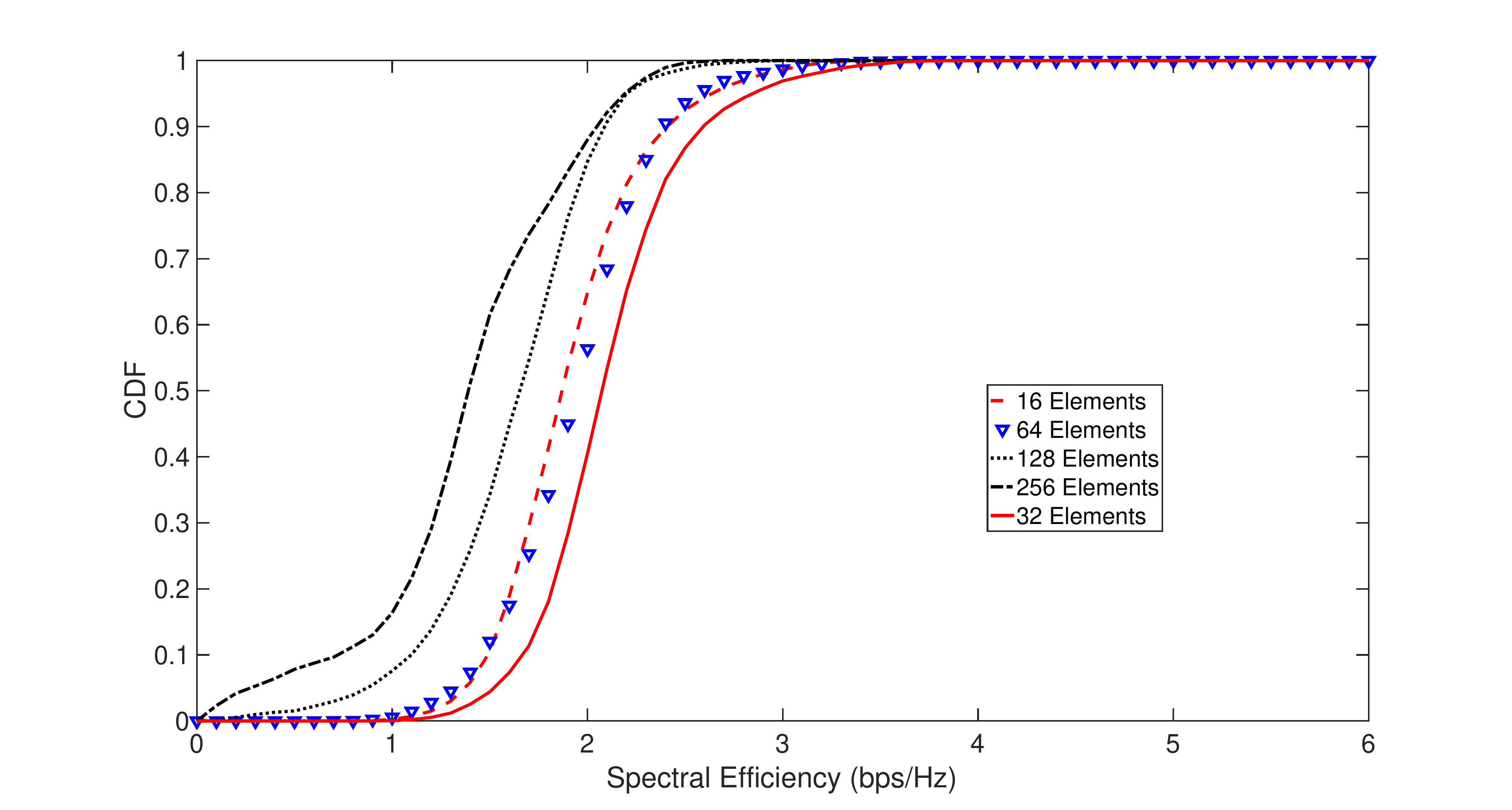} 
\caption{$\theta_{\mathsf{3dB}}=10$}
\label{CDF_opt_K_10}
\end{subfigure}
\begin{subfigure}{0.5\textwidth}
\includegraphics[width=1\linewidth, height=4.5cm]{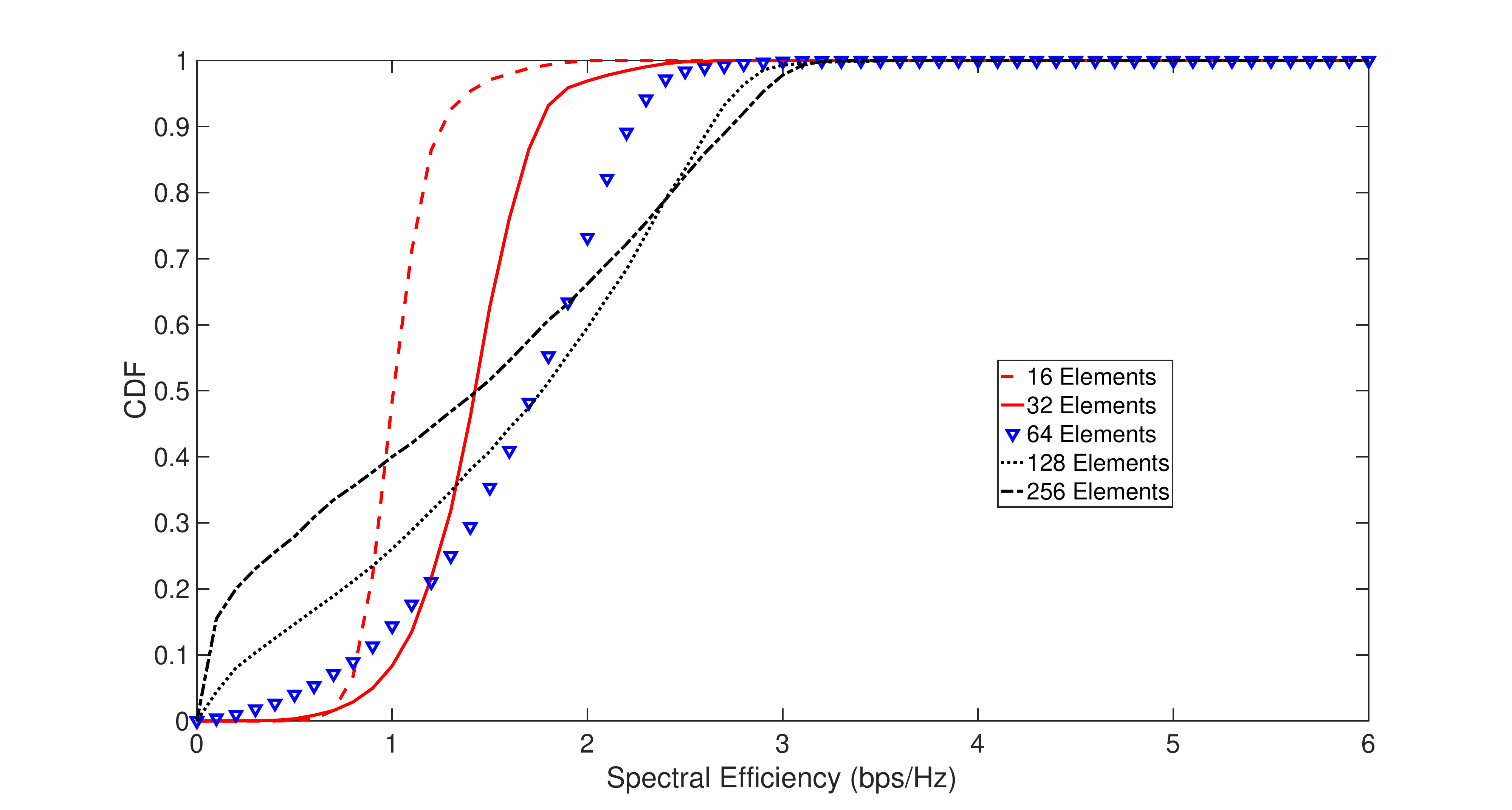}
\caption{$\theta_{\mathsf{3dB}}=25$}
\label{CDF_opt_K_25}
\end{subfigure}
\caption{CDF of spectral efficiencies of users, with a total transmit power of $50~\rm{dBm}$ at the HAPS and $K=196$ beams. We conducted simulations for five different quantities of selected antenna elements, i.e., $M_k=16,~ 32,~ 64,~ 128,$ and $256$. Additionally, $5,000$ users were uniformly distributed in a square urban area measuring $60~\mathrm{km} \times 60~\mathrm{km}$.}
\label{CDF_opt_K}
\end{figure}

Fig. \ref{CDF_opt_K} presents the CDF of the spectral efficiencies of the users assuming $K=196$ beams and the total transmit power of $50~\rm{dBm}$. This figure illustrates spectral efficiency distributions for five different quantities of antenna elements selected — namely, $M_k=16,~ 32,~ 64,~ 128, ~\text{and}~256$.
For this figure, we use the same methodology as the one used for Fig. \ref{CDF_opt}.
Fig. \ref{CDF_opt_K_10} depicts the CDF when the antenna elements have a narrower $3~\rm{dB}$ beamwidths of $\theta_{\mathsf{3dB}}=10$. We observe that the spectral efficiency improves by increasing $M_k$ from $16$ to $32$, but a further increase from $32$ to $256$ results in a reduction of spectral efficiency. 
Fig. \ref{CDF_opt_K_25} displays the CDF when the antenna elements have a wider $3~\rm{dB}$ 
beamwidth of $\theta_{\mathsf{3dB}}=25$. The results reveal that increasing $M_k$ from $16$ to $256$ leads to higher spectral efficiencies due to the array gain achieved by beamforming with a greater number of wider elements. These observations are consistent with the findings shown in Fig. \ref{sum_K_10} and Fig. \ref{sum_K_25}.
Moreover, with an increase of the number of selected antenna elements in Fig. \ref{CDF_opt_K_25}, homogeneity of users' achieved spectral efficiencies decreases. This outcome is attributable to the beamforming gain being more advantageous for those located closer to the beam center, while users located further from it experience lower spectral efficiencies.

\section{Conclusion}
In this paper, we introduced a novel hemispherical antenna array (HAA) for high-altitude platform station (HAPS) to respond the challenges posed by conventional rectangular and cylindrical antenna arrays. The results of our simulations conclusively demonstrate that our proposed HAA scheme outperforms traditional rectangular and cylindrical baseline arrays. Furthermore, our simulation results underscore the significance of antenna element beamwidth. When beamwidth is narrower, the harnessing of array (beamforming) gain is less effective and it becomes more advantageous to craft beams with fewer antenna elements selected. Conversely, when beamwidth is wider, beamforming emerges as a potent tool for generating highly focused high-gain beams with a larger number of selected antenna elements. In an urban area spanning a length of $60~\mathrm{km}$, with a communication bandwidth of $20~\mathrm{MHz}$ and a total power at HAPS set at $50~\mathrm{dBm}$, the proposed approach can efficiently attain sum data rates of up to $14 ~\mathrm{Gbps}$. Finally, in contrast to the baseline schemes, the proposed approach achieves uniform data rates across the entire coverage area.

% if have a single appendix:
%\appendix[Proof of the Zonklar Equations]
% or
%\appendix  % for no appendix heading
% do not use \section anymore after \appendix, only \section*
% is possibly needed
% use appendices with more than one appendix
% then use \section to start each appendix
% you must declare a \section before using any
% \subsection or using \label (\appendices by itself
% starts a section numbered zero.)
%
\appendices

\appendices

\section{Proof of Proposition \ref{propos_sinr}}\label{proof_sinr}
To derive each user's SINR, we rewrite the signal received at user $k$, i.e., $y_k$ in (\ref{y_k}),  as
\begin{equation}\label{y_k_vect}
    \begin{split}
&y_{k}=\mathbf{h}_k\mathbf{G}_k(\mathbf{W}\odot \mathbf{A})\mathbf{P}\mathbf{s}+z_{k}\\&=\mathbf{h}_k\mathbf{G}_k(\mathbf{w}_k\odot \mathbf{a}_k)\sqrt{p_k}s_k+\sum_{k'=1,k'\neq k}^{K}\mathbf{h}_k\mathbf{G}_k(\mathbf{w}_{k'}\odot \mathbf{a}_{k'})\sqrt{p_{k'}}s_{k'}+z_k\\&=\mathbf{h}_k\mathbf{G}_k\mathbf{A}_k\mathbf{w}_k\sqrt{p_k}s_k+\sum_{k'=1,k'\neq k}^{K}\mathbf{h}_k\mathbf{G}_k\mathbf{A}_{k'}\mathbf{w}_{k'}\sqrt{p_{k'}}s_{k'}+z_k,
    \end{split}
\end{equation}
where $\mathbf{A}_k=\mathrm{diag}(\mathbf{a}_{k}) \in \mathbb{B}^{M\times M}$. Then, we utilize the use-and-then-forget bound \cite{marzetta2016fundamentals} to derive an achievable SINR. From the last equation in (\ref{y_k_vect}), we write the expectation of the desired signal (DS) for user $k$ as
\begin{equation}
    \begin{split}
\mathrm{E}\{\mathsf{DS}_k\}&=\mathrm{E}\{\mathbf{h}_k\mathbf{G}_k\mathbf{A}_k\mathbf{w}_k\sqrt{p_k}\}.
    \end{split}
\end{equation}

In this paper, we apply matched filtering to beamform the signals toward users. Hence, the PS for each user is calculated based on the conjugate of the user's steering vector, i.e., $\mathbf{w}_{k}=\frac{1}{\sqrt{M_k}}\mathbf{b}_k^*$, where $M_k$ indicates the number of antenna elements selected for user $k$. Therefore, we have
\begin{equation}
\begin{split}
\mathrm{E}\{\mathsf{DS}_k\}&=\sqrt{\frac{p_k}{M_k}}\mathrm{E}\{\mathbf{h}_k\mathbf{G}_k\mathbf{A}_k\mathbf{w}_k\}
=\sqrt{\frac{p_k\beta_k^2}{M_k}}\mathrm{Tr}(\mathbf{G}_k\mathbf{A}_k).
\end{split}
\end{equation}

Next, we derive the variance of the interference term in (\ref{y_k_vect}) as

\begin{equation}
\begin{split}
\mathrm{E}\{I_kI_k^*\}&=\mathrm{E}\{(\sum_{k'=1}^{K}\mathbf{h}_k\mathbf{G}_k\mathbf{A}_{k'}\mathbf{w}_{k'}\sqrt{p_{k'}})(\sum_{k''=1}^{K}\mathbf{h}_k\mathbf{G}_k\mathbf{A}_{k''}\mathbf{w}_{k''}\sqrt{p_{k''}})^*\}\\&=\mathrm{E}\{(\sum_{k'=1}^{K}(\mathbf{h}_k\mathbf{G}_k\mathbf{A}_{k'}\mathbf{w}_{k'}\sqrt{p_{k'}})(\mathbf{h}_k^*\mathbf{G}_k\mathbf{A}_{k'}\mathbf{w}_{k'}^*\sqrt{p_{k'}})\}\\&=\beta_k^2\sum_{k'=1}^K p_{k'}\mathrm{E}\{(\mathbf{h}_k\mathbf{G}_k\mathbf{A}_{k'}\mathbf{w}_{k'})(\mathbf{h}_k^*\mathbf{G}_k\mathbf{A}_{k'}\mathbf{w}_{k'}^*)\}\\&=\beta_k^2\sum_{k'=1}^K \frac{p_{k'}}{M_{k'}}\mathrm{Tr}(\mathbf{G}_k^2\mathbf{A}_{k'}).
\end{split}
\end{equation}

Now, we calculate the SINR of user $k$ from the formulas derived for the desired signal and interference as
\begin{equation}
\begin{split}
\mathsf{SINR}_k&=\frac{\mathrm{E}\{\mathsf{DS}_k\}^2}{\mathrm{E}\{I_kI_k^*\}+\sigma^2}=\frac{\frac{p_k\beta_k^2}{M_k}\mathrm{Tr}(\mathbf{G}_k\mathbf{A}_k)^2}{\beta_k^2\sum_{k'=1}^K \frac{p_{k'}}{M_{k'}}\mathrm{Tr}(\mathbf{G}_k^2\mathbf{A}_{k'})+\sigma^2}.
\end{split}
\end{equation}

Finally, we can write user $k$'s achievable rate as $\log_2(1+\mathsf{SINR}_k)$. This completes the proof.

\section{Proof of Proposition \ref{propos_gain}}\label{proof_G}
First, we derive the angle between user $k$ and antenna $m$, i.e., $\theta_{km}$. To this end, we write their unit vectors in Cartesian coordinates as 
$\mathbf{v}_k=[\sin{\theta_{k}}\cos{\phi_{k}},\sin{\theta_{k}}\sin{\phi_{k}},\cos{\theta_{k}}],
$
and 
$
\mathbf{v}_m=[\sin{\theta_{m}}\cos{\phi_{m}},\sin{\theta_{m}}\sin{\phi_{m}},\cos{\theta_{m}}].
$

 Now, based on the geometric definition of dot product, we can write
$$
\mathbf{v}_m\cdot\mathbf{v}_k=\norm{\mathbf{v}_m}\norm{\mathbf{v}_k}\cos{\theta_{km}}=\cos{\theta_{km}}.
$$

Therefore, we have
\begin{equation}
    \begin{split}  \cos{\theta_{km}}&=\sin{\theta_{k}}\sin{\theta_{m}}\cos{\phi_{k}}\cos{\phi_{m}}\\&+\sin{\theta_{k}}\sin{\theta_{m}}\sin{\phi_{k}}\sin{\phi_{m}}+\cos{\theta_{k}}\cos{\theta_{m}},
    \end{split}
\end{equation}
and hence, $\theta_{km}$ can be derived as \eqref{theta-km}. 

It is clear that if $\theta_{km}>90$, antenna element $m$ will achieve no gain user $k$'s location. For $\theta_{km}<90$, we know that an antenna element's maximum directional gain $G_{\mathsf{E,max}}$ must be equal to $1$ when the $3~\rm{dB}$ beamwidth is equal to $180~\rm{degrees}$. Therefore, we can derive $G_{\mathsf{E,max}}$ as
\begin{equation}    \label{g_max_proof}
    G_{\mathsf{E,max}}=\frac{(180)^2}{\theta_{\mathsf{3dB}}^2}=\frac{32400}{\theta_{\mathsf{3dB}}^2}.
\end{equation}
Finally, we can derive antenna $m$'s loss at user $k$'s location based on \cite{3GPP_haps} as \eqref{gamma}. This completes the proof.

\section{Proof of Proposition \ref{propos_selection_optimal}}\label{proof_selection_optimal}
  In order to prove the proposed Algorithm's optimality by contradiction, we assume a different element for user $k$ that has less gain than proposed in Algorithm 1. This means that $g_{km_{\mathsf{selected}}}<g_{km_{\mathsf{proposed}}}$. Now, we write the SINR expression derived in \eqref{SINR} in scalar form as
    \begin{equation}    \label{SINR_scalar}
    \begin{split}    
&\mathsf{SINR}_k=\frac{\frac{p_k\beta_k^2}{M_k}\mathrm{Tr}(\mathbf{G}_k\mathbf{A}_k)^2}{\beta_k^2\sum_{k'=1}^K \frac{p_{k'}}{M_{k'}}\mathrm{Tr}(\mathbf{G}_k^2\mathbf{A}_{k'})+\sigma^2}\\&=\frac{\frac{\beta_k^2p_{k}}{M_k}(\sum_{m=1}^M g_{km}a_{km})^2}{\beta_k^2\sum_{k'=1}^K\frac{p_{k'}}{M_{k'}}\sum_{m=1}^Mg_{km}^2a_{k'm}+\sigma^2}\\&=\frac{\frac{\beta_k^2p_{k}}{M_k}(\sum_{m=1}^M g_{km}a_{km})^2}{\beta_k^2\frac{p_{k}}{M_{k}}\sum_{m=1}^Mg_{km}^2a_{km}+\beta_k^2\sum_{k'=1,k'\neq k}^K\frac{p_{k'}}{M_{k'}}\sum_{m=1}^Mg_{km}^2a_{k'm}+\sigma^2}.
\end{split}
\end{equation}
When a different element is selected for users $k$, this may have either a larger or smaller gain at other users' locations. When a large number of elements is selected for each user, these effects tend to offset one another. Consequently, the second term in the denominator of \eqref{SINR_scalar} remains relatively constant, even with a different element choice than what is proposed in Algorithm 1. However, the numerator and the first term in the denominator of \eqref{SINR_scalar} decrease in value. Per Lemma 1, reducing the value of each antenna element leads to a decrease in the SINR. This suggests that $\mathsf{SINR}_{km_{\mathsf{selected}}} < \mathsf{SINR}_{km_{\mathsf{proposed}}}$.
As user $k$'s SINR decreases, the minimum SINR among all users also declines. Consequently, the objective function of ($\mathcal{P}$) decreases whenever an antenna element other than the one proposed in Algorithm 1 is chosen. Therefore, whenever there is a large number of selected elements involved, Algorithm 1 is an optimal solution. This completes the proof.

\section{Proof of Proposition \ref{propos_quasi}}\label{proof_quasi}
To prove the quasi-linearity of ($\mathcal{P}1$), we show that the objective function is quasi-linear and the constraints are linear sets. To prove quasi-linearity of the objective function, we just need to prove that its upper-level set (ULS) is a linear set \cite{boyd2020disciplined}. To this end, we show the objective function with $f(\mathbf{P})$. Then, for any $t\in \mathbb{R}_+$, the ULS of the objective function is given by

% \begin{figure*}
% \footnotesize
\begin{equation}\label{uls}
  \begin{split}
      &\mathsf{ULS}(f,t)=\{\mathbf{P}:f(\mathbf{P})>t\}\\&=\bigg\{P:\frac{\frac{p_k\beta_k^2}{M_k}\mathrm{Tr}(\mathbf{G}_k\mathbf{A}_k)^2}{\beta_k^2\sum_{k'=1}^K \frac{p_{k'}}{M_{k'}}\mathrm{Tr}(\mathbf{G}_k^2\mathbf{A}_{k'})+\sigma^2}>t,~\forall k \in \mathcal{K}\bigg\}\\&=\bigg\{P:\frac{p_k\beta_k^2}{M_k}\mathrm{Tr}(\mathbf{G}_k\mathbf{A}_k)^2>t\bigg(\beta_k^2\sum_{k'=1}^K \frac{p_{k'}}{M_{k'}}\mathrm{Tr}(\mathbf{G}_k^2\mathbf{A}_{k'})+\sigma^2\bigg),\\&~~~~\forall k\in \mathcal{K}\bigg\},
  \end{split}  
\end{equation}
% \normalsize
% \end{figure*}
which is in the form of an affine function exceeding an affine function of variable $\mathbf{P}$; hence, it is a linear set. Constraint (\ref{eq:constraint-sum_no_W}) is also linear. This completes the proof.

%\section{PROOF OF LEMMA \ref{con-log}}\label{proof-con-log}

% you can choose not to have a title for an appendix
% if you want by leaving the argument blank
%Proof of the Proposition 1

% use section* for acknowledgment
%\section*{Acknowledgment}

% Can use something like this to put references on a page
% by themselves when using endfloat and the captionsoff option.
\ifCLASSOPTIONcaptionsoff
  \newpage
\fi

% trigger a \newpage just before the given reference
% number - used to balance the columns on the last page
% adjust value as needed - may need to be readjusted if
% the document is modified later
%\IEEEtriggeratref{8}
% The "triggered" command can be changed if desired:
%\IEEEtriggercmd{\enlargethispage{-5in}}

% references section

% can use a bibliography generated by BibTeX as a .bbl file
% BibTeX documentation can be easily obtained at:
% http://mirror.ctan.org/biblio/bibtex/contrib/doc/
% The IEEEtran BibTeX style support page is at:
% http://www.michaelshell.org/tex/ieeetran/bibtex/
 \bibliographystyle{IEEEtran}
% % argument is your BibTeX string definitions and bibliography database(s)
 \bibliography{myref}
\end{document}